\newtheorem{theorem}{Theorem}
\newtheorem{lemma}[theorem]{Lemma}
\newtheorem{remark}{Remark}
\date{}
\begin{document}

	\def\spacingset#1{\renewcommand{\baselinestretch}%
		{#1}\small\normalsize} \spacingset{1}

	\title{\bf {Empirical likelihood inference for longitudinal data with covariate measurement errors:\\ An application to the LEAN study} }

	\author{Yuexia Zhang \\
		Department of Computer and Mathematical Sciences, University of Toronto\\
		Guoyou Qin\\
		Department of Biostatistics, Fudan University\\
		Zhongyi Zhu\\
		Department of Statistics, Fudan University\\
		and\\
		Jiajia Zhang\\
		Department of Epidemiology and Biostatistics, University of South Carolina
	}
	\maketitle
	
	\begin{abstract}
Measurement errors usually arise during the longitudinal data collection process. Ignoring the effects of measurement errors will lead to invalid estimates. The Lifestyle Education for Activity and Nutrition (LEAN) study was designed to assess the effectiveness of intervention for enhancing weight loss over nine months. The covariates systolic blood pressure (SBP) and diastolic blood pressure (DBP) were measured at baseline, month $4$, and month $9$. At each assessment time, there were two replicate measurements for SBP and DBP. The replicate measurement errors of SBP follow different distributions, as does DBP. To account for the distributional difference of replicate measurement errors, a new method for analyzing longitudinal data with replicate covariate measurement errors is developed based on the empirical likelihood method. The asymptotic properties of the proposed estimator are established under some regularity conditions. The confidence region for the parameters of interest can be constructed based on the chi-squared approximation without estimating the covariance matrix. Additionally, the proposed empirical likelihood estimator is asymptotically more efficient than the estimator of  \cite{Lin2018Analysis}. Extensive simulations demonstrate that the proposed method can eliminate the effects of measurement errors in the covariate and has a high estimation efficiency. The proposed method indicates the significant effect of the intervention on BMI in the LEAN study
		
	\end{abstract}
	
	\noindent%
	{\it Keywords:}   auxiliary random vector; distributional difference; efficiency; replicate measurement errors
	\vfill

\spacingset{1.5} 

\section{Introduction}
\label{s:intro}

Longitudinal data are commonly seen in various fields, such as psychology, economics, social sciences, and public health, and measurement errors usually arise during the data collection process. The Lifestyle Education for Activity and Nutrition (LEAN) study \citep{Barry2011Using} was designed to assess the effectiveness of  intervention for enhancing weight loss over nine months in sedentary overweight or obese adults. In this study, $197$  men and women between the ages of 18 and 64 who were underactive, overweight, or obese (BMI $\geq 25$), and had access to the internet were randomly assigned to the standard care group and the intervention group. For each participant, systolic blood pressure (SBP) and diastolic blood pressure (DBP) were measured at baseline, month $4$, and month $9$. As pointed out by \cite{qin2016simultaneous,Qin2016Robust} and  \cite{Lin2018Analysis}, there exist measurement errors in the covariates SBP and DBP.

We denote the surrogate values of SBP as ${\rm SBP}_{(1)}$ and ${\rm SBP}_{(2)}$, and the corresponding measurement errors as $\xi_{(1)}$ and $\xi_{(2)}$. If we further assume the additive measurement error models for ${\rm SBP}_{(1)}$ and ${\rm SBP}_{(2)}$, then ${\rm cSBP}_{(1)}\triangleq {\rm SBP}_{(1)}-({\rm SBP}_{(1)}+{\rm SBP}_{(2)})/2=(\xi_{(1)}-\xi_{(2)})/2$.  If $\xi_{(1)}$ and $\xi_{(2)}$ follow the same distribution, then the density function of ${\rm cSBP}_{(1)}$ is symmetric. Similarly, we denote one of the centralized surrogate values of DBP as ${\rm cDBP}_{(1)}$. Figure~\ref{figure:1} displays the density functions of ${\rm cSBP}_{(1)}$ and ${\rm cDBP}_{(1)}$, which illustrates that the density functions of ${\rm cSBP}_{(1)}$ and ${\rm cDBP}_{(1)}$ are not symmetric. We further find that the density functions of ${\rm cSBP}_{(1)}$ and ${\rm cDBP}_{(1)}$ are significantly asymmetric at the significance level of $0.05$ based on  the D'Agostino skewness test statistic  \citep{d1970transformation}. Therefore, the replicate measurement errors of SBP  follow different distributions, as does DBP. However, few existing methods  have accounted for this distributional difference in measurement errors. The main purpose of this paper is to develop a new longitudinal data analysis method which can account for the  distributional difference of replicate measurement errors.

\begin{figure}
	\centerline{\includegraphics[width=35em]{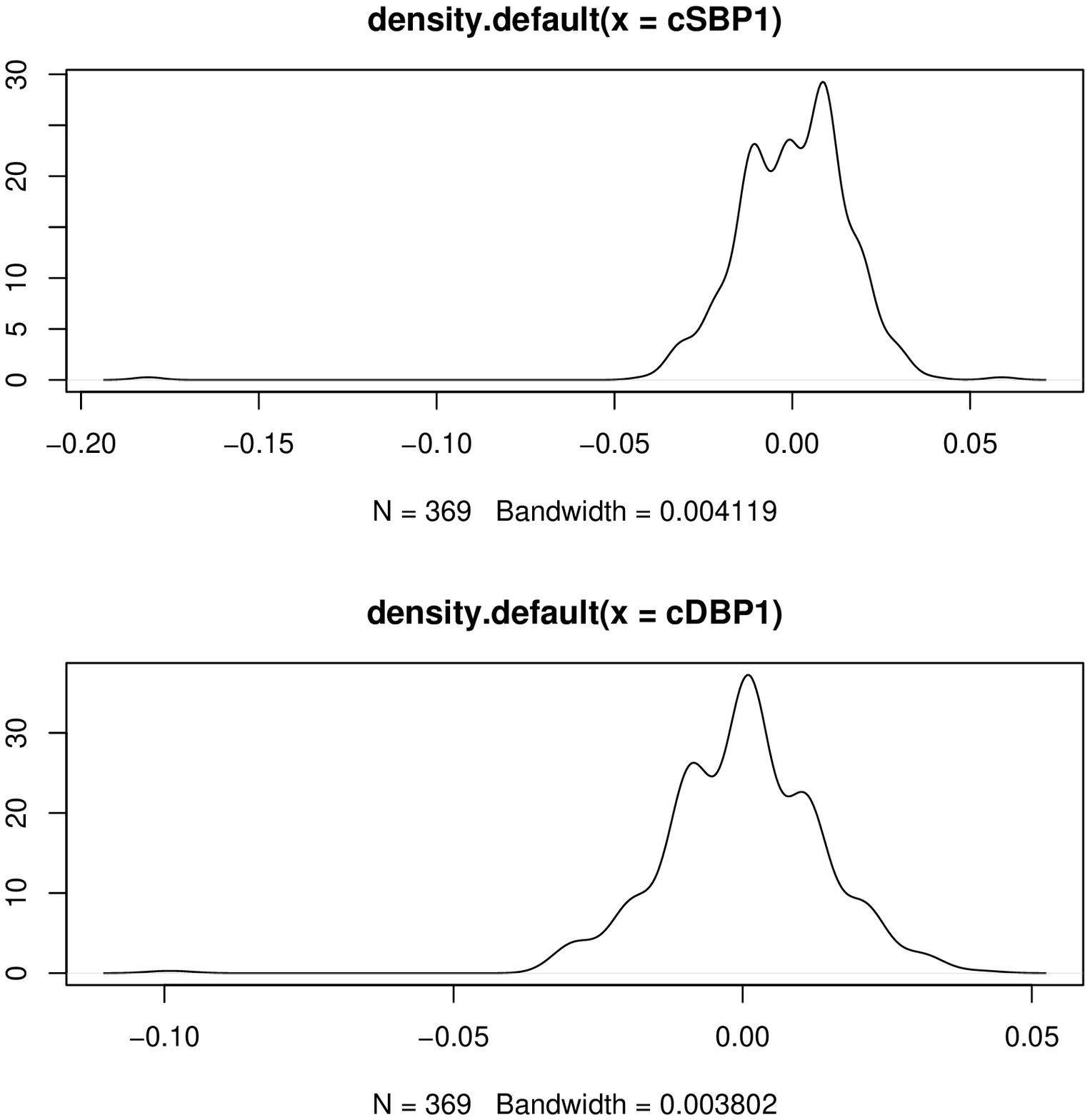}}
	\caption{Density functions of ${\rm cSBP}_{(1)}$ and ${\rm cDBP}_{(1)}$.
		\label{figure:1}}
\end{figure}	

The likelihood-based method and estimating equation method are the most popular methods in longitudinal data analysis \citep{laird1982random,Liang1986Longitudinal,diggle2002analysis,zhang2015joint,cheng2016efficient,funatogawa2018longitudinal}. The likelihood-based method is generally efficient but sensitive to the distribution misspecification, because it always assumes the joint distribution of repeated observations for each subject and applies the maximum likelihood estimation (MLE) method or restricted maximum likelihood estimation (REML) method to estimate. The estimating equation method avoids making assumptions for the multivariate distribution by specifying the first two moments of response. However, the estimating equation method cannot deal with the problem where the number of estimating functions is larger than the number of parameters. The empirical likelihood method \citep{owen1988empirical}, a combination of the likelihood-based method and estimating equation method, has attracted much attention recently
\citep{wang2010generalized,qiu2015moving,zhao2019new,hu2022efficient}.
The empirical likelihood method is nonparametric, distribution-free, and also enjoys some good properties of the parametric likelihood method. For example, the empirical likelihood ratio statistic  asymptotically follows a chi-squared distribution \citep{owen1990empirical,owen2001empirical}.
At the same time, it can deal with the problem where there are more estimating functions than parameters. Besides, it can combine the information in the estimating functions in a most efficient way \citep{Qin1994Empirical}.

Although there is  considerable literature on how to deal with measurement errors  using the likelihood-based method and estimating equation method \citep{wulfsohn1997joint, Wang2000Expected, Wu2002A, hsieh2006joint, Wang2006Corrected, qin2016simultaneous, li2019semiparametric},  the empirical likelihood method is not widely used in analysing longitudinal data with measurement errors. \cite{zhao2009empirical} investigated the empirical likelihood inference for semiparametric varying-coefficient partially linear error-in-variables models, where they applied the correction for attenuation technique to construct a bias-corrected auxiliary random vector. However, their method needs to make some assumptions about the covariance matrix of measurement errors, which may not be satisfied in practice.

For the LEAN study,  \cite{Lin2018Analysis} constructed an unbiased estimating equation using the independence between replicate measurement errors to eliminate the effects of measurement errors. Although their method is asymptotically more efficient than the method of \cite{qin2016simultaneous}, their method may lose some efficiency if the distributions of replicate measurement errors are different.
Therefore,  it is important to consider the distributional difference of measurement  errors  to get a more efficient estimator. In this paper, we propose a new empirical likelihood-based method. The proposed  estimator is  asymptotically more efficient than the estimator of \cite{Lin2018Analysis}. In addition, the proposed method can deal with the problem where there are more than two replicate measurements at each assessment time.

The remainder of this paper is organized as follows. In Section~\ref{s:model}, we introduce the mean model and measurement error process. The proposed empirical likelihood-based  method is outlined in  Section~\ref{s:proposedmethod} and some asymptotic properties are established in Section~\ref{s:asympro}. We assess the performance of the proposed method with simulation studies in Section~\ref{s:simu} and apply the proposed method to the LEAN data set in Section~\ref{s:realdata}. The paper is concluded with a discussion in Section~\ref{s:discussion}. The detailed proofs are given in the appendices. The R codes for  simulation studies are available on the RunMyCode website.

\section{Model specification}
\label{s:model}

\subsection{Mean model}
\label{ss:meanmodel}
In this paper, we consider a longitudinal  study  with $n$ subjects and  $m_{i}$ observations over time  for the $i$th  subject. Let $Y_{ij}$ be  the response  variable and $\bm{X}_{ij}=(X_{ij,1},\cdots,X_{ij,p})^\top$ be the vector of covariates, where $i=1,\cdots,n$, $j=1,\cdots,m_{i}$, and $\{m_{i}, i=1,\cdots,n\}$ are bounded positive integers.
Assume the  longitudinal data set follows the  linear regression model, i.e.,
\begin{equation}
	Y_{ij}=\bm{X}_{ij}^\top\bm{\beta}_{0}+\varepsilon_{ij},\quad i=1,\cdots,n, \quad j=1,\cdots,m_{i},
	\label{eq:lrm}
\end{equation}
where   $\bm{\beta}_{0}=(\beta_{01},\cdots,\beta_{0p})^\top$ is a $p$-dimensional  unknown  vector and $\varepsilon_{ij}$ is the random error term.  In matrix form,  we denote  $\bm{Y}_{i}=(Y_{i1},\cdots,Y_{im_{i}})^\top$, $\mathbf{X}_{i}=(\bm{X}_{i1},\cdots,\bm{X}_{im_{i}})^\top$, and $\bm{\varepsilon}_{i}=(\varepsilon_{i1},\cdots,\varepsilon_{im_{i}})^\top$, where $\{\bm{\varepsilon}_{i},i=1,\ldots,n\}$ are mutually independent with  $\mathrm{E}(\bm{\varepsilon}_{i}|\mathbf{X}_{i})=\bm{0}$ and  covariance matrix $\bm{\Sigma}_{i}$ for each $i\in\{1,\ldots,n\}$.

\subsection{Measurement error process}
\label{ss:measuremodel}
Let $\bm{X}_{ij}$ denote the covariate vector measured with error, and $\bm{W}_{ij}$ denote its observed version. Then, we assume that $\bm{X}_{ij}$ and $\bm{W}_{ij}$ follow a classical additive measurement error model, i.e.,
\begin{equation*}
	\bm{W}_{ij}=\bm{X}_{ij}+\bm{\xi}_{ij},
\end{equation*}
where $\bm{\xi}_{ij}$ is the measurement error with mean zero, and $\bm{\xi}_{ij}$ is independent of $\mathbf{X}_{i}$ and $\bm{\varepsilon}_{i}$.

In practice, replicate measurements of $\bm{X}_{ij}$ are often conducted  to get more reliable results. We assume that there exist $K$ $(K\geq 2)$ replicate measurements for the error-prone covariate $\bm{X}_{ij}$, i.e.,
\[
\bm{W}_{ij(k)}=\bm{X}_{ij}+\bm{\xi}_{ij(k)}, \quad k=1,\cdots,K,
\]
where $\{\bm{\xi}_{ij(k)}, k=1,\cdots,K\}$ are mutually independent and the distributions of  measurement errors $\{\bm{\xi}_{ij(k)}, k=1,\cdots,K\}$ can be different.
For  convenience, we denote $\mathbf{W}_{i(k)}=(\bm{W}_{i1(k)},\cdots,\bm{W}_{im_{i}(k)})^\top$, $\bm{W}_{ij(k)}=(W_{ij(k),1},\cdots,W_{ij(k),p})^\top$,  and $\bm{\xi}_{i(k)}=(\bm{\xi}_{i1(k)},\cdots,\bm{\xi}_{im_{i}(k)})^\top$ for $k=1,\cdots,K$.

\section{Proposed method}
\label{s:proposedmethod}
When there are two replicate measurements for $\bm{X}_{ij}$,  \cite{Lin2018Analysis} proposed the following estimating equation for estimation of $\bm{\beta}_{0}$
\[
\sum_{i=1}^{n}\left\{\mathbf{W}_{i(1)}^\top\bm{\Sigma}_{i}^{-1}(\bm{Y}_{i}-\mathbf{W}_{i(2)}\bm{\beta})\!+\!\mathbf{W}_{i(2)}^\top\bm{\Sigma}_{i}^{-1}(\bm{Y}_{i}-\mathbf{W}_{i(1)}\bm{\beta})\right\}=\bm{0}.
\]
When there are more than two measurements, we can extend \cite{Lin2018Analysis}'s method directly and use the following estimating equation 
\begin{equation}
\sum_{i=1}^{n}\sum_{k_{1}\neq k_{2}}\mathbf{W}_{i(k_{1})}^\top\bm{\Sigma}_{i}^{-1}(\bm{Y}_{i}-\mathbf{W}_{i(k_{2})}\bm{\beta})=\bm{0},
\label{eq:multiplemeasure}
\end{equation}
where $k_{1},k_{2}\in\{1,\ldots,K\}$, and $K>2$. Denote the solution to (\ref{eq:multiplemeasure}) as $\hat{\bm{\beta}}_{LIN}$.

However, the estimating equation  \eqref{eq:multiplemeasure} does not consider the heterogeneity of  different measurement errors, because it gives the same weight to all the estimating functions  $\sum_{i=1}^{n}\mathbf{W}_{i(k_{1})}^\top\bm{\Sigma}_{i}^{-1}(\bm{Y}_{i}-\mathbf{W}_{i(k_{2})}\bm{\beta})$, where $k_{1}\neq k_{2}$. As a result, $\hat{\bm{\beta}}_{LIN}$ may not be highly efficient. To improve the estimation efficiency, we propose  to estimate  $\bm{\beta}_{0}$ based on the empirical likelihood method.

First, we introduce an auxiliary random vector  as follows
\begin{equation}
	\begin{aligned}
		\bm{g}_{i}(\bm{\beta})&=\begin{pmatrix}\mathbf{W}_{i(1)}^\top\bm{\Sigma}_{i}^{-1}(\bm{Y}_{i}-\mathbf{W}_{i(2)}\bm{\beta}) \\ \mathbf{W}_{i(2)}^\top\bm{\Sigma}_{i}^{-1}(\bm{Y}_{i}-\mathbf{W}_{i(1)}\bm{\beta}) \\ \vdots\\ \mathbf{W}_{i(K-1)}^\top\bm{\Sigma}_{i}^{-1}(\bm{Y}_{i}-\mathbf{W}_{i(K)}\bm{\beta}) \\  \mathbf{W}_{i(K)}^\top\bm{\Sigma}_{i}^{-1}(\bm{Y}_{i}-\mathbf{W}_{i(K-1)}\bm{\beta}) \end{pmatrix}.
		\label{eq:auxivector}
	\end{aligned}
\end{equation}
Because of the independence between replicate measurement errors, the auxiliary random vector has expectation zero if $\bm{\beta}=\bm{\beta}_{0}$. Thus, the effects of measurement errors can be eliminated. However, the elements in  $\bm{g}_{i}(\bm{\beta})$ are not functionally independent in all situations. As illustrated in  a toy example in \ref{ss:construction}, there exist some duplicate elements in $\bm{g}_{i}(\bm{\beta})$. Besides, there is an inner relationship among the elements of $\bm{g}_{i}(\bm{\beta})$. Both factors make the matrix  $\mathrm{E}\{\bm{g}_{i}(\bm{\beta}_{0})\bm{g}_{i}(\bm{\beta}_{0})^\top\}$ not positive definite. However, positive definiteness of  the matrix  $\mathrm{E}\{\bm{g}_{i}(\bm{\beta}_{0})\bm{g}_{i}(\bm{\beta}_{0})^\top\}$ is one of the necessary conditions for the asymptotic normality of the empirical likelihood estimator, as shown in Section~\ref{s:asympro}. Therefore, we need to eliminate the elements   which are functionally dependent or have inner relationships with other elements in $\bm{g}_{i}(\bm{\beta})$ from  $\bm{g}_{i}(\bm{\beta})$. Denote the reduced auxiliary  random vector as $\bm{g}_{i}^{*}(\bm{\beta})$ and assume the dimension of  $\bm{g}_{i}^{*}(\bm{\beta})$ is $q$, where $\bm{g}_{i}^{*}(\bm{\beta})$ satisfies the condition that $\mathrm{E}\{\bm{g}_{i}^{*}(\bm{\beta}_{0})\bm{g}_{i}^{*}(\bm{\beta}_{0})^\top\}$ is a positive definite  matrix. We illustrate how to obtain the reduced auxiliary random vector  by using the toy example, which is provided in \ref{ss:construction}. In general,  the reduced  auxiliary random vector $\bm{g}_{i}^{*}(\bm{\beta})$ can be obtained
based on Algorithm \ref{algorithm1}.

 \begin{algorithm}
	\caption{The proposed procedure for obtaining  the reduced  auxiliary random vector $\bm{g}_{i}^{*}(\bm{\beta})$}
	\label{algorithm1}
	\begin{algorithmic}
		\item [1.] Write the complete formula of $\bm{g}_{i}(\bm{\beta})$ based on \eqref{eq:auxivector}.
		\item [2.] Check whether there are some duplicate elements in $\bm{g}_{i}(\bm{\beta})$. If there are some  duplicate elements, then keep the unique elements and eliminate the duplicate elements from $\bm{g}_{i}(\bm{\beta})$. Denote the reduced random vector as $\tilde{\bm{g}}_{i}(\bm{\beta})$; if there is no duplicate element, then let $\tilde{\bm{g}}_{i}(\bm{\beta})=\bm{g}_{i}(\bm{\beta})$.
		\item [3.] Write the complete formula of $\mathrm{E}\{\tilde{\bm{g}}_{i}(\bm{\beta}_{0})\tilde{\bm{g}}_{i}(\bm{\beta}_{0})^\top\}$ based on  model assumptions.
		\item [4.] Check whether there are some elements in $\mathrm{E}\{\tilde{\bm{g}}_{i}(\bm{\beta}_{0})\tilde{\bm{g}}_{i}(\bm{\beta}_{0})^\top\}$ that can be  represented as a linear function of  other elements in $\mathrm{E}\{\tilde{\bm{g}}_{i}(\bm{\beta}_{0})\tilde{\bm{g}}_{i}(\bm{\beta}_{0})^\top\}$. If it is true, then eliminate the corresponding elements in $\tilde{\bm{g}}_{i}(\bm{\beta})$ from $\tilde{\bm{g}}_{i}(\bm{\beta})$. Denote the reduced auxiliary random vector as $\bm{g}_{i}^{*}(\bm{\beta})$; if it is false, then let $\bm{g}_{i}^{*}(\bm{\beta})=\tilde{\bm{g}}_{i}(\bm{\beta})$.		
	\end{algorithmic}
\end{algorithm}

Second, following the standard procedure for the empirical likelihood method, we define the profile empirical likelihood ratio function as
\begin{equation}
	R(\bm{\beta})=\max\left\{\prod_{i=1}^{n}(n\pi_{i})\Bigm|\pi_{i} \geq 0, \sum_{i=1}^{n}\pi_{i}=1, \sum_{i=1}^{n}\pi_{i}\bm{g}_{i}^{*}(\bm{\beta})=\bm{0} \right\}.
	\label{eq:profileratio}
\end{equation}
Using the Lagrange multiplier method, $R(\bm{\beta})$  is maximized at
\begin{equation}
	\pi_{i}=\frac{1}{n\big\{1+\bm{\lambda}^\top\bm{g}_{i}^{*}(\bm{\beta})\big\}}, \quad i=1,\cdots,n,
	\label{eq:weight}
\end{equation}
where the Lagrange multiplier $\bm{\lambda}=(\lambda_{1},\cdots,\lambda_{q})^\top$ satisfies the following condition
\begin{equation}
	\frac{1}{n}\sum_{i=1}^{n}\frac{\bm{g}_{i}^{*}(\bm{\beta})}{1+\bm{\lambda}^\top\bm{g}_{i}^{*}(\bm{\beta})}=\bm{0}.
	\label{eq:constraint}
\end{equation}
Based on~(\ref{eq:profileratio}) and~(\ref{eq:weight}), we have
\begin{equation}
	-2\log R(\bm{\beta})=-2\log\left[\prod_{i=1}^{n}\big\{1+\bm{\lambda}^\top\bm{g}_{i}^{*}(\bm{\beta})\big\}^{-1}\right]=2\sum_{i=1}^{n}\log\big\{1+\bm{\lambda}^\top\bm{g}_{i}^{*}(\bm{\beta})\big\}.
	\label{eq:object}
\end{equation}
The maximum empirical likelihood estimator (MELE) of $\bm{\beta}_{0}$, $\hat{\bm{\beta}}$, can be obtained by maximizing $R(\bm{\beta})$ or minimizing $-2\log R(\bm{\beta})$ under the constraint~(\ref{eq:constraint}).

In general, we can estimate $\bm{\beta}_{0}$ based on Algorithm \ref{algorithm2}. 

 \begin{algorithm}
	\caption{The proposed procedure for  estimating $\bm{\beta}_{0}$}
	\label{algorithm2}
	\begin{algorithmic}
	\item [1.]  Choose an initial value ${\bm{\beta} }^{\left(0\right)}$, which can be obtained by using the method in  \cite{Lin2018Analysis} with a working independence correlation matrix. Set $k=0$.
\item [2.] With the value of ${\bm{\beta} }^{\left(k\right)}$, estimate the covariance matrix $\bm{\Sigma}_{i}$ by using the same method as that in  \cite{Qin2016Robust}. Denote the estimated value of  covariance matrix as $\hat{\bm{\Sigma}}_{i}^{(k)}$.
\item [3.] Construct the reduced auxiliary random vector $\bm{g}_{i}^{*}({\bm{\beta} }^{\left(k\right)},\hat{\bm{\Sigma}}_{i}^{(k)})$ based on Algorithm \ref{algorithm1}, and solve equation~(\ref{eq:constraint}) to obtain $\bm{\lambda}^{\left(k\right)}$ by using the modified Newton-Raphson method. Take the value of $\bm{\lambda}^{\left(k\right)}$ into the objective function~(\ref{eq:object}) and obtain the value of $-2\log R({\bm{\beta} }^{\left(k\right)})$. Then  calculate the new estimated value  ${\bm{\beta} }^{\left(k+1\right)}$  based on an optimization method (e.g., the Broyden-Fletcher-Goldfarb-Shanno (BFGS) algorithm).  Set $k=k+1$.
\item [4.] Iterate Step 2 and Step 3 until convergence. The final estimated value of $\bm{\beta}_{0}$ is denoted as  $\hat{\bm{\beta}}$.
	\end{algorithmic}
\end{algorithm}

\section{Asymptotic properties}
\label{s:asympro}
This section shows the asymptotic properties of the proposed estimator.
Specially, Theorem~\ref{theo:normality} presents the asymptotic normality of the proposed estimator and Theorem~\ref{theo:efficiency} shows that the proposed estimator is asymptotically more efficient than the estimator of \cite{Lin2018Analysis}. Theorems~\ref{theo:chisquaretrue},~\ref{theorem:fulltest} and~\ref{theorem:profiletest} show the properties of statistics, which are obtained from the empirical likelihood ratio function.  To establish the  asymptotic properties, we introduce the following regularity conditions:

\begin{enumerate}
	\item[(R.1)]  The number of replicate measurements for the error-prone covariate $\bm{X}_{ij}$, $K$, is a bounded positive integer.
	\item[(R.2)]   The regression parameter $\bm{\beta}_0$ is identifiable, i.e., there is a unique $\bm{\beta}_{0}\in \mathscr{B}$ satisfying the model assumption~(\ref{eq:lrm}) which guarantees $\mathrm{E}\{\bm{g}_{i}^{*}(\bm{\beta}_{0})\}=\bm{0}$, where  $\mathscr{B}$ is a compact parameter space.
	\item[(R.3)] There exist two positive constants $c_{1}$ and $c_{2}$ such that
	\[
	0<c_{1}\leq \min_{1\leq i\leq n}\eta_{i1}\leq \max_{1\leq i\leq n}\eta_{im_{i}}\leq c_{2}<\infty,
	\]
	where $\eta_{i1}$ and $\eta_{im_{i}}$ denote the smallest and largest eigenvalues of $\bm{\Sigma}_{i}$, respectively.
	\item[(R.4)] $\max_{1\leq i\leq n}\mathrm{E}\|\mathbf{X}_{i}\|^{6}<\infty$, $\max_{1\leq i\leq n,1\leq k\leq K}\mathrm{E}\|\bm{\xi}_{i(k)}\|^{3}<\infty$, and\\ $\max_{1\leq i\leq n}\sup_{\mathbf{x}}\mathrm{E}(\|\bm{\varepsilon}_{i}\|^{3}|\mathbf{X}_{i}=\mathbf{x})<\infty$, where $\|\cdot\|$ denotes the Euclidean norm.
	\item[(R.5)]  $\mathbf{L}_{n}/n\rightarrow \mathbf{L}$ in probability for some   matrix $\mathbf{L}$ and $\mathbf{M}_{n}/n\rightarrow \mathbf{M}$ in probability for some positive definite matrix  $\mathbf{M}$, where $\mathbf{L}_{n}=\sum_{i=1}^{n}\partial \bm{g}_{i}^{*}(\bm{\beta}_{0})/\partial \bm{\beta}^\top$  and $\mathbf{M}_{n}=\sum_{i=1}^{n}\bm{g}_{i}^{*}(\bm{\beta}_{0})\bm{g}_{i}^{*}(\bm{\beta}_{0})^\top$.	
\end{enumerate}
\begin{remark}
Condition (R.1) requires the number of replicate measurements for $\bm{X}_{ij}$ to be bounded, it can ensure $\sum_{i=1}^{n}\mathrm{E}\|\bm{g}_{i}^{*}(\bm{\beta}_{0})/\sqrt{n}\|^{3}\rightarrow 0$. This condition is easy to verify in practice.  Condition (R.2) assumes the identifiability of the true parameter $\bm{\beta}_0$. It is not easy to check in practice, but it is a commonly used condition in empirical likelihood literature, see \cite{owen2001empirical}. Condition (R.3) requires the eigenvalues of  covariance matrices $\bm{\Sigma}_{i}$, $i=1,\ldots,n$ to be  bounded away from $0$ and $\infty$. If $\bm{\Sigma}_{i}$ $(i=1,\ldots,n)$  are known, Condition (R.3)  can be checked directly; if $\bm{\Sigma}_{i}$ $(i=1,\ldots,n)$  are unknown, we can first use some methods to estimate them, such as that in  \cite{Qin2016Robust}. Then we can check whether the sample covariance matrices satisfy  Condition (R.3)  or not. Sometimes, we assume the covariance matrices $\bm{\Sigma}_{i}$  $(i=1,\ldots,n)$  satisfy some specific structures, such as independent structure, exchangeable structure, or AR(1) structure, then Condition (R.3) can be satisfied naturally.  Condition (R.4) contains the moment conditions for the covariate, measurement error, and random error, which can be met under some common distributions, such as normal distribution and exponential distribution. Condition (R.5) assumes the convergence of $\mathbf{L}_{n}/n$ and $\mathbf{M}_{n}/n$  in probability. If the distribution of variables and true models are known, then Conditions (R.4)--(R.5) can be checked; otherwise, it is not easy to check. However, the conditions which are similar to (R.4)--(R.5) can be found be in many references, such as \cite{xue2007empiricallike} and \cite{zhang2019novel}.
\end{remark}
\begin{theorem}
	\label{theo:normality}
	Assuming that conditions (R.1)--(R.5) hold,  we have 
	\[
	\sqrt{n}(\hat{\bm{\beta}}-\bm{\beta}_{0})\rightsquigarrow \mathcal{N}\big(\bm{0}, (\mathbf{L}^\top\mathbf{M}^{-1}\mathbf{L})^{-1}\big).
	\]
\end{theorem}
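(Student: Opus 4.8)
The plan is to treat this as an instance of the Qin--Lawless (1994) theory of empirical likelihood for over-identified estimating equations, adapted to the present independent-but-not-identically-distributed setting. The estimator $\hat{\bm{\beta}}$ and its associated multiplier $\hat{\bm{\lambda}}$ are characterized jointly as the root of the score system obtained by setting to zero the gradient of $-2\log R(\bm{\beta})$ in \eqref{eq:object} with respect to both $\bm{\beta}$ and $\bm{\lambda}$. Writing
\[
\bm{Q}_{1n}(\bm{\beta},\bm{\lambda})=\frac{1}{n}\sum_{i=1}^{n}\frac{\bm{g}_{i}^{*}(\bm{\beta})}{1+\bm{\lambda}^\top\bm{g}_{i}^{*}(\bm{\beta})}, \qquad \bm{Q}_{2n}(\bm{\beta},\bm{\lambda})=\frac{1}{n}\sum_{i=1}^{n}\frac{1}{1+\bm{\lambda}^\top\bm{g}_{i}^{*}(\bm{\beta})}\Big(\frac{\partial\bm{g}_{i}^{*}(\bm{\beta})}{\partial\bm{\beta}^\top}\Big)^\top\bm{\lambda},
\]
the pair $(\hat{\bm{\beta}},\hat{\bm{\lambda}})$ solves $\bm{Q}_{1n}=\bm{0}$ (which is exactly the constraint \eqref{eq:constraint}) together with $\bm{Q}_{2n}=\bm{0}$. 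A convenient simplification here is that each $\bm{g}_{i}^{*}(\bm{\beta})$ is affine in $\bm{\beta}$, so $\partial\bm{g}_{i}^{*}/\partial\bm{\beta}^\top$ is free of $\bm{\beta}$ and equals the (fixed) summand of $\mathbf{L}_{n}$.

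First I would establish the rate of convergence of the multiplier at the truth. Following Owen's argument, using (R.3)--(R.4) to bound the relevant moments, the positive definiteness of $\mathbf{M}$ in (R.5), and the fact that (R.1) together with (R.4) yields $\max_{1\leq i\leq n}\|\bm{g}_{i}^{*}(\bm{\beta}_{0})\|=o_{p}(\sqrt{n})$, I would show that $\bm{Q}_{1n}(\bm{\beta}_{0},\cdot)=\bm{0}$ admits a root with $\|\bm{\lambda}(\bm{\beta}_{0})\|=O_{p}(n^{-1/2})$. Consistency of $(\hat{\bm{\beta}},\hat{\bm{\lambda}})$ for $(\bm{\beta}_{0},\bm{0})$ then follows from identifiability (R.2) and compactness of $\mathscr{B}$.

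The core step is a joint Taylor expansion of $\bm{Q}_{1n}$ and $\bm{Q}_{2n}$ about $(\bm{\beta}_{0},\bm{0})$. By (R.5) and a law of large numbers for the heterogeneous summands, the leading partial derivatives satisfy $\partial\bm{Q}_{1n}/\partial\bm{\beta}^\top\to\mathbf{L}$, $\partial\bm{Q}_{1n}/\partial\bm{\lambda}^\top\to-\mathbf{M}$, $\partial\bm{Q}_{2n}/\partial\bm{\lambda}^\top\to\mathbf{L}^\top$, and $\partial\bm{Q}_{2n}/\partial\bm{\beta}^\top\to\bm{0}$ (the last because $\bm{Q}_{2n}=O(\bm{\lambda})$), with the remainder controlled to be $o_{p}(n^{-1/2})$ via the $O_{p}(n^{-1/2})$ rates. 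Since $\bm{Q}_{2n}(\bm{\beta}_{0},\bm{0})=\bm{0}$ and $\bm{Q}_{1n}(\bm{\beta}_{0},\bm{0})=n^{-1}\sum_{i}\bm{g}_{i}^{*}(\bm{\beta}_{0})$, solving the resulting block-linear system gives the representation
\[
\sqrt{n}(\hat{\bm{\beta}}-\bm{\beta}_{0})=-(\mathbf{L}^\top\mathbf{M}^{-1}\mathbf{L})^{-1}\mathbf{L}^\top\mathbf{M}^{-1}\,\frac{1}{\sqrt{n}}\sum_{i=1}^{n}\bm{g}_{i}^{*}(\bm{\beta}_{0})+o_{p}(1).
\]

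Finally I would apply a Lindeberg--Feller central limit theorem to the triangular array $\{\bm{g}_{i}^{*}(\bm{\beta}_{0})/\sqrt{n}\}$: the subjects are independent, the variance sum converges to $\mathbf{M}$ by (R.5), and the Lyapunov condition $\sum_{i=1}^{n}\mathrm{E}\|\bm{g}_{i}^{*}(\bm{\beta}_{0})/\sqrt{n}\|^{3}\to0$ holds by (R.1) and (R.4) as noted in the Remark, so $n^{-1/2}\sum_{i}\bm{g}_{i}^{*}(\bm{\beta}_{0})\rightsquigarrow\mathcal{N}(\bm{0},\mathbf{M})$. Substituting into the representation and simplifying the sandwich form via $(\mathbf{L}^\top\mathbf{M}^{-1}\mathbf{L})^{-1}\mathbf{L}^\top\mathbf{M}^{-1}\,\mathbf{M}\,\mathbf{M}^{-1}\mathbf{L}(\mathbf{L}^\top\mathbf{M}^{-1}\mathbf{L})^{-1}=(\mathbf{L}^\top\mathbf{M}^{-1}\mathbf{L})^{-1}$ yields the claim by Slutsky's theorem. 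I expect the main obstacle to be the non-i.i.d.\ structure: because $m_{i}$ and $\bm{\Sigma}_{i}$ vary across subjects, the standard i.i.d.\ arguments for the $O_{p}(n^{-1/2})$ multiplier rate and for uniform control of the Taylor remainder must be replaced by heterogeneous-array versions, which is precisely where the moment bounds (R.3)--(R.4) and the convergence assumption (R.5) do the work.
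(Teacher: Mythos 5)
Your proposal is correct and takes essentially the same route as the paper's own proof: the Qin--Lawless score system $\bm{Q}_{1n},\bm{Q}_{2n}$ (identical to the paper's definitions), the multiplier-rate lemma $\|\bm{\lambda}(\bm{\beta}_{0})\|=O_{p}(n^{-1/2})$ obtained via $\max_{1\leq i\leq n}\|\bm{g}_{i}^{*}(\bm{\beta}_{0})\|=o_{p}(n^{1/2})$, the joint Taylor expansion at $(\bm{\beta}_{0},\bm{0})$ with limiting block matrix $\bigl(\begin{smallmatrix}-\mathbf{M} & \mathbf{L}\\ \mathbf{L}^\top & \bm{0}\end{smallmatrix}\bigr)$, and a Lyapunov central limit theorem for the independent, non-identically-distributed array, exactly as in the paper's Lemmas \ref{lemma:A1} and \ref{lemma:A2}. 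The only small divergences are that the paper establishes existence of a consistent root not by appealing to compactness of $\mathscr{B}$ but by the Qin--Lawless $n^{-1/3}$-ball argument (showing $l_E(\bm{\beta})\geq (c-\varepsilon)n^{1/3}$ on the boundary of the ball while $l_E(\bm{\beta}_{0})=O(\log\log n)$), and that your influence representation $\sqrt{n}(\hat{\bm{\beta}}-\bm{\beta}_{0})=-(\mathbf{L}^\top\mathbf{M}^{-1}\mathbf{L})^{-1}\mathbf{L}^\top\mathbf{M}^{-1}n^{-1/2}\sum_{i=1}^{n}\bm{g}_{i}^{*}(\bm{\beta}_{0})+o_{p}(1)$ is in fact the corrected form of the paper's display, which contains a typographical slip ($\mathbf{L}^\top\mathbf{M}\mathbf{L}$ in place of $\mathbf{L}^\top\mathbf{M}^{-1}\mathbf{L}$, and an omitted sign that is immaterial to the limit).
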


When $\bm{g}_{i}^{*}(\bm{\beta})=\bm{g}_{i}(\bm{\beta})$, according to Theorem 3.6 in \cite{owen2001empirical}, the asymptotic variance of the empirical likelihood estimator $\hat{\bm{\beta}}$ is at least as small as that of $\hat{\bm{\beta}}_{LIN}$, because $\sum_{k_{1}\neq k_{2}}\mathbf{W}_{i(k_{1})}^\top\bm{\Sigma}_{i}^{-1}(\bm{Y}_{i}-\mathbf{W}_{i(k_{2})}\bm{\beta})$ is a linear combination of  $\bm{g}_{i}(\bm{\beta})$. When $\bm{g}_{i}^{*}(\bm{\beta})\neq\bm{g}_{i}(\bm{\beta})$, the above conclusion still holds because (1) the asymptotic variance of $\hat{\bm{\beta}}_{LIN}$ is equal to the  asymptotic variance of one certain estimator obtained from the estimating equation $\sum_{i=1}^{n}\mathbf{B}\bm{g}_{i}^{*}(\bm{\beta})=\bm{0}$, where $\mathbf{B}$ is a special $p\times q$ matrix and (2) the asymptotic variance of the empirical likelihood estimator $\hat{\bm{\beta}}$ is at least as small as that of any estimator obtained from the estimating equation $\sum_{i=1}^{n}\mathbf{C}\bm{g}_{i}^{*}(\bm{\beta})=\bm{0}$, where $\mathbf{C}$ is an arbitrary $p\times q$ matrix.  The proof of this conclusion in the toy example is given in \ref{ss:efficiency}, which can be easily extended to other cases. We summarize the conclusion in the following theorem.

\begin{theorem}
	\label{theo:efficiency}
	The asymptotic variance of the empirical likelihood estimator $\hat{\bm{\beta}}$ is at least as small as that of $\hat{\bm{\beta}}_{LIN}$. In other words, the  empirical likelihood estimator $\hat{\bm{\beta}}$ is at least as efficient as $\hat{\bm{\beta}}_{LIN}$.
\end{theorem}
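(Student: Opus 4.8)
The plan is to exhibit $\hat{\bm{\beta}}_{LIN}$ as one member of the family of just-identified estimating-equation estimators built from the reduced auxiliary vector $\bm{g}_i^{*}(\bm{\beta})$, and then to invoke the GMM-type optimality of empirical likelihood within that family, using the asymptotic variance $(\mathbf{L}^\top\mathbf{M}^{-1}\mathbf{L})^{-1}$ already identified in Theorem~\ref{theo:normality}. Concretely, for an arbitrary $p\times q$ weight matrix $\mathbf{C}$ with $\mathbf{C}\mathbf{L}$ nonsingular, let $\hat{\bm{\beta}}_{\mathbf{C}}$ solve $\sum_{i=1}^{n}\mathbf{C}\,\bm{g}_i^{*}(\bm{\beta})=\bm{0}$. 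A first-order Taylor expansion of this equation about $\bm{\beta}_0$, together with the central limit theorem for $n^{-1/2}\sum_i\bm{g}_i^{*}(\bm{\beta}_0)$ (subjects are independent and $\mathrm{E}\{\bm{g}_i^{*}(\bm{\beta}_0)\}=\bm{0}$ by (R.2)) and the convergences in (R.5), yields the sandwich form
\begin{equation*}
\sqrt{n}\bigl(\hat{\bm{\beta}}_{\mathbf{C}}-\bm{\beta}_0\bigr)\rightsquigarrow \mathcal{N}\!\bigl(\bm{0},\,(\mathbf{C}\mathbf{L})^{-1}\mathbf{C}\mathbf{M}\mathbf{C}^\top(\mathbf{C}\mathbf{L})^{-\top}\bigr).
\end{equation*}

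The core algebraic step is the Loewner-order inequality
\begin{equation*}
(\mathbf{C}\mathbf{L})^{-1}\mathbf{C}\mathbf{M}\mathbf{C}^\top(\mathbf{C}\mathbf{L})^{-\top}\;\succeq\;(\mathbf{L}^\top\mathbf{M}^{-1}\mathbf{L})^{-1},
\end{equation*}
valid for every admissible $\mathbf{C}$. Since $\mathbf{M}$ is positive definite by construction of $\bm{g}_i^{*}$, I would take inverses (which reverses the order) to reduce this to $\mathbf{L}^\top\mathbf{M}^{-1}\mathbf{L}\succeq \mathbf{L}^\top\mathbf{C}^\top(\mathbf{C}\mathbf{M}\mathbf{C}^\top)^{-1}\mathbf{C}\mathbf{L}$, substitute $\bm{w}=\mathbf{M}^{-1/2}\mathbf{L}\bm{v}$ for an arbitrary $\bm{v}\in\mathbb{R}^{p}$, and recognize the resulting right-hand quadratic form as $\bm{w}^\top\mathbf{P}\bm{w}$ with $\mathbf{D}=\mathbf{C}\mathbf{M}^{1/2}$ and $\mathbf{P}=\mathbf{D}^\top(\mathbf{D}\mathbf{D}^\top)^{-1}\mathbf{D}$ the orthogonal projection onto the row space of $\mathbf{D}$. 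Because $\mathbf{P}\preceq\mathbf{I}$, the inequality follows from $\bm{w}^\top\mathbf{P}\bm{w}\le\bm{w}^\top\bm{w}$. Combined with Theorem~\ref{theo:normality}, this establishes that the empirical likelihood estimator is at least as efficient as every $\hat{\bm{\beta}}_{\mathbf{C}}$.

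It remains to realize $\hat{\bm{\beta}}_{LIN}$ as some $\hat{\bm{\beta}}_{\mathbf{B}}$. By inspection, the LIN estimating function $\sum_{k_1\neq k_2}\mathbf{W}_{i(k_1)}^\top\bm{\Sigma}_i^{-1}(\bm{Y}_i-\mathbf{W}_{i(k_2)}\bm{\beta})$ is the sum of the components of $\bm{g}_i(\bm{\beta})$ in \eqref{eq:auxivector}, hence equals $\mathbf{B}_0\bm{g}_i(\bm{\beta})$ with $\mathbf{B}_0=[\mathbf{I}_p\ \cdots\ \mathbf{I}_p]$. If no reduction is needed ($\bm{g}_i^{*}=\bm{g}_i$) we are done with $\mathbf{B}=\mathbf{B}_0$. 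When reduction occurs, I would argue that each eliminated component is almost surely a linear combination of the retained ones: duplicates trivially so, while any component dropped at Step~4 of Algorithm~\ref{algorithm1} reflects a rank deficiency of $\mathrm{E}\{\tilde{\bm{g}}_i\tilde{\bm{g}}_i^\top\}$, i.e. a vector $\bm{a}$ with $\mathrm{E}\{(\bm{a}^\top\tilde{\bm{g}}_i)^2\}=0$ and hence $\bm{a}^\top\tilde{\bm{g}}_i=0$ almost surely. Substituting these relations into $\mathbf{B}_0\bm{g}_i$ rewrites the LIN function as $\mathbf{B}\bm{g}_i^{*}(\bm{\beta})$ for a suitable $\mathbf{B}$, so the value of the estimating function, and therefore the asymptotic variance of $\hat{\bm{\beta}}_{LIN}$, is unchanged; applying the inequality with $\mathbf{C}=\mathbf{B}$ finishes the proof.

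I expect the main obstacle to be this last identification step: carefully verifying that passing from $\bm{g}_i$ to $\bm{g}_i^{*}$ preserves exactly the information $\hat{\bm{\beta}}_{LIN}$ exploits, so that the variance of the comparison target is neither spuriously inflated nor deflated. The projection inequality and the sandwich variance are routine, but the bookkeeping showing that the almost-sure linear relations among the discarded components permit re-expressing $\mathbf{B}_0\bm{g}_i$ as $\mathbf{B}\bm{g}_i^{*}$ with identical leading-order behaviour is the delicate part — which is presumably why the paper verifies it explicitly on the toy example in \ref{ss:efficiency} before asserting the general conclusion.
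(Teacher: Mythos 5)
Your architecture coincides with the paper's own proof in Appendix~B: realize the LIN estimating function as a fixed linear image of the reduced vector, so that $\hat{\bm{\beta}}_{LIN}$ has the sandwich variance of the estimator solving $\sum_{i=1}^{n}\mathbf{B}\bm{g}_i^{*}(\bm{\beta})=\bm{0}$, and then invoke the dominance of the empirical likelihood estimator over every $\mathbf{C}$-weighted combination. The only genuine difference is that you prove the dominance inequality $(\mathbf{C}\mathbf{L})^{-1}\mathbf{C}\mathbf{M}\mathbf{C}^\top(\mathbf{C}\mathbf{L})^{-\top}\succeq(\mathbf{L}^\top\mathbf{M}^{-1}\mathbf{L})^{-1}$ from scratch via the projection $\mathbf{P}=\mathbf{D}^\top(\mathbf{D}\mathbf{D}^\top)^{-1}\mathbf{D}\preceq\mathbf{I}$, where the paper simply cites Theorem~3.6 of Owen (2001); your projection argument is correct (note it implicitly needs $\mathbf{C}$ of full row rank so that $\mathbf{C}\mathbf{M}\mathbf{C}^\top$ is invertible, which indeed follows from nonsingularity of $\mathbf{C}\mathbf{L}$), so this step is a self-contained substitute for the citation rather than a different route.

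There is, however, a genuine gap in your identification step, precisely at the point you flagged as delicate. From a null vector $\bm{a}$ of $\mathrm{E}\{\tilde{\bm{g}}_i(\bm{\beta}_0)\tilde{\bm{g}}_i(\bm{\beta}_0)^\top\}$ you correctly deduce $\bm{a}^\top\tilde{\bm{g}}_i(\bm{\beta}_0)=0$ almost surely, but only \emph{at} $\bm{\beta}_0$. Since each component of $\tilde{\bm{g}}_i(\bm{\beta})$ is affine in $\bm{\beta}$, an almost-sure zero at the single point $\bm{\beta}_0$ does not make $\bm{a}^\top\tilde{\bm{g}}_i(\bm{\beta})$ vanish as a function of $\bm{\beta}$, so "substituting these relations into $\mathbf{B}_0\bm{g}_i$" is not licensed: you obtain the meat of the sandwich, $\mathrm{E}\{\bm{U}_i(\bm{\beta}_0)\bm{U}_i(\bm{\beta}_0)^\top\}=\mathbf{B}\,\mathrm{E}\{\bm{g}_i^{*}(\bm{\beta}_0)\bm{g}_i^{*}(\bm{\beta}_0)^\top\}\mathbf{B}^\top$, but not the bread, $\mathrm{E}\{\partial\bm{U}_i(\bm{\beta}_0)/\partial\bm{\beta}^\top\}=\mathbf{B}\,\mathrm{E}\{\partial\bm{g}_i^{*}(\bm{\beta}_0)/\partial\bm{\beta}^\top\}$, and the variance comparison needs both. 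The paper closes this hole by exhibiting the discarded component as an \emph{exact algebraic identity} in $\bm{\beta}$: its vector $\bm{B}_{3}=(0,1,0,0,-1,0,0,-1,0,1,1)^\top$ expresses $\bm{W}_{i(3),1}^\top\bm{\Sigma}_i^{-1}(\bm{Y}_i-\mathbf{W}_{i(2)}\bm{\beta})$ through the retained components identically, because after expanding $\bm{W}_{ij(k)}=\bm{X}_{ij}+\bm{\xi}_{ij(k)}$ all cross terms cancel pairwise by the symmetry $\bm{\xi}_{i(k_1)}^\top\bm{\Sigma}_i^{-1}\bm{\xi}_{i(k_2)}=\bm{\xi}_{i(k_2)}^\top\bm{\Sigma}_i^{-1}\bm{\xi}_{i(k_1)}$. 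It is this pointwise-in-$\bm{\beta}$ identity, not the rank deficiency of the second-moment matrix per se, that yields $\mathbf{B}_0\bm{g}_i(\bm{\beta})=\mathbf{B}\bm{g}_i^{*}(\bm{\beta})$ for all $\bm{\beta}$ and hence the Jacobian match (equivalently, the paper's factorizations through $\mathbf{B}_2$ of both moment matrices). Your proof is repaired by replacing the null-space argument with this direct verification; as written, the abstract argument would not survive a setting where the dependence held only at $\bm{\beta}_0$.
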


In particular, if the replicate measurements for the covariate follow the same distribution or more generally, the models meet some specific moment conditions, then the asymptotic variance of  $\hat{\bm{\beta}}$ is the same as that of  $\hat{\bm{\beta}}_{LIN}$. For example, if there are two replicate measurements, the  moment condition is
\begin{multline*}
	\sum_{i=1}^{n}\big[\mathrm{E}\{\mathbf{X}_{i}^\top\Sigma_{i}^{-1}\mathrm{cov}(\bm{\delta}_{i(2)}\bm{\beta})\Sigma_{i}^{-1}\mathbf{X}_{i}\}+\mathrm{E}(\bm{\delta}_{i(1)}^\top\Sigma_{i}^{-1}\bm{\delta}_{i(1)})\\
	+\mathrm{cov}(\bm{\delta}_{i(1)}^\top\Sigma_{i}^{-1}\bm{\delta}_{i(2)}\bm{\beta})-\mathrm{E}(\bm{\delta}_{i(1)}^\top\Sigma_{i}^{-1}\bm{\delta}_{i(2)}\bm{\beta}\bm{\beta}^\top\bm{\delta}_{i(1)}^\top\Sigma_{i}^{-1}\bm{\delta}_{i(2)})\big]\\
	=\sum_{i=1}^{n}\big[\mathrm{E}\{\mathbf{X}_{i}^\top\Sigma_{i}^{-1}\mathrm{cov}(\bm{\delta}_{i(1)}\bm{\beta})\Sigma_{i}^{-1}\mathbf{X}_{i}\}+\mathrm{E}(\bm{\delta}_{i(2)}^\top\Sigma_{i}^{-1}\bm{\delta}_{i(2)})\\
	+\mathrm{cov}(\bm{\delta}_{i(2)}^\top\Sigma_{i}^{-1}\bm{\delta}_{i(1)}\bm{\beta})-\mathrm{E}(\bm{\delta}_{i(2)}^\top\Sigma_{i}^{-1}\bm{\delta}_{i(1)}\bm{\beta}\bm{\beta}^\top\bm{\delta}_{i(2)}^\top\Sigma_{i}^{-1}\bm{\delta}_{i(1)})\big].
\end{multline*}
The detailed proof is given in  \ref{ss:Comparison}. However, if the moment condition is violated, the asymptotic variance of $\hat{\bm{\beta}}$ is  smaller than that of  $\hat{\bm{\beta}}_{LIN}$. In summary, the proposed  empirical likelihood estimator is asymptotically more efficient than  \cite{Lin2018Analysis}'s estimator.

\begin{theorem}
	\label{theo:chisquaretrue}
	Assuming that conditions (R.1)--(R.5) hold,  we have $-2\log R(\bm{\beta}_{0})\rightsquigarrow \chi^{2}(q)$, where $\chi^{2}(q)$ is a chi-squared distribution with $q$ degrees of freedom.
\end{theorem}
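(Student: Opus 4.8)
The plan is to follow Owen's classical argument for the empirical likelihood ratio (Owen, 1990, 2001), now applied to the triangular array of independent mean-zero vectors $\{\bm{g}_i^*(\bm{\beta}_0)\}_{i=1}^n$. Write $\bar{\bm{g}}=n^{-1}\sum_{i=1}^n\bm{g}_i^*(\bm{\beta}_0)$, recall $\mathbf{M}_n=\sum_{i=1}^n\bm{g}_i^*(\bm{\beta}_0)\bm{g}_i^*(\bm{\beta}_0)^\top$ from (R.5), and set $Z_n=\max_{1\le i\le n}\|\bm{g}_i^*(\bm{\beta}_0)\|$. Two preliminary facts drive everything. First, by the Lindeberg--Feller (Lyapunov) central limit theorem together with the moment bound $\sum_{i=1}^n\mathrm{E}\|\bm{g}_i^*(\bm{\beta}_0)/\sqrt{n}\|^{3}\to 0$ guaranteed by (R.1) and (R.4), we have $\sqrt{n}\,\bar{\bm{g}}\rightsquigarrow\mathcal{N}(\bm{0},\mathbf{M})$, so $\bar{\bm{g}}=O_p(n^{-1/2})$. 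Second, the same moment condition yields $Z_n=o_p(n^{1/2})$, while (R.5) gives $\mathbf{M}_n/n\to\mathbf{M}$ in probability with $\mathbf{M}$ positive definite.

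The first substantive step is to bound the Lagrange multiplier $\bm{\lambda}$ solving~(\ref{eq:constraint}). Writing $\bm{\lambda}=\|\bm{\lambda}\|\,\bm{u}$ with $\|\bm{u}\|=1$ and multiplying~(\ref{eq:constraint}) by $\bm{\lambda}^\top$, the standard manipulation of $0=\tfrac1n\sum_i\bm{\lambda}^\top\bm{g}_i^*/(1+\bm{\lambda}^\top\bm{g}_i^*)$ leads to
\[
\|\bm{\lambda}\|\,\bm{u}^\top\Big(\tfrac1n\mathbf{M}_n\Big)\bm{u}\le\big|\bm{u}^\top\bar{\bm{g}}\big|\,\big(1+\|\bm{\lambda}\|Z_n\big).
\]
Combining the positive-definiteness of $\mathbf{M}_n/n$ with $\bar{\bm{g}}=O_p(n^{-1/2})$ and $Z_n=o_p(n^{1/2})$ forces $\|\bm{\lambda}\|=O_p(n^{-1/2})$. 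Consequently $\max_i|\bm{\lambda}^\top\bm{g}_i^*(\bm{\beta}_0)|\le\|\bm{\lambda}\|Z_n=O_p(n^{-1/2})\,o_p(n^{1/2})=o_p(1)$, which is exactly what licenses a Taylor expansion of the logarithm uniformly in $i$.

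With $\max_i|\bm{\lambda}^\top\bm{g}_i^*|=o_p(1)$, I would expand $\log(1+x)=x-x^2/2+O(|x|^3)$ in~(\ref{eq:object}). Substituting the constraint~(\ref{eq:constraint}) to eliminate the linear term yields, after the routine bookkeeping, the explicit approximation $\bm{\lambda}=\mathbf{M}_n^{-1}\sum_i\bm{g}_i^*(\bm{\beta}_0)+o_p(n^{-1/2})$ and hence
\[
-2\log R(\bm{\beta}_0)=\Big(\tfrac1{\sqrt{n}}\sum_{i=1}^n\bm{g}_i^*(\bm{\beta}_0)\Big)^{\!\top}\Big(\tfrac1n\mathbf{M}_n\Big)^{-1}\Big(\tfrac1{\sqrt{n}}\sum_{i=1}^n\bm{g}_i^*(\bm{\beta}_0)\Big)+o_p(1).
\]
Replacing $\mathbf{M}_n/n$ by its probability limit $\mathbf{M}$ and invoking the CLT from the first paragraph, the right-hand side converges in distribution to $\bm{Z}^\top\mathbf{M}^{-1}\bm{Z}$ with $\bm{Z}\sim\mathcal{N}(\bm{0},\mathbf{M})$, which is $\chi^2(q)$ since $\bm{g}_i^*$ has dimension $q$.

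The main obstacle, as always in this argument, is the simultaneous rate control of $\bm{\lambda}$ and the accompanying remainder terms: one must show that the cubic remainder $\sum_i|\bm{\lambda}^\top\bm{g}_i^*|^3$ is $o_p(1)$ (which follows from the factorization $\|\bm{\lambda}\|^3\sum_i\|\bm{g}_i^*\|^3$ and the moment bound of (R.4)) and that the positive-definiteness assured only asymptotically through (R.5) does not degenerate along the sequence, so that $\bm{u}^\top(\mathbf{M}_n/n)\bm{u}$ is bounded below uniformly in the minimization over $\bm{u}$. The reduction step producing $\bm{g}_i^*$ from $\bm{g}_i$ is essential here, since a singular limiting $\mathbf{M}$ would break both the lower bound used to control $\bm{\lambda}$ and the final $\chi^2$ identity; the guarantee in (R.2) and (R.5) that $\mathbf{M}$ is positive definite is precisely what makes the degrees of freedom exactly $q$.
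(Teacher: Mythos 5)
Your proposal is correct and follows essentially the same route as the paper's own proof: the Lyapunov CLT for $n^{-1/2}\sum_{i}\bm{g}_{i}^{*}(\bm{\beta}_{0})$ (the paper's Lemma~\ref{lemma:A1}), the bounds $\max_{i}\|\bm{g}_{i}^{*}(\bm{\beta}_{0})\|=o_{p}(n^{1/2})$ and $\|\bm{\lambda}_{0}\|=O_{p}(n^{-1/2})$ with the cubic-remainder control $\sum_{i}|\bm{\lambda}_{0}^{\top}\bm{g}_{i}^{*}(\bm{\beta}_{0})|^{3}=o_{p}(1)$ (Lemma~\ref{lemma:A2}), the explicit approximation $\bm{\lambda}_{0}=\mathbf{M}_{n}^{-1}\sum_{i}\bm{g}_{i}^{*}(\bm{\beta}_{0})+o_{p}(n^{-1/2})$, and the resulting quadratic form converging to $\chi^{2}(q)$ via the positive definiteness of $\mathbf{M}$. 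The argument, the key lemmas, and the rate bookkeeping all match the paper's Owen-style proof, so there is nothing further to add.
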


\begin{theorem}
	\label{theorem:fulltest}
	The empirical likelihood ratio statistic for  the test of $H_{0}: \bm{\beta}=\bm{\beta}_{0}$ versus $H_{a}: \bm{\beta}\neq \bm{\beta}_{0}$ is
	\[
	W_{1}(\bm{\beta}_{0})=-2\log\big\{R(\bm{\beta}_{0})/R(\hat{\bm{\beta}})\big\}.
	\]
	In addition, assuming that conditions (R.1)--(R.5) hold,  we have $W_{1}(\bm{\beta}_{0})\rightsquigarrow \chi^{2}(p)$  under $H_{0}$.
\end{theorem}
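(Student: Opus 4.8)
The plan is to write the test statistic as the difference
\[
W_{1}(\bm{\beta}_{0}) = -2\log R(\bm{\beta}_{0}) - \big\{-2\log R(\hat{\bm{\beta}})\big\},
\]
and to obtain a quadratic approximation for each piece. The first piece is already controlled by the argument behind Theorem~\ref{theo:chisquaretrue}: writing $\bar{\bm{G}}_{n} = n^{-1/2}\sum_{i=1}^{n}\bm{g}_{i}^{*}(\bm{\beta}_{0})$, the usual Owen-type expansion gives $-2\log R(\bm{\beta}_{0}) = \bar{\bm{G}}_{n}^{\top}\mathbf{M}^{-1}\bar{\bm{G}}_{n} + o_{p}(1)$, where by the central limit theorem together with condition (R.5) one has $\bar{\bm{G}}_{n}\rightsquigarrow\mathcal{N}(\bm{0},\mathbf{M})$. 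The real work is to produce the analogous approximation for $-2\log R(\hat{\bm{\beta}})$ and to show that the difference collapses to a rank-$p$ quadratic form in $\bar{\bm{G}}_{n}$.

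Second, I would derive the joint linear expansion of the minimizer. Let $\hat{\bm{\lambda}}=\bm{\lambda}(\hat{\bm{\beta}})$ denote the Lagrange multiplier at the MELE, so $(\hat{\bm{\beta}},\hat{\bm{\lambda}})$ solves the estimating equations $Q_{1n}(\bm{\beta},\bm{\lambda})=n^{-1}\sum_{i}\bm{g}_{i}^{*}(\bm{\beta})/\{1+\bm{\lambda}^{\top}\bm{g}_{i}^{*}(\bm{\beta})\}=\bm{0}$ and $Q_{2n}(\bm{\beta},\bm{\lambda})=n^{-1}\sum_{i}\{1+\bm{\lambda}^{\top}\bm{g}_{i}^{*}(\bm{\beta})\}^{-1}(\partial\bm{g}_{i}^{*}(\bm{\beta})/\partial\bm{\beta}^{\top})^{\top}\bm{\lambda}=\bm{0}$, the latter being the stationarity condition of the profile objective \eqref{eq:object}. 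Taylor-expanding both equations about $(\bm{\beta}_{0},\bm{0})$ and using condition (R.5) with $\hat{\bm{\lambda}}=O_{p}(n^{-1/2})$ and $\hat{\bm{\beta}}-\bm{\beta}_{0}=O_{p}(n^{-1/2})$ (established in the proof of Theorem~\ref{theo:normality}) yields
\[
\begin{pmatrix} -\mathbf{M} & \mathbf{L} \\ \mathbf{L}^{\top} & \mathbf{0} \end{pmatrix}
\begin{pmatrix} \sqrt{n}\,\hat{\bm{\lambda}} \\ \sqrt{n}(\hat{\bm{\beta}}-\bm{\beta}_{0}) \end{pmatrix}
= \begin{pmatrix} -\bar{\bm{G}}_{n} \\ \bm{0} \end{pmatrix} + o_{p}(1).
\]
Solving this block system gives $\sqrt{n}\,\hat{\bm{\lambda}} = \mathbf{P}\bar{\bm{G}}_{n}+o_{p}(1)$ with $\mathbf{P}=\mathbf{M}^{-1}-\mathbf{M}^{-1}\mathbf{L}(\mathbf{L}^{\top}\mathbf{M}^{-1}\mathbf{L})^{-1}\mathbf{L}^{\top}\mathbf{M}^{-1}$, together with $\sqrt{n}(\hat{\bm{\beta}}-\bm{\beta}_{0})=-(\mathbf{L}^{\top}\mathbf{M}^{-1}\mathbf{L})^{-1}\mathbf{L}^{\top}\mathbf{M}^{-1}\bar{\bm{G}}_{n}+o_{p}(1)$, recovering Theorem~\ref{theo:normality}.

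Third, I would expand the objective at the MELE. Using $\log(1+x)=x-x^{2}/2+o(x^{2})$ and the maximal bound $\max_{i}\|\bm{g}_{i}^{*}(\bm{\beta}_{0})\|=o_{p}(n^{1/2})$ implied by (R.4), and invoking the constraint $Q_{1n}(\hat{\bm{\beta}},\hat{\bm{\lambda}})=\bm{0}$ to cancel the linear term, one obtains $-2\log R(\hat{\bm{\beta}})=\hat{\bm{\lambda}}^{\top}\big\{\sum_{i}\bm{g}_{i}^{*}(\hat{\bm{\beta}})\bm{g}_{i}^{*}(\hat{\bm{\beta}})^{\top}\big\}\hat{\bm{\lambda}}+o_{p}(1)$. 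Replacing the inner matrix by $n\mathbf{M}$ and substituting $\sqrt{n}\,\hat{\bm{\lambda}}=\mathbf{P}\bar{\bm{G}}_{n}+o_{p}(1)$ gives $-2\log R(\hat{\bm{\beta}})=\bar{\bm{G}}_{n}^{\top}\mathbf{P}^{\top}\mathbf{M}\mathbf{P}\bar{\bm{G}}_{n}+o_{p}(1)$. Since $\mathbf{P}\mathbf{L}=\bm{0}$ and $\mathbf{P}$ is symmetric, a direct computation gives $\mathbf{P}^{\top}\mathbf{M}\mathbf{P}=\mathbf{P}$, so $-2\log R(\hat{\bm{\beta}})=\bar{\bm{G}}_{n}^{\top}\mathbf{P}\bar{\bm{G}}_{n}+o_{p}(1)$. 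Subtracting from the first piece,
\[
W_{1}(\bm{\beta}_{0})=\bar{\bm{G}}_{n}^{\top}\mathbf{M}^{-1}\mathbf{L}(\mathbf{L}^{\top}\mathbf{M}^{-1}\mathbf{L})^{-1}\mathbf{L}^{\top}\mathbf{M}^{-1}\bar{\bm{G}}_{n}+o_{p}(1).
\]

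Finally, writing $\bm{Z}_{n}=\mathbf{M}^{-1/2}\bar{\bm{G}}_{n}\rightsquigarrow\mathcal{N}(\bm{0},\mathbf{I}_{q})$, the quadratic form equals $\bm{Z}_{n}^{\top}\mathbf{A}\bm{Z}_{n}+o_{p}(1)$ with $\mathbf{A}=\mathbf{M}^{-1/2}\mathbf{L}(\mathbf{L}^{\top}\mathbf{M}^{-1}\mathbf{L})^{-1}\mathbf{L}^{\top}\mathbf{M}^{-1/2}$. One checks $\mathbf{A}^{2}=\mathbf{A}=\mathbf{A}^{\top}$ and $\mathrm{tr}(\mathbf{A})=\mathrm{tr}(\mathbf{I}_{p})=p$, so $\mathbf{A}$ is an orthogonal projection of rank $p$; hence $\bm{Z}_{n}^{\top}\mathbf{A}\bm{Z}_{n}\rightsquigarrow\chi^{2}(p)$ by Slutsky and the continuous-mapping theorem, which is the claim. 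The main obstacle is the third step: rigorously justifying the quadratic approximation of $-2\log R(\hat{\bm{\beta}})$ requires uniform control of the remainder in the $\log(1+x)$ expansion over $i=1,\dots,n$ and of the difference between $\sum_{i}\bm{g}_{i}^{*}(\hat{\bm{\beta}})\bm{g}_{i}^{*}(\hat{\bm{\beta}})^{\top}$ and $\mathbf{M}_{n}$ after evaluation at the random $\hat{\bm{\beta}}$. This relies on the moment conditions (R.4) and the $n^{-1/2}$-rates, and is essentially the same machinery already assembled for Theorems~\ref{theo:normality} and~\ref{theo:chisquaretrue}.
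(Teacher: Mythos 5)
Your proof is correct and takes essentially the same route as the paper: the same decomposition of $W_{1}(\bm{\beta}_{0})$ into the two log-empirical-likelihood pieces, the same quadratic approximations (your projection matrix $\mathbf{P}$ is exactly the paper's $\mathbf{V}=\mathbf{M}^{-1}\big\{\mathbf{I}_{q\times q}-\mathbf{L}(\mathbf{L}^\top\mathbf{M}^{-1}\mathbf{L})^{-1}\mathbf{L}^\top\mathbf{M}^{-1}\big\}$), and the same concluding argument that $\mathbf{M}^{-1/2}\mathbf{L}(\mathbf{L}^\top\mathbf{M}^{-1}\mathbf{L})^{-1}\mathbf{L}^\top\mathbf{M}^{-1/2}$ is a symmetric idempotent matrix of rank $p$. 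The only difference is one of explicitness: where the paper dispatches the expansion of $\sum_{i=1}^{n}\log\{1+\hat{\bm{\lambda}}^\top\bm{g}_{i}^{*}(\hat{\bm{\beta}})\}$ with a bare appeal to the Taylor expansion, you derive it from the block system already established in the proof of Theorem~\ref{theo:normality} together with the identities $\mathbf{P}\mathbf{L}=\bm{0}$ and $\mathbf{P}\mathbf{M}\mathbf{P}=\mathbf{P}$, which is a legitimate filling-in of the same argument rather than a new one.
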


Similar to the parametric likelihood method, Theorems~\ref{theo:chisquaretrue} and~\ref{theorem:fulltest} allow us to use the test statistics $-2\log R(\bm{\beta}_{0})$ and $W_{1}(\bm{\beta}_{0})$ to perform  hypothesis testing and  construct confidence region for $\bm{\beta}_{0}$. Specifically, the $100(1-\alpha)\%$ empirical likelihood confidence region for $\bm{\beta}_{0}$ can be constructed as
\[
\mathbf{I}_{1}=\left\{\bm{\beta}:-2\log R(\bm{\beta})\leq \chi^{2}_{1-\alpha}(q)\right\}\quad  \mathrm{or} \quad \mathbf{I}_{2}=\left\{\bm{\beta}:W_{1}(\bm{\beta})\leq \chi^{2}_{1-\alpha}(p)\right\},
\]
where $\chi^{2}_{1-\alpha}(d)$ is the $(1-\alpha)$th quantile of $\chi^{2}(d)$ for any positive integer $d$.

If we are  only interested in a part of elements in the parameter vector $\bm{\beta}$, we can use a profile empirical likelihood ratio test statistic to perform  hypothesis testing and construct confidence region for the parameters of interest. Let $\bm{\beta}=(\bm{\beta}_{1}^\top,\bm{\beta}_{2}^\top)^\top$, where $\bm{\beta}_{1}$ and $\bm{\beta}_{2}$ are $r$-dimensional and $(p-r)$-dimensional vectors, respectively. In order to test $H_{0}: \bm{\beta}_{1}=\bm{\beta}_{1}^{0}$ versus $H_{a}: \bm{\beta}_{1}\neq \bm{\beta}_{1}^{0}$, we define the profile empirical likelihood ratio test statistic as
\[
W_{2}(\bm{\beta}_{1}^{0})=-2\log \big\{R(\bm{\beta}_{1}^{0},\hat{\bm{\beta}}_{2}^{0})/R(\hat{\bm{\beta}}_{1},\hat{\bm{\beta}}_{2})\big\},
\]
where $\hat{\bm{\beta}}_{2}^{0}$ minimizes $-2\log R(\bm{\beta}_{1}^{0},\bm{\beta}_{2})$ with respect to $\bm{\beta}_{2}$. Furthermore, we can show the following theorem:

\begin{theorem}
	\label{theorem:profiletest}
	Assuming that conditions (R.1)--(R.5) hold,  we have $W_{2}(\bm{\beta}_{1}^{0})\rightsquigarrow \chi^{2}(r)$ under $H_{0}$.
\end{theorem}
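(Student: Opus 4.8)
The plan is to write $W_2(\bm{\beta}_1^0)$ as the difference of two minimized profile log-ratios, expand each as a quadratic form in one common asymptotically Gaussian vector, and identify the difference of the two quadratic forms as a projection of rank $r$. Concretely, since
\[
W_2(\bm{\beta}_1^0)=\big\{-2\log R(\bm{\beta}_1^0,\hat{\bm{\beta}}_2^0)\big\}-\big\{-2\log R(\hat{\bm{\beta}})\big\},
\]
it suffices to obtain matching second-order expansions of the two minimized objective functions and subtract. Throughout I set $S_n=n^{-1/2}\sum_{i=1}^n\bm{g}_i^*(\bm{\beta}_0)$, which, exactly as in the proof of Theorem~\ref{theo:chisquaretrue}, satisfies $S_n\rightsquigarrow\mathcal{N}(\bm{0},\mathbf{M})$ under (R.1)--(R.5). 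I also partition the Jacobian limit columnwise as $\mathbf{L}=(\mathbf{L}_1,\mathbf{L}_2)$, where $\mathbf{L}_2$ is the $q\times(p-r)$ block corresponding to $\bm{\beta}_2$, and write $\bm{\beta}_0=(\bm{\beta}_{1,0}^\top,\bm{\beta}_{2,0}^\top)^\top$.

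First I would treat the unconstrained minimizer, which is already available from the proof of Theorem~\ref{theorem:fulltest}. Combining the constraint~(\ref{eq:constraint}) with the stationarity condition $\partial\{-2\log R\}/\partial\bm{\beta}=\bm{0}$, performing a joint Taylor expansion in $(\bm{\beta},\bm{\lambda})$ using $\|\bm{\lambda}\|=O_p(n^{-1/2})$ and $\max_i\|\bm{g}_i^*(\bm{\beta}_0)\|=o_p(\sqrt{n})$ (guaranteed by (R.1) and the moment conditions (R.4)), gives
\[
-2\log R(\hat{\bm{\beta}})=S_n^\top\big\{\mathbf{M}^{-1}-\mathbf{M}^{-1}\mathbf{L}(\mathbf{L}^\top\mathbf{M}^{-1}\mathbf{L})^{-1}\mathbf{L}^\top\mathbf{M}^{-1}\big\}S_n+o_p(1).
\]
Next I would repeat this argument for the constrained problem in which $\bm{\beta}_1$ is frozen at $\bm{\beta}_1^0$ and only $\bm{\beta}_2$ is free. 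The crucial point is that under $H_0$ we have $\bm{\beta}_1^0=\bm{\beta}_{1,0}$, so $\mathrm{E}\{\bm{g}_i^*(\bm{\beta}_1^0,\bm{\beta}_{2,0})\}=\bm{0}$ and the \emph{same} centred vector $S_n$ drives the expansion; the only change is that the effective Jacobian is the reduced block $\mathbf{L}_2$. Establishing the $\sqrt{n}$-consistency of $\hat{\bm{\beta}}_2^0$ and its joint linear representation with the associated multiplier then yields
\[
-2\log R(\bm{\beta}_1^0,\hat{\bm{\beta}}_2^0)=S_n^\top\big\{\mathbf{M}^{-1}-\mathbf{M}^{-1}\mathbf{L}_2(\mathbf{L}_2^\top\mathbf{M}^{-1}\mathbf{L}_2)^{-1}\mathbf{L}_2^\top\mathbf{M}^{-1}\big\}S_n+o_p(1).
\]

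Subtracting the two displays gives $W_2(\bm{\beta}_1^0)=S_n^\top\mathbf{D}S_n+o_p(1)$ with $\mathbf{D}=\mathbf{M}^{-1}\mathbf{L}(\mathbf{L}^\top\mathbf{M}^{-1}\mathbf{L})^{-1}\mathbf{L}^\top\mathbf{M}^{-1}-\mathbf{M}^{-1}\mathbf{L}_2(\mathbf{L}_2^\top\mathbf{M}^{-1}\mathbf{L}_2)^{-1}\mathbf{L}_2^\top\mathbf{M}^{-1}$. I would then apply the whitening transform $Z_n=\mathbf{M}^{-1/2}S_n\rightsquigarrow\mathcal{N}(\bm{0},\mathbf{I}_q)$, under which $\mathbf{M}^{1/2}\mathbf{D}\mathbf{M}^{1/2}=H_{\tilde{\mathbf{L}}}-H_{\tilde{\mathbf{L}}_2}$, the difference of the orthogonal projectors onto $\mathrm{col}(\tilde{\mathbf{L}})$ and $\mathrm{col}(\tilde{\mathbf{L}}_2)$ with $\tilde{\mathbf{L}}=\mathbf{M}^{-1/2}\mathbf{L}$ and $\tilde{\mathbf{L}}_2=\mathbf{M}^{-1/2}\mathbf{L}_2$. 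Because $\mathrm{col}(\tilde{\mathbf{L}}_2)\subseteq\mathrm{col}(\tilde{\mathbf{L}})$, this difference is itself an orthogonal projector, of rank $p-(p-r)=r$. A standard Cochran-type argument then gives $Z_n^\top(H_{\tilde{\mathbf{L}}}-H_{\tilde{\mathbf{L}}_2})Z_n\rightsquigarrow\chi^2(r)$, and Slutsky's theorem absorbs the $o_p(1)$ remainder to conclude $W_2(\bm{\beta}_1^0)\rightsquigarrow\chi^2(r)$.

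The main obstacle is the rigorous justification of the constrained expansion. Unlike the unconstrained case (covered by Theorems~\ref{theo:chisquaretrue}--\ref{theorem:fulltest}), one must first prove that the profile minimizer $\hat{\bm{\beta}}_2^0$ exists, is $\sqrt{n}$-consistent for $\bm{\beta}_{2,0}$, and admits the same Bahadur-type linear representation with $\mathbf{L}_2$ replacing $\mathbf{L}$, with all remainders shown to be $o_p(1)$ uniformly over the relevant $n^{-1/2}$-neighbourhood. This rests on verifying that $\mathbf{L}_2^\top\mathbf{M}^{-1}\mathbf{L}_2$ is nonsingular (a consequence of $\mathbf{L}$ having full column rank $p$ together with (R.5)) and on controlling the cross terms between $\bm{\beta}_2$ and $\bm{\lambda}$ in the joint Taylor expansion of~(\ref{eq:constraint}) under the constraint. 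Once these estimates are secured, the projection and Cochran computation in the final step is routine.
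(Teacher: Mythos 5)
Your proposal is correct and follows essentially the same route as the paper's own proof: both expand $-2\log R(\bm{\beta}_{1}^{0},\hat{\bm{\beta}}_{2}^{0})$ and $-2\log R(\hat{\bm{\beta}})$ as quadratic forms in the common score $n^{-1/2}\sum_{i=1}^{n}\bm{g}_{i}^{*}(\bm{\beta}_{0})$ with Jacobian blocks $\mathbf{L}$ and $\mathbf{L}_{2}$, subtract, and reduce $W_{2}$ to a quadratic form in the whitened score whose matrix is $\mathbf{M}^{-1/2}\mathbf{L}(\mathbf{L}^\top\mathbf{M}^{-1}\mathbf{L})^{-1}\mathbf{L}^\top\mathbf{M}^{-1/2}-\mathbf{M}^{-1/2}\mathbf{L}_{2}(\mathbf{L}_{2}^\top\mathbf{M}^{-1}\mathbf{L}_{2})^{-1}\mathbf{L}_{2}^\top\mathbf{M}^{-1/2}$, an idempotent matrix of rank $r$. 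Your nested-column-space justification of the rank-$r$ projector is just a slightly more explicit version of the paper's rank count for the two idempotent matrices, and the constrained Bahadur expansion you flag as the main obstacle is exactly the step the paper compresses into ``by the Taylor expansion.''
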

Thus, an approximate $100(1-\alpha)\%$ confidence region for $\bm{\beta}_{1}^{0}$  is
\[
\mathbf{I}_{3}=\left\{\bm{\beta}_{1}: W_{2}(\bm{\beta}_{1})\leq \chi^{2}_{1-\alpha}(r)\right\}.
\]

The proofs of Theorems~\ref{theo:normality},~\ref{theo:chisquaretrue},~\ref{theorem:fulltest}, and~\ref{theorem:profiletest} are given in~\ref{ss:proofofasym}.
When constructing the confidence region for the parameters of interest, compared with the normal approximation-based method, the chi-squared approximation-based method  can avoid estimating the asymptotic covariance matrix  and does not need to impose prior constraints on the shape of the confidence region. Therefore, the chi-squared approximation-based method is recommended instead of the normal approximation-based method in practice.

\section{Simulation studies}
\label{s:simu}
In this section, we conduct comprehensive simulations  to evaluate the performance of the proposed empirical likelihood (EL) method when there are replicate measurements for the covariate and there may exist measurement errors in the covariate. For comparison, we also present the simulation results of  the naive generalized estimating equation (GEE) method \citep{Liang1986Longitudinal}, the  naive EL method \citep{Qin1994Empirical}, and \cite{Lin2018Analysis}'s estimating equation method. When there are measurement errors in the covariate, the naive methods simply replace the unobserved true covariate with the average of replicate surrogate measurements.

\subsection{Simulation settings}
\label{ss:setting}
We consider the following linear regression model
\begin{equation*}
	Y_{ij}=\beta_{00}+X_{ij,1}\beta_{01}+X_{ij,2}\beta_{02}+\varepsilon_{ij}, \quad  i=1,\cdots,n, \quad j=1,\cdots,m,
\end{equation*}
where $(\beta_{00},\beta_{01},\beta_{02})=(1,1,1)$,  $m=6$, and the number of subjects $n$ is taken to be $50$, $100$, $200$, $300$, or $500$. The covariates $X_{ij,1}$ and $X_{ij,2}$ are drawn independently from the standard normal distribution $\mathcal{N}(0,1)$.  The random error $\bm{\varepsilon}_{i}=(\varepsilon_{i1},\cdots,\varepsilon_{im})^\top$  is generated from a multivariate normal distribution with mean zero and covariance matrix $\mathbf{R}_{i}({\rho}) {\sigma }_{e}^{2}$, where $\mathbf{R}_{i}({\rho})$ is the correlation matrix chosen to have an exchangeable  structure with $\rho=0.6$, and ${\sigma}_{e}^{2}=0.8$.

Assuming that there are measurement errors in the covariate $X_{ij,1}$, the surrogate values $W_{ij(k),1}$, $k=1,\cdots, K$ are generated from the following additive measurement error model:
\[
W_{ij(k),1}=X_{ij,1}+\xi_{ij(k)}, \quad  k=1,\cdots, K.
\]

To investigate the impacts of measurement errors on estimation  accuracy and efficiency, we consider the following  four cases, respectively.
\begin{enumerate}
	\item[C1:] There are two replicate measurements for $X_{ij,1}$. $\xi_{ij(1)}$ and $\xi_{ij(2)}$ are independently generated from a normal distribution with mean zero and standard deviation $0.6$.
	
	\item[C2:] There are two replicate measurements for $X_{ij,1}$. $\xi_{ij(1)}$ is generated from a normal distribution with mean zero and standard deviation $0.6$, while $\xi_{ij(2)}$ is generated from a $t$-distribution with $4$ degrees of freedom. $\xi_{ij(1)}$ and $\xi_{ij(2)}$ are independent.
	
	\item[C3:] There are three replicate measurements for $X_{ij,1}$. $\xi_{ij(1)}$, $\xi_{ij(2)}$, and $\xi_{ij(3)}$ are independently generated from a normal distribution with mean zero and standard deviation $0.6$.
	
	\item[C4:] There are three replicate measurements for $X_{ij,1}$. $\xi_{ij(1)}$ is generated from a normal distribution with mean zero and standard deviation $0.6$, $\xi_{ij(2)}$ is generated from a $t$-distribution with $4$ degrees of freedom, and $\xi_{ij(3)}$ is first generated from an exponential distribution  with the rate parameter $\lambda=2$ and then is centralized by subtracting its expectation $0.5$. $\xi_{ij(1)}$, $\xi_{ij(2)}$, and $\xi_{ij(3)}$ are independent.	
\end{enumerate}

Under each simulation setting, $1000$ replications are conducted.

\subsection{Simulation results}
\label{ss:result}
For each method, we calculate  the bias,  standard deviation (SD),  mean squared error (MSE), and coverage probability (CP) and mean length (ML) of the $95\%$ confidence interval.  For the GEE-based methods, the confidence intervals are constructed based on the asymptotic normality of the estimators. For the EL-based methods, the confidence intervals are constructed based on the asymptotic chi-squared distribution of the profile empirical likelihood ratio test statistic. The simulation results are presented in Tables~\ref{table:estcase1}--\ref{table:cicase4}.

By comparison, we find that the results of the  naive GEE method and the naive EL method are very similar. This is consistent with the statement that the GEE method and EL method are asymptotically  equivalent when the number of elements in the reduced  auxiliary random vector equals  the number of unknown parameters \citep{Qin1994Empirical}. When there are measurement errors in the covariate $X_{ij,1}$, the biases and MSEs of $\hat{\beta}_{1}$ obtained from the  two naive  methods  are all very large. Besides, the CPs for $\beta_{01}$ are very close to zero based on the  two naive  methods, which means that the true value always does not fall into the $95\%$ confidence interval.  Therefore, the effects of measurement errors cannot be ignored.  However, the biases of $\hat{\beta}_{1}$  based on \cite{Lin2018Analysis}'s method and the proposed method are much  smaller than those based on the  two naive  methods,  indicating that the biases induced by measurement errors can be eliminated successfully. When the distributions of different replicate measurements  are the same (C1 and C3), as the number of subjects increases,  \cite{Lin2018Analysis}'s method and the proposed method tend to be comparable with similar SD, MSE, and ML.  However, when the distributions of different replicate measurements are different (C2 and C4), the proposed method is more efficient than \cite{Lin2018Analysis}'s method when the number of subjects is not too small, because the SD and ML of the proposed method are generally smaller than those of \cite{Lin2018Analysis}'s method. 

Comparing the results when the measurement errors are generated by way of C1 with those when the measurement errors are generated by way of C3, we can find that as the number of replicate measurements for $X_{ij,1}$ increases, all the methods become more efficient if the number of subjects is not too small. From the results when the measurement errors are generated by ways of C2 and C4,  we can get the same conclusion. Therefore, we need to make full use of information from all the replicate measurements. In addition,  the CPs of \cite{Lin2018Analysis}'s method and the proposed method are all close to the nominal confidence level $95\%$ if the number of subjects is larger than $50$, which shows that the confidence intervals obtained from the asymptotic theories are acceptable when the number of subjects is not too small.

It is worth mentioning that when  the number of subjects is small and the number of elements in the reduced auxiliary random vector is large, the empirical likelihood-based methods may have some problems. For example, \cite{han2014multiply} stated that their proposed empirical likelihood-based method might have numerical issues when the sample size (i.e., number of subjects) was small and/or the number of constraints  (i.e., number of elements in the reduced auxiliary random vector) was large. This may explain why the SD and MSE of the proposed method are sometimes larger than those of the \cite{Lin2018Analysis}'s method  when the number of subjects is $50$ and the measurement errors are generated by ways of C3 and C4, under which the number of elements in the reduced auxiliary random vector is $11$. \cite{tsao2004bounds} also showed that the least upper bounds on coverage probabilities of the empirical likelihood ratio confidence regions might be surprisingly small when the ratio of the number of subjects and the number of elements in the reduced auxiliary random vector was small. This may be the reason why the CPs of confidence intervals based on the proposed EL estimator are  lower than the nominal level $95\%$ when the number of subjects  is $50$.

 In summary, the proposed method performs well when the number of subjects is not too small. Specifically, it can eliminate the effects of  measurement errors in the covariate and has a high estimation efficiency.

 In order to obtain the simulation results in Tables \ref{table:estcase1}--\ref{table:cicase4}, we compute on the Digital Research Alliance of Canada's cluster Graham and use the R software (version 3.6.1).  The R codes  are available on the RunMyCode website. Table \ref{table:time} shows the average computation time for one replication. It can be found that the EL-based methods take more time than the GEE-based methods. This is not surprising because we need to perform more optimization calculations for the EL-based methods, especially when we calculate the confidence intervals. Thus, we will lose some computation efficiency to gain estimation efficiency. There is a trade-off between them.  If we place more emphasis on the estimation efficiency, then the proposed method is a good choice.

\begin{table}[!htp]
	\centering
	\begin{threeparttable}
		\caption{Bias, standard deviation (SD) and mean squared error (MSE) of the estimator  when the measurement errors are generated by  way of C1} 		
		\label{table:estcase1}
		\setlength{\tabcolsep}{1pt}
		\begin{tabular*}{1.05\textwidth}{@{}r@{\extracolsep{\fill}}c@{\extracolsep{\fill}}c@{\extracolsep{\fill}}c@{\extracolsep{\fill}}c@{\extracolsep{\fill}}c@{\extracolsep{\fill}}c@{\extracolsep{\fill}}c@{\extracolsep{\fill}}c@{\extracolsep{\fill}}c@{\extracolsep{\fill}}c@{\extracolsep{\fill}}c@{\extracolsep{\fill}}c@{}}
			\toprule
			\multicolumn{1}{c}{$n$}&\multicolumn{4}{c}{${\beta}_{00}=1$}&\multicolumn{4}{c}{${\beta}_{01}=1$}&\multicolumn{4}{c}{${\beta}_{02}=1$}\\
			\cline{2-5} \cline{6-9}\cline{10-13}\\[-1.8em]
			& GEEN&ELN&LIN&Proposed&GEEN&ELN&LIN&Proposed&GEEN&ELN&LIN&Proposed   \\
			\midrule
  \multicolumn{1}{c}{Bias }	 & & & && && &&&&&\\				
50 & -0.49 & -0.49 & -0.46 & -0.34 & -14.93 & -14.93 & 0.74 & 0.69 & 0.01 & 0.01 & 0.04 & 0.05 \\ 
100 & 0.00 & 0.00 & 0.01 & 0.04 & -15.27 & -15.27 & 0.19 & 0.13 & 0.06 & 0.06 & 0.08 & 0.08 \\ 
200 & -0.03 & -0.03 & -0.03 & -0.03 & -15.15 & -15.15 & 0.24 & 0.22 & -0.11 & -0.11 & -0.12 & -0.11 \\ 
300 & 0.03 & 0.03 & 0.04 & 0.02 & -15.28 & -15.28 & 0.03 & 0.02 & -0.09 & -0.09 & -0.10 & -0.10 \\ 
500 & -0.05 & -0.05 & -0.05 & -0.05 & -15.22 & -15.22 & 0.11 & 0.11 & 0.02 & 0.02 & 0.04 & 0.04 \\ 

  \multicolumn{1}{c}{SD }	 & & & && && &&&&&\\				
50 & 10.55 & 10.55 & 10.61 & 10.93 & 4.10 & 4.10 & 5.33 & 5.50 & 4.34 & 4.34 & 4.43 & 4.63 \\ 
100 & 7.32 & 7.32 & 7.39 & 7.53 & 2.95 & 2.95 & 3.78 & 3.86 & 3.12 & 3.12 & 3.21 & 3.29 \\ 
200 & 5.18 & 5.18 & 5.19 & 5.21 & 2.00 & 2.00 & 2.64 & 2.67 & 2.15 & 2.15 & 2.18 & 2.21 \\ 
300 & 4.35 & 4.35 & 4.37 & 4.40 & 1.64 & 1.64 & 2.13 & 2.15 & 1.70 & 1.70 & 1.76 & 1.78 \\ 
500 & 3.34 & 3.34 & 3.36 & 3.37 & 1.27 & 1.27 & 1.67 & 1.67 & 1.34 & 1.34 & 1.38 & 1.38 \\ 
  
  \multicolumn{1}{c}{MSE }	 & & & && && &&&&&\\
50 & 1.12 & 1.12 & 1.13 & 1.19 & 2.40 & 2.40 & 0.29 & 0.31 & 0.19 & 0.19 & 0.20 & 0.21 \\ 
100 & 0.54 & 0.54 & 0.55 & 0.57 & 2.42 & 2.42 & 0.14 & 0.15 & 0.10 & 0.10 & 0.10 & 0.11 \\ 
200 & 0.27 & 0.27 & 0.27 & 0.27 & 2.33 & 2.33 & 0.07 & 0.07 & 0.05 & 0.05 & 0.05 & 0.05 \\ 
300 & 0.19 & 0.19 & 0.19 & 0.19 & 2.36 & 2.36 & 0.05 & 0.05 & 0.03 & 0.03 & 0.03 & 0.03 \\ 
500 & 0.11 & 0.11 & 0.11 & 0.11 & 2.33 & 2.33 & 0.03 & 0.03 & 0.02 & 0.02 & 0.02 & 0.02 \\

			\bottomrule
		\end{tabular*}
		\begin{tablenotes}	
			\item Note: \textbf{All the values of simulation results  are multiplied by $\bm{100}$}.  GEEN: the naive GEE method; ELN: the naive EL method; LIN: \cite{Lin2018Analysis}'s estimating equation method;  Proposed: the proposed EL method.
		\end{tablenotes}		
	\end{threeparttable}	
\end{table}

\begin{table}[!htp]
	\centering
	\begin{threeparttable}
		\caption{Coverage probability (CP) and mean length (ML) of the $95\%$ confidence interval  when the measurement errors are generated by way of C1} 		
		\label{table:cicase1}
		\setlength{\tabcolsep}{1pt}
		\begin{tabular*}{1.05\textwidth}{@{}r@{\extracolsep{\fill}}c@{\extracolsep{\fill}}c@{\extracolsep{\fill}}c@{\extracolsep{\fill}}c@{\extracolsep{\fill}}c@{\extracolsep{\fill}}c@{\extracolsep{\fill}}c@{\extracolsep{\fill}}c@{\extracolsep{\fill}}c@{\extracolsep{\fill}}c@{\extracolsep{\fill}}c@{\extracolsep{\fill}}c@{}}
			\toprule
			\multicolumn{1}{c}{$n$}&\multicolumn{4}{c}{${\beta}_{00}=1$}&\multicolumn{4}{c}{${\beta}_{01}=1$}&\multicolumn{4}{c}{${\beta}_{02}=1$}\\
			\cline{2-5} \cline{6-9}\cline{10-13}\\[-1.8em]
				& GEEN&ELN&LIN&Proposed&GEEN&ELN&LIN&Proposed&GEEN&ELN&LIN&Proposed   \\
			\midrule
			\multicolumn{1}{c}{CP }	 & & & && && &&&&&\\				
50 & 93.2 & 93.5 & 93.1 & 92.8 & 3.9 & 4.4 & 93.5 & 91.5 & 94.0 & 94.1 & 94.2 & 92.2 \\ 
100 & 94.9 & 95.0 & 95.3 & 94.5 & 0.0 & 0.0 & 93.9 & 93.9 & 94.3 & 94.4 & 94.3 & 94.3 \\ 
200 & 95.3 & 95.5 & 95.1 & 95.0 & 0.0 & 0.0 & 93.6 & 93.8 & 94.5 & 94.4 & 94.4 & 94.1 \\ 
300 & 95.5 & 95.5 & 95.2 & 95.4 & 0.0 & 0.0 & 94.2 & 94.2 & 95.6 & 95.6 & 95.2 & 95.7 \\ 
500 & 95.1 & 95.1 & 95.2 & 95.1 & 0.0 & 0.0 & 93.4 & 93.1 & 95.2 & 95.3 & 94.7 & 94.4 \\

			\multicolumn{1}{c}{ML }	 & & & && && &&&&&\\				
50 & 40.8 & 41.5 & 41.0 & 40.5 & 15.2 & 15.4 & 20.0 & 19.9 & 16.5 & 16.8 & 17.0 & 16.9 \\ 
100 & 29.0 & 29.2 & 29.1 & 29.2 & 10.9 & 11.0 & 14.3 & 14.3 & 11.7 & 11.9 & 12.1 & 12.1 \\ 
200 & 20.5 & 20.6 & 20.6 & 20.7 & 7.7 & 7.7 & 10.1 & 10.1 & 8.4 & 8.4 & 8.6 & 8.6 \\ 
300 & 16.9 & 16.9 & 16.9 & 17.0 & 6.3 & 6.3 & 8.3 & 8.3 & 6.8 & 6.8 & 7.0 & 7.0 \\ 
500 & 13.1 & 13.1 & 13.1 & 13.1 & 4.9 & 4.9 & 6.4 & 6.4 & 5.3 & 5.3 & 5.5 & 5.5 \\ 
					
			\bottomrule
		\end{tabular*}
		\begin{tablenotes}	
			\item Note: \textbf{All the values of simulation results  are multiplied by $\bm{100}$}.   GEEN: the naive GEE method; ELN: the naive EL method; LIN: \cite{Lin2018Analysis}'s estimating equation method;  Proposed: the proposed EL method.
		\end{tablenotes}		
	\end{threeparttable}	
\end{table}

\begin{table}[!htp]
	\centering
	\begin{threeparttable}
		\caption{Bias, standard deviation (SD) and mean squared error (MSE) of the estimator when the measurement errors are generated by  way of C2} 		
		\label{table:estcase2}
		\setlength{\tabcolsep}{1pt}
		\begin{tabular*}{1.05\textwidth}{@{}r@{\extracolsep{\fill}}c@{\extracolsep{\fill}}c@{\extracolsep{\fill}}c@{\extracolsep{\fill}}c@{\extracolsep{\fill}}c@{\extracolsep{\fill}}c@{\extracolsep{\fill}}c@{\extracolsep{\fill}}c@{\extracolsep{\fill}}c@{\extracolsep{\fill}}c@{\extracolsep{\fill}}c@{\extracolsep{\fill}}c@{}}
			\toprule
			\multicolumn{1}{c}{$n$}&\multicolumn{4}{c}{${\beta}_{00}=1$}&\multicolumn{4}{c}{${\beta}_{01}=1$}&\multicolumn{4}{c}{${\beta}_{02}=1$}\\
			\cline{2-5} \cline{6-9}\cline{10-13}\\[-1.8em]
			& GEEN&ELN&LIN&Proposed&GEEN&ELN&LIN&Proposed&GEEN&ELN&LIN&Proposed   \\
			\midrule
			\multicolumn{1}{c}{Bias }	 & & & && && &&&&&\\			
50 & -0.57 & -0.57 & -0.52 & -0.41 & -36.72 & -36.72 & 1.32 & 0.78 & -0.07 & -0.08 & -0.07 & 0.12 \\ 
100 & -0.04 & -0.05 & -0.06 & -0.03 & -37.06 & -37.08 & 0.65 & 0.47 & 0.01 & 0.01 & 0.05 & 0.03 \\ 
200 & 0.02 & 0.02 & 0.06 & -0.01 & -36.90 & -36.90 & 0.50 & 0.43 & -0.06 & -0.06 & -0.06 & -0.08 \\ 
300 & 0.01 & 0.01 & 0.01 & 0.00 & -36.91 & -36.92 & 0.31 & 0.30 & -0.03 & -0.03 & -0.04 & -0.04 \\ 
500 & -0.01 & -0.01 & 0.01 & -0.03 & -37.08 & -37.11 & 0.13 & 0.12 & -0.01 & -0.00 & 0.04 & 0.02 \\ 
			
			\multicolumn{1}{c}{SD }	 & & & && && &&&&&\\
50 & 10.88 & 10.88 & 11.28 & 11.27 & 5.33 & 5.35 & 8.16 & 8.36 & 5.49 & 5.50 & 6.31 & 5.46 \\ 
100 & 7.50 & 7.51 & 7.83 & 7.70 & 3.86 & 3.90 & 5.85 & 5.73 & 3.69 & 3.70 & 4.17 & 3.54 \\ 
200 & 5.32 & 5.32 & 5.47 & 5.35 & 2.80 & 2.81 & 4.32 & 4.29 & 2.65 & 2.66 & 3.01 & 2.49 \\ 
300 & 4.47 & 4.47 & 4.62 & 4.47 & 2.24 & 2.27 & 3.35 & 3.30 & 1.98 & 1.98 & 2.32 & 1.96 \\ 
500 & 3.43 & 3.43 & 3.55 & 3.43 & 1.88 & 1.96 & 2.59 & 2.57 & 1.68 & 1.68 & 1.92 & 1.59 \\ 
			
			\multicolumn{1}{c}{MSE }	 & & & && && &&&&&\\
50 & 1.18 & 1.18 & 1.27 & 1.27 & 13.76 & 13.77 & 0.68 & 0.70 & 0.30 & 0.30 & 0.40 & 0.30 \\ 
100 & 0.56 & 0.56 & 0.61 & 0.59 & 13.88 & 13.90 & 0.35 & 0.33 & 0.14 & 0.14 & 0.17 & 0.13 \\ 
200 & 0.28 & 0.28 & 0.30 & 0.29 & 13.69 & 13.70 & 0.19 & 0.19 & 0.07 & 0.07 & 0.09 & 0.06 \\ 
300 & 0.20 & 0.20 & 0.21 & 0.20 & 13.67 & 13.68 & 0.11 & 0.11 & 0.04 & 0.04 & 0.05 & 0.04 \\ 
500 & 0.12 & 0.12 & 0.13 & 0.12 & 13.79 & 13.81 & 0.07 & 0.07 & 0.03 & 0.03 & 0.04 & 0.03 \\
			\bottomrule
		\end{tabular*}
		\begin{tablenotes}	
			\item Note: \textbf{All the values of simulation results  are multiplied by $\bm{100}$}.  GEEN: the naive GEE method; ELN: the naive EL method; LIN: \cite{Lin2018Analysis}'s estimating equation method;  Proposed: the proposed EL method.
		\end{tablenotes}		
	\end{threeparttable}	
\end{table}

\begin{table}[!htp]
	\centering
	\begin{threeparttable}
		\caption{Coverage probability (CP) and mean length (ML) of the $95\%$ confidence interval  when the measurement errors are generated by  way of C2} 		
		\label{table:cicase2}
		\setlength{\tabcolsep}{1pt}
		\begin{tabular*}{1.05\textwidth}{@{}r@{\extracolsep{\fill}}c@{\extracolsep{\fill}}c@{\extracolsep{\fill}}c@{\extracolsep{\fill}}c@{\extracolsep{\fill}}c@{\extracolsep{\fill}}c@{\extracolsep{\fill}}c@{\extracolsep{\fill}}c@{\extracolsep{\fill}}c@{\extracolsep{\fill}}c@{\extracolsep{\fill}}c@{\extracolsep{\fill}}c@{}}
			\toprule
			\multicolumn{1}{c}{$n$}&\multicolumn{4}{c}{${\beta}_{00}=1$}&\multicolumn{4}{c}{${\beta}_{01}=1$}&\multicolumn{4}{c}{${\beta}_{02}=1$}\\
			\cline{2-5} \cline{6-9}\cline{10-13}\\[-1.8em]
			& GEEN&ELN&LIN&Proposed&GEEN&ELN&LIN&Proposed&GEEN&ELN&LIN&Proposed   \\
			\midrule
			\multicolumn{1}{c}{CP }	 & & & && && &&&&&\\			
50 & 93.5 & 93.7 & 93.2 & 92.4 & 0.0 & 0.0 & 95.1 & 92.3 & 93.5 & 94.0 & 93.8 & 91.1 \\ 
100 & 95.0 & 95.1 & 95.2 & 94.6 & 0.0 & 0.0 & 95.2 & 94.3 & 95.1 & 95.2 & 94.2 & 93.9 \\ 
200 & 94.6 & 94.6 & 94.9 & 94.7 & 0.0 & 0.0 & 95.2 & 93.8 & 94.2 & 94.3 & 95.4 & 95.1 \\ 
300 & 95.2 & 95.0 & 95.5 & 95.4 & 0.0 & 0.0 & 95.3 & 94.2 & 95.7 & 95.8 & 95.6 & 96.1 \\ 
500 & 95.6 & 95.5 & 94.9 & 95.4 & 0.0 & 0.0 & 95.2 & 95.0 & 94.8 & 94.7 & 94.9 & 94.5 \\

			\multicolumn{1}{c}{ML }	 & & & && && &&&&&\\				
50 & 42.1 & 42.8 & 43.6 & 41.5 & 19.1 & 19.3 & 33.0 & 31.9 & 19.9 & 20.2 & 23.1 & 19.2 \\ 
100 & 29.9 & 30.2 & 30.9 & 29.8 & 14.0 & 14.2 & 23.2 & 22.8 & 14.1 & 14.2 & 16.3 & 13.7 \\ 
200 & 21.2 & 21.3 & 21.9 & 21.1 & 10.2 & 10.5 & 16.4 & 16.1 & 10.0 & 10.1 & 11.5 & 9.7 \\ 
300 & 17.4 & 17.4 & 17.9 & 17.3 & 8.4 & 8.6 & 13.3 & 13.1 & 8.2 & 8.3 & 9.4 & 7.9 \\ 
500 & 13.5 & 13.5 & 13.9 & 13.4 & 6.8 & 6.9 & 10.3 & 10.1 & 6.4 & 6.4 & 7.3 & 6.1 \\ 
			
			\bottomrule
		\end{tabular*}
		\begin{tablenotes}	
			\item Note:    \textbf{All the values of simulation results  are multiplied by $\bm{100}$}.   GEEN: the naive GEE method; ELN: the naive EL method; LIN: \cite{Lin2018Analysis}'s estimating equation method;  Proposed: the proposed EL method.
		\end{tablenotes}		
	\end{threeparttable}	
\end{table}

\begin{table}[!htp]
	\centering
	\begin{threeparttable}
		\caption{Bias, standard deviation (SD) and mean squared error (MSE) of the estimator when the measurement errors are generated by  way of C3} 		
		\label{table:estcase3}
		\setlength{\tabcolsep}{1pt}
		\begin{tabular*}{1.05\textwidth}{@{}r@{\extracolsep{\fill}}c@{\extracolsep{\fill}}c@{\extracolsep{\fill}}c@{\extracolsep{\fill}}c@{\extracolsep{\fill}}c@{\extracolsep{\fill}}c@{\extracolsep{\fill}}c@{\extracolsep{\fill}}c@{\extracolsep{\fill}}c@{\extracolsep{\fill}}c@{\extracolsep{\fill}}c@{\extracolsep{\fill}}c@{}}
			\toprule
			\multicolumn{1}{c}{$n$}&\multicolumn{4}{c}{${\beta}_{00}=1$}&\multicolumn{4}{c}{${\beta}_{01}=1$}&\multicolumn{4}{c}{${\beta}_{02}=1$}\\
			\cline{2-5} \cline{6-9}\cline{10-13}\\[-1.8em]
				& GEEN&ELN&LIN&Proposed&GEEN&ELN&LIN&Proposed&GEEN&ELN&LIN&Proposed   \\
			\midrule
			\multicolumn{1}{c}{Bias }	 & & & && && &&&&&\\			
50 & -0.46 & -0.46 & -0.43 & -0.22 & -10.59 & -10.59 & 0.35 & 0.20 & 0.03 & 0.03 & 0.05 & 0.10 \\ 
100 & 0.04 & 0.04 & 0.04 & 0.15 & -10.72 & -10.72 & 0.12 & -0.01 & 0.06 & 0.06 & 0.06 & 0.08 \\ 
200 & -0.01 & -0.01 & -0.01 & 0.02 & -10.75 & -10.75 & 0.02 & -0.01 & -0.06 & -0.06 & -0.07 & -0.04 \\ 
300 & 0.03 & 0.03 & 0.03 & 0.02 & -10.74 & -10.74 & 0.01 & 0.00 & -0.07 & -0.07 & -0.07 & -0.06 \\ 
500 & -0.05 & -0.05 & -0.05 & -0.06 & -10.70 & -10.70 & 0.04 & 0.05 & 0.02 & 0.02 & 0.02 & 0.02 \\ 
			
			\multicolumn{1}{c}{SD }	 & & & && && &&&&&\\
50 & 10.50 & 10.50 & 10.53 & 11.78 & 4.01 & 4.01 & 4.67 & 5.21 & 4.21 & 4.21 & 4.24 & 4.91 \\ 
100 & 7.31 & 7.31 & 7.35 & 7.80 & 2.82 & 2.82 & 3.25 & 3.47 & 2.89 & 2.89 & 2.95 & 3.15 \\ 
200 & 5.16 & 5.16 & 5.17 & 5.29 & 2.01 & 2.01 & 2.38 & 2.45 & 2.04 & 2.04 & 2.06 & 2.11 \\ 
300 & 4.31 & 4.31 & 4.33 & 4.39 & 1.56 & 1.56 & 1.83 & 1.87 & 1.63 & 1.63 & 1.66 & 1.68 \\ 
500 & 3.32 & 3.32 & 3.33 & 3.37 & 1.24 & 1.24 & 1.45 & 1.46 & 1.27 & 1.27 & 1.30 & 1.30 \\ 
			
			\multicolumn{1}{c}{MSE }	 & & & && && &&&&&\\
50 & 1.10 & 1.10 & 1.11 & 1.39 & 1.28 & 1.28 & 0.22 & 0.27 & 0.18 & 0.18 & 0.18 & 0.24 \\ 
100 & 0.53 & 0.53 & 0.54 & 0.61 & 1.23 & 1.23 & 0.11 & 0.12 & 0.08 & 0.08 & 0.09 & 0.10 \\ 
200 & 0.27 & 0.27 & 0.27 & 0.28 & 1.20 & 1.20 & 0.06 & 0.06 & 0.04 & 0.04 & 0.04 & 0.04 \\ 
300 & 0.19 & 0.19 & 0.19 & 0.19 & 1.18 & 1.18 & 0.03 & 0.03 & 0.03 & 0.03 & 0.03 & 0.03 \\ 
500 & 0.11 & 0.11 & 0.11 & 0.11 & 1.16 & 1.16 & 0.02 & 0.02 & 0.02 & 0.02 & 0.02 & 0.02 \\
			\bottomrule
		\end{tabular*}
		\begin{tablenotes}	
			\item Note: \textbf{All the values of simulation results  are multiplied by $\bm{100}$}.  GEEN: the naive GEE method; ELN: the naive EL method; LIN: \cite{Lin2018Analysis}'s estimating equation method;  Proposed: the proposed EL method.
		\end{tablenotes}		
	\end{threeparttable}	
\end{table}

\begin{table}[!htp]
	\centering
	\begin{threeparttable}
		\caption{Coverage probability (CP) and mean length (ML) of the $95\%$ confidence interval when the measurement errors are generated by  way of C3} 		
		\label{table:cicase3}
		\setlength{\tabcolsep}{1pt}
		\begin{tabular*}{1.05\textwidth}{@{}r@{\extracolsep{\fill}}c@{\extracolsep{\fill}}c@{\extracolsep{\fill}}c@{\extracolsep{\fill}}c@{\extracolsep{\fill}}c@{\extracolsep{\fill}}c@{\extracolsep{\fill}}c@{\extracolsep{\fill}}c@{\extracolsep{\fill}}c@{\extracolsep{\fill}}c@{\extracolsep{\fill}}c@{\extracolsep{\fill}}c@{}}
			\toprule
			\multicolumn{1}{c}{$n$}&\multicolumn{4}{c}{${\beta}_{00}=1$}&\multicolumn{4}{c}{${\beta}_{01}=1$}&\multicolumn{4}{c}{${\beta}_{02}=1$}\\
			\cline{2-5} \cline{6-9}\cline{10-13}\\[-1.8em]
			& GEEN&ELN&LIN&Proposed&GEEN&ELN&LIN&Proposed&GEEN&ELN&LIN&Proposed   \\
			\midrule
			\multicolumn{1}{c}{CP }	 & & & && && &&&&&\\			
50 & 93.8 & 93.8 & 93.5 & 87.4 & 21.9 & 22.9 & 93.4 & 86.3 & 92.8 & 93.1 & 92.8 & 84.8 \\ 
100 & 94.7 & 94.9 & 94.4 & 93.1 & 2.5 & 2.9 & 94.0 & 91.8 & 94.4 & 94.6 & 94.4 & 91.9 \\ 
200 & 95.3 & 95.3 & 95.1 & 94.9 & 0.0 & 0.0 & 94.0 & 93.4 & 94.7 & 94.6 & 94.3 & 93.5 \\ 
300 & 95.1 & 95.2 & 95.2 & 94.9 & 0.0 & 0.0 & 95.2 & 94.6 & 94.9 & 95.0 & 94.8 & 94.2 \\ 
500 & 94.7 & 94.6 & 94.6 & 94.6 & 0.0 & 0.0 & 94.3 & 94.3 & 94.9 & 94.9 & 94.6 & 95.0 \\

			\multicolumn{1}{c}{ML }	 & & & && && &&&&&\\			
50 & 40.5 & 41.2 & 40.6 & 36.8 & 14.8 & 15.1 & 17.4 & 15.7 & 15.8 & 16.0 & 16.0 & 14.5 \\ 
100 & 28.8 & 29.0 & 28.8 & 28.3 & 10.6 & 10.7 & 12.4 & 12.2 & 11.2 & 11.3 & 11.4 & 11.1 \\ 
200 & 20.4 & 20.5 & 20.4 & 20.4 & 7.5 & 7.5 & 8.8 & 8.8 & 7.9 & 8.0 & 8.1 & 8.0 \\ 
300 & 16.8 & 16.8 & 16.8 & 16.8 & 6.2 & 6.2 & 7.2 & 7.2 & 6.5 & 6.5 & 6.6 & 6.6 \\ 
500 & 13.0 & 13.0 & 13.0 & 13.0 & 4.8 & 4.8 & 5.6 & 5.6 & 5.0 & 5.1 & 5.1 & 5.1 \\ 
			\bottomrule
		\end{tabular*}
		\begin{tablenotes}	
			\item Note: \textbf{All the values of simulation results  are multiplied by $\bm{100}$}.   GEEN: the naive GEE method; ELN: the naive EL method; LIN: \cite{Lin2018Analysis}'s estimating equation method;  Proposed: the proposed EL method.
		\end{tablenotes}		
	\end{threeparttable}	
\end{table}

\begin{table}[!htp]
	\centering
	\begin{threeparttable}
		\caption{Bias, standard deviation (SD) and mean squared error (MSE) of the estimator  when the measurement errors are generated by  way of C4} 		
		\label{table:estcase4}
		\begin{tabular*}{1.05\textwidth}{@{}r@{\extracolsep{\fill}}c@{\extracolsep{\fill}}c@{\extracolsep{\fill}}c@{\extracolsep{\fill}}c@{\extracolsep{\fill}}c@{\extracolsep{\fill}}c@{\extracolsep{\fill}}c@{\extracolsep{\fill}}c@{\extracolsep{\fill}}c@{\extracolsep{\fill}}c@{\extracolsep{\fill}}c@{\extracolsep{\fill}}c@{}}
			\toprule
			\multicolumn{1}{c}{$n$}&\multicolumn{4}{c}{${\beta}_{00}=1$}&\multicolumn{4}{c}{${\beta}_{01}=1$}&\multicolumn{4}{c}{${\beta}_{02}=1$}\\
			\cline{2-5} \cline{6-9}\cline{10-13}\\[-1.8em]
			& GEEN&ELN&LIN&Proposed&GEEN&ELN&LIN&Proposed&GEEN&ELN&LIN&Proposed   \\
			\midrule
			\multicolumn{1}{c}{Bias }	 & & & && && &&&&&\\			
50 & -0.51 & -0.51 & -0.47 & -0.29 & -22.09 & -22.09 & 0.81 & 0.41 & 0.05 & 0.05 & 0.05 & 0.15 \\ 
100 & -0.01 & -0.02 & -0.01 & 0.12 & -22.52 & -22.52 & 0.22 & 0.02 & 0.16 & 0.16 & 0.19 & 0.19 \\ 
200 & 0.03 & 0.03 & 0.05 & 0.01 & -22.43 & -22.43 & 0.13 & 0.08 & -0.09 & -0.09 & -0.09 & -0.12 \\ 
300 & 0.01 & 0.01 & 0.01 & -0.03 & -22.37 & -22.38 & 0.11 & 0.04 & -0.02 & -0.02 & -0.03 & -0.00 \\ 
500 & -0.02 & -0.03 & -0.01 & -0.05 & -22.45 & -22.47 & 0.09 & 0.04 & 0.05 & 0.05 & 0.08 & 0.08 \\
			
			\multicolumn{1}{c}{SD }	 & & & && && &&&&&\\
50 & 10.65 & 10.65 & 10.79 & 12.00 & 4.63 & 4.63 & 5.47 & 5.30 & 4.69 & 4.68 & 5.02 & 4.92 \\ 
100 & 7.35 & 7.35 & 7.47 & 7.61 & 3.34 & 3.34 & 4.02 & 3.61 & 3.14 & 3.14 & 3.35 & 3.15 \\ 
200 & 5.23 & 5.23 & 5.28 & 5.34 & 2.46 & 2.47 & 2.98 & 2.57 & 2.32 & 2.32 & 2.42 & 2.21 \\ 
300 & 4.36 & 4.36 & 4.40 & 4.35 & 1.99 & 1.99 & 2.37 & 2.01 & 1.86 & 1.86 & 2.00 & 1.79 \\ 
500 & 3.36 & 3.36 & 3.40 & 3.36 & 1.56 & 1.64 & 1.80 & 1.52 & 1.47 & 1.48 & 1.56 & 1.36 \\ 
			
			\multicolumn{1}{c}{MSE }	 & & & && && &&&&&\\
50 & 1.14 & 1.14 & 1.16 & 1.44 & 5.09 & 5.09 & 0.31 & 0.28 & 0.22 & 0.22 & 0.25 & 0.24 \\ 
100 & 0.54 & 0.54 & 0.56 & 0.58 & 5.18 & 5.18 & 0.16 & 0.13 & 0.10 & 0.10 & 0.11 & 0.10 \\ 
200 & 0.27 & 0.27 & 0.28 & 0.29 & 5.09 & 5.09 & 0.09 & 0.07 & 0.05 & 0.05 & 0.06 & 0.05 \\ 
300 & 0.19 & 0.19 & 0.19 & 0.19 & 5.04 & 5.05 & 0.06 & 0.04 & 0.03 & 0.03 & 0.04 & 0.03 \\ 
500 & 0.11 & 0.11 & 0.12 & 0.11 & 5.06 & 5.08 & 0.03 & 0.02 & 0.02 & 0.02 & 0.02 & 0.02 \\ 
			\bottomrule
		\end{tabular*}
		\begin{tablenotes}	
			\item Note: \textbf{All the values of simulation results  are multiplied by $\bm{100}$}.  GEEN: the naive GEE method; ELN: the naive EL method; LIN: \cite{Lin2018Analysis}'s estimating equation method;  Proposed: the proposed EL method.
		\end{tablenotes}		
	\end{threeparttable}	
\end{table}

\begin{table}[!htp]
	\centering
	\begin{threeparttable}
		\caption{Coverage probability (CP) and mean length (ML) of the $95\%$ confidence interval when the measurement errors are generated by way of C4} 		
		\label{table:cicase4}
		\setlength{\tabcolsep}{1pt}
		\begin{tabular*}{1.05\textwidth}{@{}r@{\extracolsep{\fill}}c@{\extracolsep{\fill}}c@{\extracolsep{\fill}}c@{\extracolsep{\fill}}c@{\extracolsep{\fill}}c@{\extracolsep{\fill}}c@{\extracolsep{\fill}}c@{\extracolsep{\fill}}c@{\extracolsep{\fill}}c@{\extracolsep{\fill}}c@{\extracolsep{\fill}}c@{\extracolsep{\fill}}c@{}}
			\toprule
			\multicolumn{1}{c}{$n$}&\multicolumn{4}{c}{${\beta}_{00}=1$}&\multicolumn{4}{c}{${\beta}_{01}=1$}&\multicolumn{4}{c}{${\beta}_{02}=1$}\\
			\cline{2-5} \cline{6-9}\cline{10-13}\\[-1.8em]
			& GEEN&ELN&LIN&Proposed&GEEN&ELN&LIN&Proposed&GEEN&ELN&LIN&Proposed   \\
			\midrule
			\multicolumn{1}{c}{CP }	 & & & && && &&&&&\\				
50 & 93.5 & 93.5 & 93.0 & 87.9 & 0.2 & 0.2 & 95.6 & 86.9 & 94.6 & 95.0 & 93.9 & 84.9 \\ 
100 & 94.9 & 94.9 & 95.1 & 93.2 & 0.0 & 0.0 & 95.8 & 91.5 & 95.3 & 95.2 & 94.9 & 91.7 \\ 
200 & 94.8 & 94.5 & 94.3 & 93.6 & 0.0 & 0.0 & 94.2 & 92.6 & 94.5 & 94.6 & 95.0 & 94.3 \\ 
300 & 95.3 & 95.3 & 95.3 & 94.9 & 0.0 & 0.0 & 93.9 & 93.9 & 94.5 & 94.7 & 94.9 & 94.3 \\ 
500 & 94.9 & 94.7 & 94.9 & 94.4 & 0.0 & 0.0 & 95.0 & 94.9 & 94.4 & 94.3 & 93.9 & 94.9 \\

			\multicolumn{1}{c}{ML }	 & & & && && &&&&&\\					
50 & 41.3 & 41.9 & 41.7 & 37.2 & 17.4 & 17.7 & 22.3 & 16.6 & 17.7 & 18.0 & 18.9 & 14.8 \\ 
100 & 29.3 & 29.6 & 29.6 & 28.3 & 12.6 & 12.8 & 15.8 & 12.8 & 12.5 & 12.7 & 13.4 & 11.4 \\ 
200 & 20.8 & 20.9 & 21.0 & 20.4 & 9.1 & 9.3 & 11.2 & 9.2 & 8.9 & 9.0 & 9.5 & 8.2 \\ 
300 & 17.0 & 17.1 & 17.2 & 16.8 & 7.5 & 7.6 & 9.1 & 7.6 & 7.3 & 7.3 & 7.7 & 6.8 \\ 
500 & 13.2 & 13.2 & 13.3 & 13.0 & 5.9 & 6.0 & 7.1 & 5.9 & 5.7 & 5.7 & 6.0 & 5.3 \\
			\bottomrule
		\end{tabular*}
		\begin{tablenotes}	
			\item Note:  \textbf{All the values of simulation results  are multiplied by $\bm{100}$}. GEEN: the naive GEE method; ELN: the naive EL method; LIN: \cite{Lin2018Analysis}'s estimating equation method;  Proposed: the proposed EL method.
		\end{tablenotes}		
	\end{threeparttable}	
\end{table}

\begin{table}[!htp]
	\centering
	\begin{threeparttable}
		\caption{Average computation time (seconds) for one replication}
		\label{table:time}
		\begin{tabular*}{\textwidth}{@{}l@{\extracolsep{\fill}}c@{\extracolsep{\fill}}c@{\extracolsep{\fill}}c@{\extracolsep{\fill}}c@{\extracolsep{\fill}}c@{\extracolsep{\fill}}c@{}}
			\toprule
		Way	&Method&50&100&200&300&500\\
			\midrule
			\multirow{4}{*}{C1}
&GEEN & 0.02 & 0.05 & 0.08 & 0.12 & 0.21 \\ 
&ELN & 96.89 & 132.54 & 186.17 & 234.11 & 379.63 \\ 
&LIN & 0.03 & 0.06 & 0.11 & 0.15 & 0.26 \\ 
&Proposed & 134.72 & 169.73 & 228.88 & 302.45 & 475.25 \\ 
\hline
\multirow{4}{*}{C2}
&GEEN & 0.02 & 0.03 & 0.08 & 0.12 & 0.20 \\ 
&ELN & 83.43 & 120.92 & 187.40 & 255.87 & 387.77 \\ 
&LIN & 0.02 & 0.05 & 0.10 & 0.15 & 0.26 \\ 
&Proposed & 119.65 & 147.34 & 219.67 & 300.61 & 452.90 \\ 
\hline
\multirow{4}{*}{C3}
&GEEN & 0.02 & 0.04 & 0.08 & 0.13 & 0.21 \\ 
&ELN & 88.98 & 116.06 & 167.73 & 213.84 & 336.93 \\ 
&LIN & 0.07 & 0.13 & 0.26 & 0.41 & 0.69 \\ 
&Proposed & 254.58 & 327.10 & 470.43 & 668.72 & 1097.78 \\
\hline
\multirow{4}{*}{C4}
&GEEN & 0.02 & 0.04 & 0.08 & 0.12 & 0.21 \\ 
&ELN & 91.90 & 114.59 & 166.00 & 224.82 & 383.31 \\ 
&LIN & 0.07 & 0.13 & 0.25 & 0.38 & 0.68 \\ 
&Proposed & 263.33 & 311.48 & 457.86 & 634.20 & 1094.20 \\ 
			\bottomrule
		\end{tabular*}
		\begin{tablenotes}	
			\item Note:  GEEN: the naive GEE method; ELN: the naive EL method; LIN: \cite{Lin2018Analysis}'s estimating equation method;  Proposed: the proposed EL method.
		\end{tablenotes}		
	\end{threeparttable}	
\end{table}

\section{Application to the LEAN study}
\label{s:realdata}
We apply the proposed method to the Lifestyle Education for Activity and Nutrition (LEAN) study \citep{Barry2011Using}. The intervention strategy was providing the participants with   a group-based behavioural weight loss program or/and  the SenseWear platform which could help improve lifestyle self-monitoring \citep{Shuger2011Electronic}. The data of age, gender, race, and education level were collected at baseline. Body weight, height, systolic blood pressure (SBP), and diastolic blood pressure (DBP) were measured at baseline, month $4$ and month $9$. Among the $197$ participants, $74$  participants failed to complete the month $4$ or/and month $9$ assessments.  For convenience, we exclude them from the current study.

The main interest of this study is to assess whether the intervention strategy is effective in reducing the BMI value at months $4$ and $9$. The response variable BMI is  calculated by $\mathrm{log(weight/height^{2}\times 703)}$. The main exposure variable is intervention, which is denoted as ``group''. Besides, it takes the value of $1$ for the intervention group and takes the value of $0$ for the standard care group. Two dummy variables $t_{1}$ and $t_{2}$ are introduced to represent the assessment time. That is, $t_{1}=t_{2}=0$ for baseline, $t_{1}=1$ and $t_{2}=0$ for month $4$, and $t_{1}=0$ and $t_{2}=1$ for month $9$. To reveal the effects of different groups at different  time points, we  include the interaction terms between group and time as covariates. Other covariates considered are SBP, DBP, age, gender (female, 1; male, 0), race (African American, 1; others, 0), and education level (four-year college or higher, 1; others, 0). To make the covariates have similar scales,  the values of SBP, DBP, and  age are divided by $100$. Before analysis, all the variables are centralized by subtracting their mean values. The following linear regression model is adopted to fit the LEAN data set:
\begin{equation}
	\begin{aligned}
		Y=&\mathrm{SBP} \, \beta_{1}+\mathrm{DBP} \, \beta_{2}+\mathrm{AGE} \, \beta_{3}+\mathrm{gender} \, \beta_{4}+\mathrm{race} \, \beta_{5}+\mathrm{education} \, \beta_{6}\\
		&+\mathrm{group} \, \beta_{7}+t_{1}\beta_{8}+t_{2}\beta_{9}+(\mathrm{group}  \times t_{1})\,\beta_{10}+(\mathrm{group} \times t_{2})\,\beta_{11}+\varepsilon.
	\end{aligned}
	\label{eq:rdlrm}
\end{equation}

As mentioned in Section~\ref{s:intro}, there exist measurement errors in the covariates SBP and DBP, and the replicate measurement errors of SBP follow different distributions, as does DBP. Therefore, it is necessary to consider the distributional difference of measurement errors to obtain an  efficient estimator. Table~\ref{table:3} displays the estimates of regression coefficients and confidence intervals by using different methods, including  the naive GEE method  \citep{Liang1986Longitudinal}, the naive EL method \citep{Qin1994Empirical},  \cite{Lin2018Analysis}'s estimating equation method, and the proposed EL method.  It can be found that the effect of intervention at month $9$ is significantly different from zero at the significance level of $0.05$ based on the naive GEE method, \cite{Lin2018Analysis}'s estimating equation method, and the proposed EL method. Besides, all these three methods show that intervention at month $9$ is negatively related to BMI, which means that the intervention  strategy can significantly reduce BMI at month $9$. This conclusion is consistent with the finding of LEAN's study group \citep{Shuger2011Electronic}. In addition, all the methods show that SBP is significantly positively related to BMI. Besides,  we find that the result of the naive GEE method is similar to that of  the  naive EL method. But the differences between these two naive methods and the proposed EL method are relatively large, which may be due to the effects of measurement errors. There are  some differences in the estimates of  \cite{Lin2018Analysis}'s estimating equation method and the proposed EL method. For example, the effect of  $t_{2}$ is significantly different from zero at the significance level of $0.05$ based on the proposed EL method, while it is not  based on \cite{Lin2018Analysis}'s estimating equation method. This is perhaps because the confidence intervals for these two methods are constructed in different ways and the confidence intervals of the proposed method are shorter. In general, the proposed method has the shortest average length of  confidence intervals. Therefore, the result of the proposed method is  recommended.

\begin{sidewaystable}[!htp]
	\centering
	\linespread{1.3}
	\begin{threeparttable}
		\caption{Estimates of regression coefficients and the 95\% confidence intervals in the analysis of the LEAN data set}
		\label{table:3}
		\begin{tabular*}{1.05\textwidth}{@{}l@{\extracolsep{\fill}}c@{\extracolsep{\fill}}c@{\extracolsep{\fill}}c@{\extracolsep{\fill}}c@{\extracolsep{\fill}}c@{\extracolsep{\fill}}c@{\extracolsep{\fill}}c@{\extracolsep{\fill}}c@{\extracolsep{\fill}}c@{\extracolsep{\fill}}c@{\extracolsep{\fill}}c@{\extracolsep{\fill}}c@{\extracolsep{\fill}}c@{\extracolsep{\fill}}c@{\extracolsep{\fill}}c@{\extracolsep{\fill}}c@{}}
			\hline
			\multicolumn{1}{c}{}&\multicolumn{4}{c}{GEEN}&\multicolumn{4}{c}{ELN}&\multicolumn{4}{c}{LIN}&\multicolumn{4}{c}{Proposed}\\[1ex]
			&Coef&Lower&Upper& CL&Coef&Lower&Upper& CL&Coef&Lower&Upper& CL&Coef&Lower&Upper& CL\\\hline
			
			SBP	&	$10.22^{\ast}$ 	&	2.90 	&	17.53 	&	14.63 	&	$9.57^{\ast}$ 	&	2.34 	&	17.30 	&	14.96 	&	$8.43^{\ast}$ 	&	0.78 	&	16.08 	&	15.30 	&	$9.16^{\ast}$ 	&	3.20 	&	11.29 	&	8.09 	\\
			DBP	&	5.92 	&	-4.86 	&	16.70 	&	21.56 	&	5.32 	&	-5.48 	&	16.19 	&	21.67 	&	4.09 	&	-7.56 	&	15.73 	&	23.29 	&	2.47 	&	-4.43 	&	4.51 	&	8.94 	\\
			AGE	&	-2.47 	&	-26.65 	&	21.71 	&	48.36 	&	-2.32 	&	-26.40 	&	22.18 	&	48.58 	&	-2.08 	&	-26.43 	&	22.27 	&	48.71 	&	-5.67 	&	-29.29 	&	18.05 	&	47.34 	\\
			gender	&	0.68 	&	-5.55 	&	6.91 	&	12.46 	&	0.62 	&	-5.93 	&	6.69 	&	12.62 	&	0.53 	&	-5.71 	&	6.76 	&	12.47 	&	2.08 	&	-3.52 	&	7.55 	&	11.07 	\\
			race	&	3.90 	&	-2.05 	&	9.85 	&	11.90 	&	3.97 	&	-1.98 	&	9.95 	&	11.93 	&	4.11 	&	-1.87 	&	10.08 	&	11.95 	&	5.08 	&	-1.09 	&	10.80 	&11.89 	\\
			education	&	-5.95 	&	-11.94 	&	0.05 	&	11.99 	&	-5.98 	&	-11.88 	&	0.26 	&	12.14 	&	-6.05 	&	-12.10 	&	0.00 	&	12.10 	&	-6.07 	&	-12.73 	&	0.54 	&	13.27 	\\
			group	&	-2.03 	&	-8.42 	&	4.35 	&	12.76 	&	-2.04 	&	-8.55 	&	4.20 	&	12.75 	&	-2.05 	&	-8.47 	&	4.37 	&	12.84 	&	-2.42 	&	-8.65 	&	3.85 	&12.50 	\\
			t1	&	-1.62 	&	-3.28 	&	0.03 	&	3.32 	&	$-1.62^{\ast}$ 	&	-3.76 	&	-0.25 	&	3.51 	&	$-1.62^{\ast}$	&	-3.22 	&	-0.03 	&	3.19 	&	$-1.13^{\ast}$ 	&	-1.75 	&	-0.01 	&	1.74 	\\
			t2	&	-1.62 	&	-3.71 	&	0.47 	&	4.18 	&	-1.63 	&	-4.26 	&	0.30 	&	4.56 	&	-1.64 	&	-3.71 	&	0.43 	&	4.15 	&	$-0.93^{\ast}$ 	&	-2.64 	&	-0.12 	&	2.52 	\\
			group*t1	&	-0.92 	&	-2.91 	&	1.07 	&	3.98 	&	-0.95 	&	-2.76 	&	1.41 	&	4.17 	&	-1.00 	&	-2.94 	&	0.94 	&3.89 	&	-1.43 	&	-2.07 	&	0.28 	&	2.35 	\\
			group*t2	&	$-2.73^{\ast}$ 	&	-5.39 	&	-0.08 	&	5.31 	&	-2.75 	&	-5.38 	&	0.24 	&	5.62 	&	$-2.78^{\ast}$ 	&	-5.43 	&	-0.13 	&	5.30 	&	$-3.24^{\ast}$ 	&	-5.38 	&	-2.14 	&	3.24 	\\
			\hline	 			
		\end{tabular*}
		\begin{tablenotes}	
			\item Note: \textbf{All the values of results are multiplied by $\bm{100}$}. GEEN: the naive GEE method; ELN: the naive EL method; LIN: \cite{Lin2018Analysis}'s estimating equation method;  Proposed: the proposed EL method; Coef: the estimate of regression coefficient; Lower: the lower bound of confidence interval; Upper: the upper bound of confidence interval; CL: the length of confidence interval; An asterisk ($^\ast$)  indicates that the effect is significant at the level of $\alpha=0.05$.
		\end{tablenotes}	
	\end{threeparttable}	
	
\end{sidewaystable}

\section{Conclusion and discussion}
\label{s:discussion}
In this paper, we propose a new method for  analysis of  longitudinal data with replicate covariate measurement errors based on the empirical likelihood estimator, where we  use the independence between replicate measurement errors to construct an unbiased auxiliary random vector.  When some elements in the original auxiliary random vector are functionally dependent or have some inner relationships, the reduced auxiliary random vector is introduced to define the profile empirical likelihood ratio function. The proposed method has the following advantages. Firstly, the proposed empirical likelihood estimator is asymptotically at least as efficient as the estimator of  \cite{Lin2018Analysis}, and under some moment conditions, the proposed estimator is strictly more efficient. Secondly, it provides a method that can deal with the problem where there are more than two replicate measurements at each assessment time.  Thirdly, it enjoys all the good properties of the empirical likelihood method. According to the simulation results, the proposed method is not sensitive to the measurement errors in the covariate. In addition, it is more efficient than \cite{Lin2018Analysis}'s method when the number of subjects is not too small. Due to the flexibility in accounting for the distributional difference of measurement errors in replicate measurements, we recommend using the proposed method in practical problems.

Further extension of the proposed method may be of interest.  One may consider how to extend this method to other models, such as the partially linear model and non-linear model. One may also consider how to improve the performance of the proposed method when the number of subjects is very small \citep{chen2008adjusted}.

\section*{Acknowledgements}
We would like to thank the Co-Editor, Dr. Byeong U. Park, an Associate Editor, and two referees for very helpful suggestions, which led to substantial improvements of the paper. We gratefully acknowledge Dr. Xuemei Sui of the University of South Carolina for providing the LEAN data set.  This work was  supported by the National Natural Science Foundation of China [11871164, 11671096, 11731011, 12071087].

	\thispagestyle{empty}
\bibliographystyle{apalike}
\bibliography{empirical_newest}

\clearpage

\appendix

\setcounter{theorem}{0}
\setcounter{equation}{0}
\setcounter{figure}{0}
\setcounter{table}{0}
\setcounter{section}{0}

\def\thesection{Appendix \Alph{section}}

\def\thetheorem{\Alph{section}.\arabic{theorem}}

\renewcommand{\theequation}{A\arabic{equation}}

%
%


\section{Construction of the reduced  auxiliary random vector in a toy example}
\label{ss:construction}
Now we use a toy example to introduce how to construct the reduced auxiliary random vector.  Assume there are two covariates $X_{ij,1}$ and $X_{ij,2}$, where $X_{ij,1}$ is measured with error and there are three replicate measurements of $X_{ij,1}$. Denote the surrogate values of $X_{ij,1}$ as $W_{ij(k),1}$, $k=1,2,3$. We further assume the response variable and  covariates follow the following linear regression model:
\[
Y_{ij}=\beta_{00}+X_{ij,1}\beta_{01}+X_{ij,2}\beta_{02}+\varepsilon_{ij}, \quad i=1,\cdots,n, \quad j=1,\cdots,m_{i}.
\]
Therefore, $\bm{W}_{ij(k)}=(1,W_{ij(k),1},X_{ij,2} )^\top$ and $\bm{\beta}_{0}=(\beta_{00},\beta_{01},\beta_{02})^\top$.  According to the definition in Section~\ref{s:proposedmethod},
\[
\begin{aligned}
\bm{g}_{i}(\bm{\beta})&=\begin{pmatrix}\mathbf{W}_{i(1)}^\top\bm{\Sigma}_{i}^{-1}(\bm{Y}_{i}-\mathbf{W}_{i(2)}\bm{\beta}) \\ \mathbf{W}_{i(2)}^\top\bm{\Sigma}_{i}^{-1}(\bm{Y}_{i}-\mathbf{W}_{i(1)}\bm{\beta}) \\ \mathbf{W}_{i(1)}^\top\bm{\Sigma}_{i}^{-1}(\bm{Y}_{i}-\mathbf{W}_{i(3)}\bm{\beta}) \\ \mathbf{W}_{i(3)}^\top\bm{\Sigma}_{i}^{-1}(\bm{Y}_{i}-\mathbf{W}_{i(1)}\bm{\beta}) \\ \mathbf{W}_{i(2)}^\top\bm{\Sigma}_{i}^{-1}(\bm{Y}_{i}-\mathbf{W}_{i(3)}\bm{\beta}) \\  \mathbf{W}_{i(3)}^\top\bm{\Sigma}_{i}^{-1}(\bm{Y}_{i}-\mathbf{W}_{i(2)}\bm{\beta}) \end{pmatrix},
\end{aligned}
\]
where
\[
\begin{aligned}
\mathbf{W}_{i(k_{1})}^\top\bm{\Sigma}_{i}^{-1}(\bm{Y}_{i}-\mathbf{W}_{i(k_{2})}\bm{\beta})&=\begin{pmatrix}\mathbf{1}_{i}^\top\bm{\Sigma}_{i}^{-1}(\bm{Y}_{i}-\mathbf{W}_{i(k_{2})}\bm{\beta}) \\ \bm{W}_{i(k_{1}),1}^\top\bm{\Sigma}_{i}^{-1}(\bm{Y}_{i}-\mathbf{W}_{i(k_{2})}\bm{\beta}) \\ \bm{X}_{i,2}^\top\bm{\Sigma}_{i}^{-1}(\bm{Y}_{i}-\mathbf{W}_{i(k_{2})}\bm{\beta})  \end{pmatrix},\quad k_{1}\neq k_{2},
\end{aligned}
\]
with $\mathbf{1}_{i}$ being an $m_i$-dimensional vector with all the elements being one,  $\bm{X}_{i,2}=(X_{i1,2},\ldots,X_{im_i,2})^\top$, and $\bm{W}_{i(k_{1}),1}=(W_{i1(k_1),1},\ldots,W_{im_i(k_1),1})^\top$. It is obvious that for a fixed value of $k_{2}$, the values of $\mathbf{1}_{i}^\top\bm{\Sigma}_{i}^{-1}(\bm{Y}_{i}-\mathbf{W}_{i(k_{2})}\bm{\beta})$ are the same for different choices of $k_{1}$, it is also the case for $\bm{X}_{i,2}^\top\bm{\Sigma}_{i}^{-1}(\bm{Y}_{i}-\mathbf{W}_{i(k_{2})}\bm{\beta})$.  As a result,  the matrix  $\mathrm{E}\{\bm{g}_{i}(\bm{\beta}_{0})\bm{g}_{i}(\bm{\beta}_{0})^\top\}$ is not  invertible and the estimated value of $\bm{\beta}_{0}$ is unavailable. To solve this problem, we need to eliminate the duplicate elements in $\bm{g}_{i}(\bm{\beta})$. Besides, there  is also a potential inner relationship among $\big\{\bm{W}_{i(k_{1}),1}^\top\bm{\Sigma}_{i}^{-1}(\bm{Y}_{i}-\mathbf{W}_{i(k_{2})}\bm{\beta}),k_{1}\neq k_{2}\big\}$. Denote
\[
\begin{aligned}
\tilde{\bm{g}}_{i}(\bm{\beta})&=\begin{pmatrix}\bm{W}_{i(1),1}^\top\bm{\Sigma}_{i}^{-1}(\bm{Y}_{i}-\mathbf{W}_{i(2)}\bm{\beta}) \\\bm{W}_{i(2),1}^\top\bm{\Sigma}_{i}^{-1}(\bm{Y}_{i}-\mathbf{W}_{i(1)}\bm{\beta}) \\ \bm{W}_{i(1),1}^\top\bm{\Sigma}_{i}^{-1}(\bm{Y}_{i}-\mathbf{W}_{i(3)}\bm{\beta}) \\ \bm{W}_{i(3),1}^\top\bm{\Sigma}_{i}^{-1}(\bm{Y}_{i}-\mathbf{W}_{i(1)}\bm{\beta}) \\ \bm{W}_{i(2),1}^\top\bm{\Sigma}_{i}^{-1}(\bm{Y}_{i}-\mathbf{W}_{i(3)}\bm{\beta}) \\ \bm{W}_{i(3),1}^\top\bm{\Sigma}_{i}^{-1}(\bm{Y}_{i}-\mathbf{W}_{i(2)}\bm{\beta}) \end{pmatrix},
\end{aligned}
\]
and $\mathbf{A}=\mathrm{E}\{\tilde{\bm{g}}_{i}(\bm{\beta}_{0})\tilde{\bm{g}}_{i}(\bm{\beta}_{0})^\top\}$. Let $\bm{A}_{j}$ denote the $j$th row vector of $\mathbf{A}$, $j=1,\cdots,6$. By calculation, we can find that $\bm{A}_{1}+\bm{A}_{4}+\bm{A}_{5}=\bm{A}_{2}+\bm{A}_{3}+\bm{A}_{6}$. Therefore, $\mathbf{A}$ is not invertible. Through performing row transformation on the matrix $\mathrm{E}\{\bm{g}_{i}(\bm{\beta}_{0})\bm{g}_{i}(\bm{\beta}_{0})^\top\}$, $\mathbf{A}$ can be one block of the transformed matrix. As a result, if  $\mathbf{A}$ is not invertible, then $\mathrm{E}\{\bm{g}_{i}(\bm{\beta}_{0})\bm{g}_{i}(\bm{\beta}_{0})^\top\}$ will also be not invertible. Therefore, to make $\mathrm{E}\{\bm{g}_{i}(\bm{\beta}_{0})\bm{g}_{i}(\bm{\beta}_{0})^\top\}$ invertible, we  need to eliminate  one element in $\tilde{\bm{g}}_{i}(\bm{\beta})$ from $\tilde{\bm{g}}_{i}(\bm{\beta})$. After eliminating the duplicate elements in $\bm{g}_{i}(\bm{\beta})$ and one element in $\tilde{\bm{g}}_{i}(\bm{\beta})$, we can get the reduced auxiliary random vector $\bm{g}_{i}^{*}(\bm{\beta})$ and
\[
\begin{aligned}
\bm{g}_{i}^{*}(\bm{\beta})&=\begin{pmatrix}\mathbf{1}_{i}^\top\bm{\Sigma}_{i}^{-1}(\bm{Y}_{i}-\mathbf{W}_{i(2)}\bm{\beta}) \\

\bm{W}_{i(1),1}^\top\bm{\Sigma}_{i}^{-1}(\bm{Y}_{i}-\mathbf{W}_{i(2)}\bm{\beta}) \\

\bm{X}_{i,2}^\top\bm{\Sigma}_{i}^{-1}(\bm{Y}_{i}-\mathbf{W}_{i(2)}\bm{\beta})\\

\mathbf{1}_{i}^\top\bm{\Sigma}_{i}^{-1}(\bm{Y}_{i}-\mathbf{W}_{i(1)}\bm{\beta}) \\

\bm{W}_{i(2),1}^\top\bm{\Sigma}_{i}^{-1}(\bm{Y}_{i}-\mathbf{W}_{i(1)}\bm{\beta}) \\

\bm{X}_{i,2}^\top\bm{\Sigma}_{i}^{-1}(\bm{Y}_{i}-\mathbf{W}_{i(1)}\bm{\beta})\\

\mathbf{1}_{i}^\top\bm{\Sigma}_{i}^{-1}(\bm{Y}_{i}-\mathbf{W}_{i(3)}\bm{\beta}) \\

\bm{W}_{i(1),1}^\top\bm{\Sigma}_{i}^{-1}(\bm{Y}_{i}-\mathbf{W}_{i(3)}\bm{\beta}) \\

\bm{X}_{i,2}^\top\bm{\Sigma}_{i}^{-1}(\bm{Y}_{i}-\mathbf{W}_{i(3)}\bm{\beta})\\

\bm{W}_{i(3),1}^\top\bm{\Sigma}_{i}^{-1}(\bm{Y}_{i}-\mathbf{W}_{i(1)}\bm{\beta}) \\
\bm{W}_{i(2),1}^\top\bm{\Sigma}_{i}^{-1}(\bm{Y}_{i}-\mathbf{W}_{i(3)}\bm{\beta})
\end{pmatrix}.
\end{aligned}
\]
In this example, the number of elements in $\bm{g}_{i}^{*}(\bm{\beta})$ is $11$.

\section{Efficiency of the empirical likelihood estimator in the toy example}
\label{ss:efficiency}
In the toy example given in \ref{ss:construction}, the estimating equation for \cite{Lin2018Analysis}'s method  is
\[
\bm{U}(\bm{\beta})=\sum_{i=1}^{n}\bm{U}_{i}(\bm{\beta})=\sum_{i=1}^{n}\sum_{k_{1}\neq k_{2}}\mathbf{W}_{i(k_{1})}^\top\bm{\Sigma}_{i}^{-1}(\bm{Y}_{i}-\mathbf{W}_{i(k_{2})}\bm{\beta})=\bm{0},
\]
where $k_{1}$ and $k_{2}$ are the elements of $\{1,2,3\}$. Let 
$$\bm{h}_{i}(\bm{\beta})=[\{\bm{g}_{i}^{*}(\bm{\beta})\}^\top, \bm{W}_{i(3),1}^\top\bm{\Sigma}_{i}^{-1}(\bm{Y}_{i}-\mathbf{W}_{i(2)}\bm{\beta})]^\top,$$
 then $\bm{U}_{i}(\bm{\beta})=\mathbf{B}_{1}\bm{h}_{i}(\bm{\beta})$, where
\[
\begin{aligned}
\mathbf{B}_{1}&=\left(
\begin{array}{cccccccccccc}
2 & 0 & 0&2&0&0&2&0&0&0&0&0\\
0 & 1 & 0&0&1&0&0&1&0&1&1&1\\
0&0&2&0&0&2&0&0&2&0&0&0\\
\end{array}
\right).
\end{aligned}
\]

For any estimator $\hat{\theta}$, we denote the asymptotic variance of $\hat{\theta}$ as $\mathrm{asyVar}(\hat{\theta})$. According to Theorem 2 in  \cite{Lin2018Analysis}, the asymptotic variance of  $\hat{\bm{\beta}}_{LIN}$ is
\begin{multline*}
	\mathrm{asyVar}(\hat{\bm{\beta}}_{LIN})\\
	=\left[\sum_{i=1}^{n}\mathrm{E}\Big\{\frac{\partial \bm{U}_{i}(\bm{\beta}_{0})}{\partial \bm{\beta}^\top}\Big\}\right]^{-1}\left[\sum_{i=1}^{n}\mathrm{E}\big\{\bm{U}_{i}(\bm{\beta}_{0})\bm{U}_{i}(\bm{\beta}_{0})^\top\big\}\right]\left[\sum_{i=1}^{n}\mathrm{E}\Big\{\frac{\partial \bm{U}_{i}(\bm{\beta}_{0})}{\partial \bm{\beta}^\top}\Big\}\right]^{-1.\top}\\
	=\left[\sum_{i=1}^{n}\mathbf{B}_{1}\mathrm{E}\Big\{\frac{\partial \bm{h}_{i}(\bm{\beta}_{0})}{\partial \bm{\beta}^\top}\Big\}\right]^{-1}\left[\sum_{i=1}^{n}\mathbf{B}_{1}\mathrm{E}\big\{\bm{h}_{i}(\bm{\beta}_{0})\bm{h}_{i}(\bm{\beta}_{0})^\top\big\}\mathbf{B}_{1}^\top\right]\left[\sum_{i=1}^{n}\mathbf{B}_{1}\mathrm{E}\Big\{\frac{\partial \bm{h}_{i}(\bm{\beta}_{0})}{\partial \bm{\beta}^\top}\Big\}\right]^{-1.\top}.
\end{multline*}
By calculation,
\[
\mathrm{E}\left\{\frac{\partial \bm{h}_{i}(\bm{\beta}_{0})}{\partial \bm{\beta}^\top}\right\}=\mathbf{B}_{2}\mathrm{E}\left\{\frac{\partial \bm{g}_{i}^{*}(\bm{\beta}_{0})}{\partial \bm{\beta}^\top}\right\},
\]
and
\[
\mathrm{E}\big\{\bm{h}_{i}(\bm{\beta}_{0})\bm{h}_{i}(\bm{\beta}_{0})^\top\big\}=\mathbf{B}_{2}\mathrm{E}\big\{\bm{g}_{i}^{*}(\bm{\beta}_{0})\bm{g}_{i}^{*}(\bm{\beta}_{0})^\top\big\}\mathbf{B}_{2}^\top,
\]
where
\[
\mathbf{B}_{2}=\begin{pmatrix}
\mathbf{I}_{11\times 11}\\ \bm{B}_{3}^\top
\end{pmatrix},
\]
with $\bm{B}_{3}=(0,1,0,0,-1,0,0,-1,0,1,1)^\top$ and $\mathbf{I}_{p\times p}$ being the $p\times p$ identity matrix for any positive integer $p$. Let $\mathbf{B}=\mathbf{B}_{1}\mathbf{B}_{2}$, then
\begin{multline*}
	\mathrm{asyVar}(\hat{\bm{\beta}}_{LIN})\\=\left[\sum_{i=1}^{n}\mathbf{B}\mathrm{E}\Big\{\frac{\partial \bm{g}_{i}^{*}(\bm{\beta}_{0})}{\partial \bm{\beta}^\top}\Big\}\right]^{-1}\left[\sum_{i=1}^{n}\mathbf{B}\mathrm{E}\big\{\bm{g}_{i}^{*}(\bm{\beta}_{0})\bm{g}_{i}^{*}(\bm{\beta}_{0})^\top\big\}\mathbf{B}^\top\right]\left[\sum_{i=1}^{n}\mathbf{B}\mathrm{E}\Big\{\frac{\partial \bm{g}_{i}^{*}(\bm{\beta}_{0})}{\partial \bm{\beta}^\top}\Big\}\right]^{-1.\top}.
\end{multline*}
Therefore, the asymptotic variance of $\hat{\bm{\beta}}_{LIN}$ is equal to the asymptotic variance of the estimator obtained from the estimating equation $\sum_{i=1}^{n}\mathbf{B}\bm{g}_{i}^{*}(\bm{\beta})=\bm{0}$. According to Theorem 3.6 in  \cite{owen2001empirical}, the asymptotic variance of the empirical likelihood estimator $\hat{\bm{\beta}}$ is at least as small as that of any estimator obtained from $\sum_{i=1}^{n}\mathbf{C}\bm{g}_{i}^{*}(\bm{\beta})=\bm{0}$, where $\mathbf{C}$ is an arbitrary $p\times q$ matrix.  Therefore, the asymptotic variance of the empirical likelihood estimator $\hat{\bm{\beta}}$ is at least as small as that of $\hat{\bm{\beta}}_{LIN}$.

\section{Comparison of the asymmetric variances}
\label{ss:Comparison}

Without loss of generality, we consider the case where there are two replicate measurements for the covariate. For \cite{Lin2018Analysis}'s method,  the estimating equation  is
\[
\bm{U}(\bm{\beta})=\sum_{i=1}^{n}\bm{U}_{i}(\bm{\beta})=\sum_{i=1}^{n}\mathbf{W}_{i(1)}^\top\bm{\Sigma}_{i}^{-1}(\bm{Y}_{i}-\mathbf{W}_{i(2)}\bm{\beta})\!+\!\mathbf{W}_{i(2)}^\top\bm{\Sigma}_{i}^{-1}(\bm{Y}_{i}-\mathbf{W}_{i(1)}\bm{\beta})=\bm{0}.
\]
The asymptotic variance of  $\hat{\bm{\beta}}_{LIN}$ is
\begin{multline}
	\mathrm{asyVar}(\hat{\bm{\beta}}_{LIN})\\
	=\left[\sum_{i=1}^{n}\mathrm{E}\Big\{\frac{\partial \bm{U}_{i}(\bm{\beta}_{0})}{\partial \bm{\beta}^\top}\Big\}\right]^{-1}\left[\sum_{i=1}^{n}\mathrm{E}\big\{\bm{U}_{i}(\bm{\beta}_{0})\bm{U}_{i}(\bm{\beta}_{0})^\top\big\}\right]\left[\sum_{i=1}^{n}\mathrm{E}\Big\{\frac{\partial \bm{U}_{i}(\bm{\beta}_{0})}{\partial \bm{\beta}^\top}\Big\}\right]^{-1.\top}\\
	=\left\{\sum_{i=1}^{n}\mathrm{E}\big(\mathbf{X}_{i}^\top\Sigma_{i}^{-1}\mathbf{X}_{i}\big)\right\}^{-1}\left[\frac{1}{4}\sum_{i=1}^{n}\mathrm{E} \{\bm{U}_{i}(\bm{\beta}_{0})\bm{U}_{i}(\bm{\beta}_{0})^\top\}\right]\left\{\sum_{i=1}^{n}\mathrm{E}\big(\mathbf{X}_{i}^\top\Sigma_{i}^{-1}\mathbf{X}_{i}\big)\right\}^{-1},
	\label{asymvarlin1}
\end{multline}
where
$\sum_{i=1}^{n}\mathrm{E} \{\bm{U}_{i}(\bm{\beta}_{0})\bm{U}_{i}(\bm{\beta}_{0})^\top\}/4=\mathbf{D}+(\mathbf{E}_{1}+\mathbf{E}_{2}+\mathbf{F}_{1}+\mathbf{F}_{2}+\mathbf{G}_{1}+\mathbf{G}_{2}+\mathbf{G}_{3}+\mathbf{G}_{3}^\top )/4$ with $\mathbf{D}=\sum_{i=1}^{n}\mathrm{E}(\mathbf{X}_{i}^\top\Sigma_{i}^{-1}\mathbf{X}_{i})$,  $\mathbf{E}_{1}=\sum_{i=1}^{n}\mathrm{E}\big\{\mathbf{X}_{i}^\top\Sigma_{i}^{-1}\mathrm{cov}(\bm{\delta}_{i(2)}\bm{\beta}_{0})\Sigma_{i}^{-1}\mathbf{X}_{i}\big\}$, $\mathbf{E}_{2}=\sum_{i=1}^{n}\mathrm{E}\big\{\mathbf{X}_{i}^\top\Sigma_{i}^{-1}\mathrm{cov}(\bm{\delta}_{i(1)}\bm{\beta}_{0})\Sigma_{i}^{-1}\mathbf{X}_{i}\big\}$,  $\mathbf{F}_{1}=\sum_{i=1}^{n}\mathrm{E}(\bm{\delta}_{i(1)}^\top\Sigma_{i}^{-1}\bm{\delta}_{i(1)})$,  $\mathbf{F}_{2}=\sum_{i=1}^{n}\mathrm{E}(\bm{\delta}_{i(2)}^\top\Sigma_{i}^{-1}\bm{\delta}_{i(2)})$,  $\mathbf{G}_{1}=\sum_{i=1}^{n}\mathrm{cov}(\bm{\delta}_{i(1)}^\top\Sigma_{i}^{-1}\bm{\delta}_{i(2)}\bm{\beta}_{0})$, $\mathbf{G}_{2}=\sum_{i=1}^{n}\mathrm{cov}(\bm{\delta}_{i(2)}^\top\Sigma_{i}^{-1}\bm{\delta}_{i(1)}\bm{\beta}_{0})$,
and $\mathbf{G}_{3}=\sum_{i=1}^{n}\mathrm{E}(\bm{\delta}_{i(1)}^\top\Sigma_{i}^{-1}\bm{\delta}_{i(2)}\bm{\beta}_{0}\bm{\beta}_{0}^\top\bm{\delta}_{i(1)}^\top\Sigma_{i}^{-1}\bm{\delta}_{i(2)})$.

For the proposed empirical likelihood method, the auxiliary random vector  is
\[
\begin{aligned}
\bm{g}_{i}(\bm{\beta})&=\begin{pmatrix}\mathbf{W}_{i(1)}^\top\bm{\Sigma}_{i}^{-1}(\bm{Y}_{i}-\mathbf{W}_{i(2)}\bm{\beta}) \\ \mathbf{W}_{i(2)}^\top\bm{\Sigma}_{i}^{-1}(\bm{Y}_{i}-\mathbf{W}_{i(1)}\bm{\beta})  \end{pmatrix}.
\end{aligned}
\]
According to Theorem~\ref{theo:normality}, the asymptotic variance of  $\hat{\bm{\beta}}$ is
\begin{equation}
	\begin{aligned}
		&\mathrm{asyVar}(\hat{\bm{\beta}})\\
		=&\left[\left\{\sum_{i=1}^{n}\mathrm{E}\Big(\frac{\partial \bm{g}_{i}(\bm{\beta}_{0})}{\partial \bm{\beta}^\top}\Big)\right\}^\top\left\{\sum_{i=1}^{n}\mathrm{E}\big(\bm{g}_{i}(\bm{\beta}_{0})\bm{g}_{i}(\bm{\beta}_{0})^\top\big)\right\}^{-1}\left\{\sum_{i=1}^{n}\mathrm{E}\Big(\frac{\partial \bm{g}_{i}(\bm{\beta}_{0})}{\partial \bm{\beta}^\top}\Big)\right\}\right]^{-1}\\
		=&\left\{\sum_{i=1}^{n}\mathrm{E}\big(\mathbf{X}_{i}^\top\Sigma_{i}^{-1}\mathbf{X}_{i}\big)\right\}^{-1}\left[\begin{pmatrix} \mathbf{I}_{p\times p} &  \mathbf{I}_{p\times p} \end{pmatrix}\left\{\sum_{i=1}^{n}\mathrm{E}\big(\bm{g}_{i}(\bm{\beta}_{0})\bm{g}_{i}(\bm{\beta}_{0})^\top\big)\right\}^{-1}\begin{pmatrix} \mathbf{I}_{p\times p} \\  \mathbf{I}_{p\times p} \end{pmatrix} \right]^{-1}\\
		&\times\left\{\sum_{i=1}^{n}\mathrm{E}\big(\mathbf{X}_{i}^\top\Sigma_{i}^{-1}\mathbf{X}_{i}\big)\right\}^{-1},
	\end{aligned}
	\label{asymvarpro}
\end{equation}
where
\[
\sum_{i=1}^{n}\mathrm{E}\big\{\bm{g}_{i}(\bm{\beta}_{0})\bm{g}_{i}(\bm{\beta}_{0})^\top\big\}=\begin{pmatrix} \mathbf{D}+\mathbf{E}_{1}+\mathbf{F}_{1}+\mathbf{G}_{1} & \mathbf{D}+\mathbf{G}_{3}\\ \mathbf{D}+\mathbf{G}_{3}^\top& \mathbf{D}+\mathbf{E}_{2}+\mathbf{F}_{2}+\mathbf{G}_{2} \end{pmatrix}.
\]

By calculation,
\begin{multline*}
	\Delta\triangleq 
	\frac{1}{4}\sum_{i=1}^{n}\mathrm{E}\big\{\bm{U}_{i}(\bm{\beta}_{0})\bm{U}_{i}(\bm{\beta}_{0})^\top\big\}-\left[\begin{pmatrix} \mathbf{I}_{p\times p} &  \mathbf{I}_{p\times p} \end{pmatrix}\left\{\sum_{i=1}^{n}\mathrm{E} \big(\bm{g}_{i}(\bm{\beta}_{0})\bm{g}_{i}(\bm{\beta}_{0})^\top\big)\right\}^{-1}\begin{pmatrix} \mathbf{I}_{p\times p} \\  \mathbf{I}_{p\times p} \end{pmatrix} \right]^{-1}\\
	=\frac{1}{4}(\mathbf{E}_{1}+\mathbf{E}_{2}+\mathbf{F}_{1}+\mathbf{F}_{2}+\mathbf{G}_{1}+\mathbf{G}_{2}+\mathbf{G}_{3}+\mathbf{G}_{3}^\top )-\mathbf{E}_{1}-\mathbf{F}_{1}-\mathbf{G}_{1}\\
	+(\mathbf{E}_{1}+\mathbf{F}_{1}+\mathbf{G}_{1}-\mathbf{G}_{3})(\mathbf{E}_{1}+\mathbf{F}_{1}+\mathbf{G}_{1}-\mathbf{G}_{3}+\mathbf{E}_{2}+\mathbf{F}_{2}+\mathbf{G}_{2}-\mathbf{G}_{3}^\top)^{-1}(\mathbf{E}_{1}+\mathbf{F}_{1}+\mathbf{G}_{1}-\mathbf{G}_{3}^\top).
\end{multline*}
Denote $\mathbf{H}_{1}=\mathbf{E}_{1}+\mathbf{F}_{1}+\mathbf{G}_{1}-\mathbf{G}_{3}$ and  $\mathbf{H}_{2}=\mathbf{E}_{2}+\mathbf{F}_{2}+\mathbf{G}_{2}-\mathbf{G}_{3}^\top$, then it is easy to verify that if $\mathbf{H}_{1}=\mathbf{H}_{2}$, then $\Delta=0$. Furthermore, according to~(\ref{asymvarlin1}) and~(\ref{asymvarpro}), $\mathrm{asyVar}(\hat{\bm{\beta}}_{LIN})=\mathrm{asyVar}(\hat{\bm{\beta}})$. Note that the condition of $\mathbf{H}_{1}=\mathbf{H}_{2}$ can be easily satisfied if the two replicate measurements for the  same covariate follow the same distribution.

However, the condition of $\mathbf{H}_{1}=\mathbf{H}_{2}$ may be violated if the distributions of two  replicate measurements for the same covariate are different.  If $\mathbf{H}_{1}\neq\mathbf{H}_{2}$, denote $\mathbf{H}_{1}-\mathbf{H}_{2}$ as $\mathbf{\Lambda}$. For convenience, assume $\mathbf{G}_{3}=\mathbf{G}_{3}^\top$, then
\[
\begin{aligned}
\Delta(\mathbf{\Lambda})&=\mathbf{H}_{1}(\mathbf{H}_{1}+\mathbf{H}_{2})^{-1}\mathbf{H}_{1}+\frac{1}{4}\mathbf{H}_{2}-\frac{3}{4}\mathbf{H}_{1}\\
&=\mathbf{H}_{1}(2\mathbf{H}_{1}-\mathbf{\Lambda})^{-1}\mathbf{H}_{1}-\frac{1}{2}\mathbf{H}_{1}-\frac{1}{4}\mathbf{\Lambda}.
\end{aligned}
\]
Obviously, $\Delta(\bm{0})=\bm{0}$. Besides, for any $\mathbf{\Lambda}\neq \bm{0}$, based on the  Woodbury matrix identity,
\[
\begin{aligned}
\Delta(\mathbf{\Lambda})-\Delta(\bm{0})&=\mathbf{H}_{1}\big\{(2\mathbf{H}_{1}-\mathbf{\Lambda})^{-1}-(2\mathbf{H}_{1})^{-1}\big\}\mathbf{H}_{1}-\frac{1}{4}\mathbf{\Lambda}\\
&=\frac{1}{4}\big\{\mathbf{\Lambda}^{-1}-(2\mathbf{H}_{1})^{-1}\big\}^{-1}-\frac{1}{4}\mathbf{\Lambda}\\
&=\frac{1}{4}\mathbf{\Lambda}(2\mathbf{H}_{1}-\mathbf{\Lambda})^{-1}\mathbf{\Lambda}\\
&=\frac{1}{4}\mathbf{\Lambda}(\mathbf{H}_{1}+\mathbf{H}_{2})^{-1}\mathbf{\Lambda}.
\end{aligned}
\]

Note that
\[
\mathbf{H}_{1}+\mathbf{H}_{2}=\begin{pmatrix} \mathbf{I}_{p\times p} &  -\mathbf{I}_{p\times p} \end{pmatrix}\left\{\sum_{i=1}^{n}\mathrm{E} (\bm{g}_{i}(\bm{\beta}_{0})\bm{g}_{i}(\bm{\beta}_{0})^\top)\right\}^{-1}\begin{pmatrix} \mathbf{I}_{p\times p} \\  -\mathbf{I}_{p\times p} \end{pmatrix},
\]
then
\[
\begin{aligned}
&\mathrm{asyVar}(\hat{\bm{\beta}}_{LIN})-\mathrm{asyVar}(\hat{\bm{\beta}})\\
=&\left\{\sum_{i=1}^{n}\mathrm{E}\big(\mathbf{X}_{i}^\top\Sigma_{i}^{-1}\mathbf{X}_{i}\big)\right\}^{-1}\Delta(\mathbf{\Lambda})\left\{\sum_{i=1}^{n}\mathrm{E}\big(\mathbf{X}_{i}^\top\Sigma_{i}^{-1}\mathbf{X}_{i}\big)\right\}^{-1}\\
=&\frac{1}{4}\left\{\sum_{i=1}^{n}\mathrm{E}\big(\mathbf{X}_{i}^\top\Sigma_{i}^{-1}\mathbf{X}_{i}\big)\right\}^{-1}\mathbf{\Lambda}\left[\begin{pmatrix} \mathbf{I}_{p\times p} &  -\mathbf{I}_{p\times p} \end{pmatrix}\left\{\sum_{i=1}^{n}\mathrm{E} \big(\bm{g}_{i}(\bm{\beta}_{0})\bm{g}_{i}(\bm{\beta}_{0})^\top\big)\right\}^{-1}\begin{pmatrix} \mathbf{I}_{p\times p} \\  -\mathbf{I}_{p\times p} \end{pmatrix}\right]^{-1}\\
&\times\mathbf{\Lambda}\left\{\sum_{i=1}^{n}\mathrm{E}\big(\mathbf{X}_{i}^\top\Sigma_{i}^{-1}\mathbf{X}_{i}\big)\right\}^{-1}.
\end{aligned}
\]
Since $\sum_{i=1}^{n}\mathrm{E}\{\bm{g}_{i}(\bm{\beta}_{0})\bm{g}_{i}(\bm{\beta}_{0})^\top\}$ is a positive definite matrix, then $\mathrm{asyVar}(\hat{\bm{\beta}}_{LIN})-\mathrm{asyVar}(\hat{\bm{\beta}})>\bm{0}$.

In summary,  if the model satisfies the condition of $\mathbf{H}_{1}=\mathbf{H}_{2}$, then the asymptotic variance of \cite{Lin2018Analysis}'s estimator is the same as  that of the proposed estimator. This condition can be satisfied naturally if the two replicate measurements for the same covariate follow the same distribution. However, if this condition is violated, the asymptotic variance of \cite{Lin2018Analysis}'s estimator becomes larger than that of the proposed estimator. In general, the proposed estimator is asymptotically more efficient than \cite{Lin2018Analysis}'s estimator. This conclusion can be easily extended to the general case where there are more than two replicate measurements.

\section{ Proof of asymptotic properties}
\label{ss:proofofasym}
\subsection{Lemmas}
In order to prove Theorems~\ref{theo:normality},~\ref{theo:chisquaretrue},~\ref{theorem:fulltest}, and~\ref{theorem:profiletest}, we first introduce the following two lemmas.

\begin{lemma}
	\label{lemma:A1}
	Assuming that conditions (R.1)--(R.5) hold, we have
	\[
	\frac{1}{\sqrt{n}}\sum_{i=1}^{n}\bm{g}_{i}^{*}(\bm{\beta}_{0})\rightsquigarrow\mathcal{N}(\bm{0},\mathbf{M}).
	\]
\end{lemma}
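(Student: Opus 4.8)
The plan is to prove a Lindeberg--Feller type central limit theorem for the triangular array of independent, mean-zero random vectors $\{\bm{g}_i^*(\bm{\beta}_0)\}_{i=1}^n$. First I would record the structural facts that make the theorem plausible: by the mutual independence of $\{\bm{\varepsilon}_i\}$ and of $\{\bm{\xi}_{i(k)}\}$ across $i$, the summands $\bm{g}_i^*(\bm{\beta}_0)$ are independent, and by condition (R.2) each has mean $\bm{0}$. Thus $n^{-1/2}\sum_{i=1}^n\bm{g}_i^*(\bm{\beta}_0)$ is a normalized sum of independent centered vectors whose covariance is $n^{-1}\sum_{i=1}^n\mathrm{E}\{\bm{g}_i^*(\bm{\beta}_0)\bm{g}_i^*(\bm{\beta}_0)^\top\}$.

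Second, I would reduce the multivariate statement to a scalar one via the Cram\'er--Wold device: it suffices to show that for every fixed $\bm{t}\in\mathbb{R}^q$,
\[
T_n(\bm{t}) \triangleq \frac{1}{\sqrt n}\sum_{i=1}^n \bm{t}^\top \bm{g}_i^*(\bm{\beta}_0) \rightsquigarrow \mathcal{N}\big(0,\ \bm{t}^\top \mathbf{M}\,\bm{t}\big),
\]
where $\{\bm{t}^\top\bm{g}_i^*(\bm{\beta}_0)\}$ are independent scalars with mean zero. To pin down the limiting variance I would show that $n^{-1}\sum_{i=1}^n\mathrm{E}\{\bm{g}_i^*(\bm{\beta}_0)\bm{g}_i^*(\bm{\beta}_0)^\top\}\to\mathbf{M}$: condition (R.5) gives $\mathbf{M}_n/n\to\mathbf{M}$ in probability for the \emph{empirical} outer products, and the uniform integrability furnished by the moment bound below (each entry of $\mathbf{M}_n/n$ is bounded in $L^{3/2}$) upgrades this to convergence of expectations. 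Consequently the denominator of the Lyapunov ratio converges to $\bm{t}^\top\mathbf{M}\bm{t}$, which is strictly positive because $\mathbf{M}$ is positive definite by (R.5).

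Third, I would verify the Lyapunov condition with exponent $2+\delta$, $\delta=1$, namely
\[
\frac{\sum_{i=1}^n \mathrm{E}\,\big|\bm{t}^\top\bm{g}_i^*(\bm{\beta}_0)\big|^{3}}{\big(\sum_{i=1}^n \mathrm{Var}\{\bm{t}^\top\bm{g}_i^*(\bm{\beta}_0)\}\big)^{3/2}} \longrightarrow 0 .
\]
By $|\bm{t}^\top\bm{g}_i^*(\bm{\beta}_0)|\le \|\bm{t}\|\,\|\bm{g}_i^*(\bm{\beta}_0)\|$ the numerator is bounded by $\|\bm{t}\|^3\sum_{i=1}^n\mathrm{E}\|\bm{g}_i^*(\bm{\beta}_0)\|^3$, so the whole ratio is at most a constant multiple of $\sum_{i=1}^n\mathrm{E}\|\bm{g}_i^*(\bm{\beta}_0)/\sqrt n\|^3$, which tends to $0$ as noted in the Remark. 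The scalar CLT then holds for each $\bm{t}$, and the Cram\'er--Wold device delivers $n^{-1/2}\sum_{i=1}^n\bm{g}_i^*(\bm{\beta}_0)\rightsquigarrow\mathcal{N}(\bm{0},\mathbf{M})$.

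The main obstacle is the uniform moment bound $\sup_{1\le i\le n}\mathrm{E}\|\bm{g}_i^*(\bm{\beta}_0)\|^3<\infty$, which feeds both the passage to expectations in Step~2 and the Lyapunov ratio in Step~3 (and underlies the Remark's assertion that $\sum_i\mathrm{E}\|\bm{g}_i^*(\bm{\beta}_0)/\sqrt n\|^3\to0$). Each entry of $\bm{g}_i^*(\bm{\beta}_0)$ is a sum of quadratic forms $\mathbf{W}_{i(k_1)}^\top\bm{\Sigma}_i^{-1}(\bm{Y}_i-\mathbf{W}_{i(k_2)}\bm{\beta}_0)$; substituting $\mathbf{W}_{i(k)}=\mathbf{X}_i+\bm{\xi}_{i(k)}$ and $\bm{Y}_i=\mathbf{X}_i\bm{\beta}_0+\bm{\varepsilon}_i$ and cubing produces terms of total degree six in the entries of $\mathbf{X}_i$ and lower degree in $\bm{\xi}_{i(k)}$ and $\bm{\varepsilon}_i$. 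Bounding these by repeated H\"older and Cauchy--Schwarz inequalities, using that $\{m_i\}$ and $K$ are bounded (R.1), that the eigenvalues of $\bm{\Sigma}_i^{-1}$ are bounded (R.3), and that $\mathrm{E}\|\mathbf{X}_i\|^6$, $\mathrm{E}\|\bm{\xi}_{i(k)}\|^3$, and $\mathrm{E}(\|\bm{\varepsilon}_i\|^3\mid\mathbf{X}_i)$ are uniformly finite (R.4), yields a uniform third-moment bound $C<\infty$; then $\sum_{i=1}^n\mathrm{E}\|\bm{g}_i^*(\bm{\beta}_0)/\sqrt n\|^3\le C/\sqrt n\to0$. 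Carrying out this degree counting carefully is the one genuinely laborious piece; everything else is a routine application of standard CLT machinery.
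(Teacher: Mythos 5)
Your proposal is correct and follows essentially the same route as the paper's proof: a Lyapunov central limit theorem (the multivariate form being exactly your Cram\'er--Wold reduction) whose third-moment condition $\sum_{i=1}^{n}\mathrm{E}\|\bm{g}_{i}^{*}(\bm{\beta}_{0})/\sqrt{n}\|^{3}\rightarrow 0$ is verified, as you do, by substituting $\mathbf{W}_{i(k)}=\mathbf{X}_{i}+\bm{\xi}_{i(k)}$ and $\bm{Y}_{i}=\mathbf{X}_{i}\bm{\beta}_{0}+\bm{\varepsilon}_{i}$ and invoking (R.1), (R.3), and (R.4). Two minor remarks: your uniform-integrability step upgrading (R.5)'s in-probability convergence of $\mathbf{M}_{n}/n$ to convergence of the expected covariances makes explicit a point the paper glosses with ``by the law of large numbers and condition (R.5)''; and the cubed summands are only of degree three in $\mathbf{X}_{i}$, not six, because at $\bm{\beta}_{0}$ the residual $\bm{Y}_{i}-\mathbf{W}_{i(k_{2})}\bm{\beta}_{0}=\bm{\varepsilon}_{i}-\bm{\xi}_{i(k_{2})}\bm{\beta}_{0}$ is free of $\mathbf{X}_{i}$, so the laborious bound you anticipate is in fact easier and comfortably within (R.4).
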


\begin{proof}
	We first prove that  $\sum_{i=1}^{n}\mathrm{E}\|\bm{g}_{i}^{*}(\bm{\beta}_{0})/\sqrt{n}\|^{3}\rightarrow 0$. Note that $$\sum_{i=1}^{n}\mathrm{E}\|\bm{g}_{i}^{*}(\bm{\beta}_{0})/\sqrt{n}\|^{3}=\sum_{i=1}^{n}n^{-3/2}\mathrm{E}\|\bm{g}_{i}^{*}(\bm{\beta}_{0})\|^{3},$$
	 and
	\[
	\begin{aligned}
	\mathrm{E}\|\bm{g}_{i}^{*}(\bm{\beta}_{0})\|^{3}&\leq \mathrm{E}\|\bm{g}_{i}(\bm{\beta}_{0})\|^{3}=
	\mathrm{E}\left[\left\{\sum_{k_{1}\neq k_{2}}\|\mathbf{W}_{i(k_{1})}^\top\Sigma_{i}^{-1}(\bm{Y}_{i}-\mathbf{W}_{i(k_{2})}\bm{\beta}_{0})\|^{2}\right\}^{\frac{3}{2}}\right]\\
	&\leq \big\{K(K-1)\big\}^\frac{1}{2}\sum_{k_{1}\neq k_{2}}\mathrm{E}\big\|\mathbf{W}_{i(k_{1})}^\top\Sigma_{i}^{-1}(\bm{Y}_{i}-\mathbf{W}_{i(k_{2})}\bm{\beta}_{0})\big\|^{3}\\
	&=\big\{K(K-1)\big\}^\frac{1}{2}\sum_{k_{1}\neq k_{2}}\mathrm{E}\big\|(\mathbf{X}_{i}+\bm{\xi}_{i(k_{1})})^\top\Sigma_{i}^{-1}(\bm{\varepsilon}_{i}-\bm{\xi}_{i(k_{2})}\bm{\beta}_{0})\big\|^{3}\\
	&\leq \big\{K(K-1)\big\}^\frac{1}{2}\sum_{k_{1}\neq k_{2}}\mathrm{E}\big(\|\mathbf{X}_{i}+\bm{\xi}_{i(k_{1})}\|^{3}\|\Sigma_{i}^{-1}\|^{3}\|\bm{\varepsilon}_{i}-\bm{\xi}_{i(k_{2})}\bm{\beta}_{0}\|^{3}\big).
	\end{aligned}
	\]
	By conditions (R.1)--(R.4) and the Markov's inequality, we have $\mathrm{E}\|\bm{g}_{i}^{*}(\bm{\beta}_{0})\|^{3}<\infty$.  Furthermore, $\sum_{i=1}^{n}\mathrm{E}\|\bm{g}_{i}^{*}(\bm{\beta}_{0})/\sqrt{n}\|^{3}\rightarrow 0$.
	Based on condition (R.2),  $\mathrm{E}\{\bm{g}_{i}^{*}(\bm{\beta}_{0})/\sqrt{n}\}=\bm{0}$. By the law of large numbers and condition (R.5), $\sum_{i=1}^{n}\mathrm{cov}\{\bm{g}_{i}^{*}(\bm{\beta}_{0})/\sqrt{n}\}\rightarrow \mathbf{M}$. Therefore, according to Lyapunov central limit theorem, we have $\sum_{i=1}^{n}\bm{g}_{i}^{*}(\bm{\beta}_{0})/\sqrt{n}\rightsquigarrow\mathcal{N}(\bm{0},\mathbf{M})$.
\end{proof}

\begin{lemma}
	\label{lemma:A2}
	Assuming that conditions (R.1)--(R.5) hold, we have
	\begin{enumerate}
	\item [(a)] $\max_{1\leq i\leq n}\|\bm{g}_{i}^{*}(\bm{\beta}_{0})\|=o_{p}(n^{1/2})$,
	\item [(b)] $\sum_{i=1}^{n}\|\bm{g}_{i}^{*}(\bm{\beta}_{0})\|^{3}/n=o_{p}(n^{1/2})$, and
	\item [(c)] $\|\bm{\lambda}(\bm{\beta}_{0})\|=O_{p}(n^{-1/2})$.
			\end{enumerate}

\end{lemma}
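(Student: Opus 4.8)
The plan is to establish the three statements in the order given, since part~(c) will feed on both part~(a) and the central limit theorem of Lemma~\ref{lemma:A1}. Throughout I write $\bar{\bm{g}}=n^{-1}\sum_{i=1}^{n}\bm{g}_{i}^{*}(\bm{\beta}_{0})$ and $\mathbf{S}_{n}=\mathbf{M}_{n}/n=n^{-1}\sum_{i=1}^{n}\bm{g}_{i}^{*}(\bm{\beta}_{0})\bm{g}_{i}^{*}(\bm{\beta}_{0})^{\top}$, and I record the consequence of conditions (R.1)--(R.4), already extracted in the proof of Lemma~\ref{lemma:A1}, that $\sup_{i}\mathrm{E}\|\bm{g}_{i}^{*}(\bm{\beta}_{0})\|^{3}\le C<\infty$. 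From Lemma~\ref{lemma:A1} I also have $\sqrt{n}\,\bar{\bm{g}}$ convergent in distribution, hence $\|\bar{\bm{g}}\|=O_{p}(n^{-1/2})$.

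For part~(a), I would control the maximum by a union bound followed by Markov's inequality applied to the third moment: for any $\epsilon>0$,
\[
\Pr\!\Big(\max_{1\le i\le n}\|\bm{g}_{i}^{*}(\bm{\beta}_{0})\|>\epsilon\sqrt{n}\Big)\le\sum_{i=1}^{n}\Pr\!\big(\|\bm{g}_{i}^{*}(\bm{\beta}_{0})\|>\epsilon\sqrt{n}\big)\le\frac{1}{\epsilon^{3}n^{3/2}}\sum_{i=1}^{n}\mathrm{E}\|\bm{g}_{i}^{*}(\bm{\beta}_{0})\|^{3}\le\frac{C}{\epsilon^{3}\sqrt{n}},
\]
which tends to $0$, giving $\max_{i}\|\bm{g}_{i}^{*}(\bm{\beta}_{0})\|=o_{p}(n^{1/2})$. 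Part~(b) is then immediate: since $\mathrm{E}\big(n^{-1}\sum_{i=1}^{n}\|\bm{g}_{i}^{*}(\bm{\beta}_{0})\|^{3}\big)\le C$, Markov's inequality gives $n^{-1}\sum_{i=1}^{n}\|\bm{g}_{i}^{*}(\bm{\beta}_{0})\|^{3}=O_{p}(1)$, which is in particular $o_{p}(n^{1/2})$.

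For part~(c), I would run the standard Owen-type argument. Writing $\bm{\lambda}(\bm{\beta}_{0})=\rho\bm{u}$ with $\rho=\|\bm{\lambda}(\bm{\beta}_{0})\|\ge0$ and $\|\bm{u}\|=1$, taking the inner product of the constraint~\eqref{eq:constraint} with $\bm{u}$ and applying the identity $a/(1+\rho a)=a-\rho a^{2}/(1+\rho a)$ with $a_{i}=\bm{u}^{\top}\bm{g}_{i}^{*}(\bm{\beta}_{0})$ yields
\[
\frac{\rho}{n}\sum_{i=1}^{n}\frac{(\bm{u}^{\top}\bm{g}_{i}^{*}(\bm{\beta}_{0}))^{2}}{1+\rho\,\bm{u}^{\top}\bm{g}_{i}^{*}(\bm{\beta}_{0})}=\bm{u}^{\top}\bar{\bm{g}}\le\|\bar{\bm{g}}\|.
\]
Since every denominator is positive and bounded above by $1+\rho\max_{i}\|\bm{g}_{i}^{*}(\bm{\beta}_{0})\|$, the left-hand side is at least $\rho\,\bm{u}^{\top}\mathbf{S}_{n}\bm{u}/(1+\rho\max_{i}\|\bm{g}_{i}^{*}(\bm{\beta}_{0})\|)$. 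Using $\mathbf{S}_{n}\to\mathbf{M}\succ\bm{0}$ from (R.5) to bound $\bm{u}^{\top}\mathbf{S}_{n}\bm{u}$ below by $\sigma:=\sigma_{\min}(\mathbf{M})/2>0$ uniformly in $\bm{u}$ with probability tending to one, and rearranging, I obtain $\rho\big(\sigma-\|\bar{\bm{g}}\|\max_{i}\|\bm{g}_{i}^{*}(\bm{\beta}_{0})\|\big)\le\|\bar{\bm{g}}\|$. By part~(a) and $\|\bar{\bm{g}}\|=O_{p}(n^{-1/2})$, the product $\|\bar{\bm{g}}\|\max_{i}\|\bm{g}_{i}^{*}(\bm{\beta}_{0})\|=o_{p}(1)$, so the parenthetical term exceeds $\sigma/2$ with high probability and $\rho\le 2\|\bar{\bm{g}}\|/\sigma=O_{p}(n^{-1/2})$.

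The main obstacle is the justification underlying part~(c): one must argue that the Lagrange multiplier $\bm{\lambda}(\bm{\beta}_{0})$ solving~\eqref{eq:constraint} exists with all weights $1+\bm{\lambda}^{\top}\bm{g}_{i}^{*}(\bm{\beta}_{0})>0$, which is what keeps the denominators positive and preserves the direction of the inequality, and that the quadratic form $\bm{u}^{\top}\mathbf{S}_{n}\bm{u}$ is bounded below \emph{uniformly} over the unit sphere rather than for a single fixed direction. This is precisely where the positive definiteness of $\mathbf{M}$ in (R.5) is indispensable, and it is the structural reason the reduced vector $\bm{g}_{i}^{*}$, rather than the original $\bm{g}_{i}$, must be used throughout.
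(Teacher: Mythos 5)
Your proof is correct, and parts (a) and (b) take a genuinely different (and cleaner) route than the paper, while part (c) is essentially the paper's argument. For (a), the paper follows Owen's classical almost-sure argument: it notes $\mathrm{E}\|\bm{g}_{i}^{*}(\bm{\beta}_{0})\|^{2}<\infty$, applies Markov's inequality and the Borel--Cantelli lemma to conclude $\|\bm{g}_{i}^{*}(\bm{\beta}_{0})\|>An^{1/2}$ finitely often for every $A>0$, hence $Z_{n}=o(n^{1/2})$ almost surely. Your union bound with the uniform third-moment bound, $\Pr(\max_{i}\|\bm{g}_{i}^{*}(\bm{\beta}_{0})\|>\epsilon\sqrt{n})\leq C\epsilon^{-3}n^{-1/2}$, delivers the $o_{p}(n^{1/2})$ statement directly; this is weaker (in probability rather than a.s.) but exactly what the lemma asserts, and it is arguably better suited to this paper's independent-but-not-identically-distributed triangular setting, where the paper's Borel--Cantelli step (events indexed by $i$ with a threshold depending on $n$) is stated somewhat loosely. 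For (b), the paper factors $\sum_{i}\|\bm{g}_{i}^{*}\|^{3}/n\leq Z_{n}\cdot\sum_{i}\|\bm{g}_{i}^{*}\|^{2}/n$ and invokes part (a) plus a law of large numbers, whereas you apply Markov's inequality to the averaged third moments and obtain the stronger conclusion $O_{p}(1)$ in one line; both are valid, and yours buys simplicity at no cost since only $o_{p}(n^{1/2})$ is needed downstream. For (c), you and the paper run the same Owen-type argument: write $\bm{\lambda}_{0}=\rho\bm{u}$, project the constraint \eqref{eq:constraint} onto $\bm{u}$, lower-bound the resulting quadratic form by $\rho\,\bm{u}^{\top}\mathbf{M}_{n}\bm{u}/\{n(1+\rho Z_{n})\}$, and use the positive definiteness of $\mathbf{M}$ from (R.5) together with the $O_{p}(n^{-1/2})$ rate of $\bar{\bm{g}}$ from Lemma~\ref{lemma:A1}. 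The only differences are cosmetic: you bound $\bm{u}^{\top}\bar{\bm{g}}$ by $\|\bar{\bm{g}}\|$ where the paper uses a coordinate-sum bound, and you make explicit both the uniformity over the unit sphere of the lower bound on $\bm{u}^{\top}\mathbf{S}_{n}\bm{u}$ and the positivity of the weights $1+\bm{\lambda}_{0}^{\top}\bm{g}_{i}^{*}(\bm{\beta}_{0})>0$ (which the paper also records, via $\pi_{i}\geq 0$ in \eqref{eq:weight}); these clarifications strengthen rather than alter the argument.
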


\begin{proof}
	As  pointed out in the proof of Lemma~\ref{lemma:A1}, there exists a positive constant $\delta$ such that  $\mathrm{E}\|\bm{g}_{i}^{*}(\bm{\beta}_{0})\|^{2+\delta}<\infty$, therefore, $\mathrm{E}\|\bm{g}_{i}^{*}(\bm{\beta}_{0})\|^{2}<\infty$.  According to the Markov's inequality, $\sum_{i=1}^{n}{\rm P}(\|\bm{g}_{i}^{*}(\bm{\beta}_{0})\|^{2}>n)\leq \sum_{i=1}^{n}\mathrm{E}\|\bm{g}_{i}^{*}(\bm{\beta}_{0})\|^{2}/n<\infty$. Hence, by the Borel-Cantelli Lemma, $\|\bm{g}_{i}^{*}(\bm{\beta}_{0})\|>n^{1/2}$ finitely often with probability $1$. Let $\max_{1\leq i\leq n}\|\bm{g}_{i}^{*}(\bm{\beta}_{0})\|=Z_{n}$, then $Z_{n}>n^{1/2}$ finitely often. By the same argument, $Z_{n}>An^{1/2}$ finitely often for any $A>0$. Therefore, $\limsup_{n\rightarrow\infty}Z_{n}n^{-1/2}\leq A$ with probability $1$. So $Z_{n}=\max_{1\leq i\leq n}\|\bm{g}_{i}^{*}(\bm{\beta}_{0})\|=o_{p}(n^{1/2})$.
	
	Note that 
	$$\sum_{i=1}^{n}\|\bm{g}_{i}^{*}(\bm{\beta}_{0})\|^{3}/n=\sum_{i=1}^{n}\|\bm{g}_{i}^{*}(\bm{\beta}_{0})\|\|\bm{g}_{i}^{*}(\bm{\beta}_{0})\|^{2}/n\leq Z_{n}\times\sum_{i=1}^{n}\|\bm{g}_{i}^{*}(\bm{\beta}_{0})\|^{2}/n.$$ 
	By the law of large numbers, $\sum_{i=1}^{n}\|\bm{g}_{i}^{*}(\bm{\beta}_{0})\|^{2}/n-\sum_{i=1}^{n}\mathrm{E}\|\bm{g}_{i}^{*}(\bm{\beta}_{0})\|^{2}/n\stackrel{P}{\rightarrow}0$. Since $\mathrm{E}\|\bm{g}_{i}^{*}(\bm{\beta}_{0})\|^{2}<\infty$, then $\sum_{i=1}^{n}\|\bm{g}_{i}^{*}(\bm{\beta}_{0})\|^{2}/n=O_{p}(1)$. Furthermore, $\sum_{i=1}^{n}\|\bm{g}_{i}^{*}(\bm{\beta}_{0})\|^{3}/n=o_{p}(n^{1/2})$.
	
	For simplicity, we denote $\bm{\lambda}(\bm{\beta}_{0})$ as $\bm{\lambda}_{0}$. Write $\bm{\lambda}_{0}=\rho\bm{\theta}$, where $\rho\geq 0$ and $\|\bm{\theta}\|=1$. Since $\pi_{i}=1/[n\{1+\bm{\lambda}_{0}^\top\bm{g}_{i}^{*}(\bm{\beta}_{0})\}]$, $\pi_{i}\geq 0$ and $\sum_{i=1}^{n}\pi_{i}=1$, we have $1+\bm{\lambda}_{0}^\top\bm{g}_{i}^{*}(\bm{\beta}_{0})>0$. As we know, $\bm{\lambda}_{0}$ should satisfy the condition of  $\bm{0}=\sum_{i=1}^{n}\bm{g}_{i}^{*}(\bm{\beta}_{0})/[n\{1+\bm{\lambda}_{0}^\top\bm{g}_{i}^{*}(\bm{\beta}_{0})\}]\triangleq \bm{\ell}(\bm{\lambda}_{0})$. Besides,
	\begin{equation}
		\begin{aligned}
			0=&\|\bm{\ell}(\rho\bm{\theta})\|=\|\bm{\ell}(\rho\bm{\theta})\|\|\bm{\theta}\|\geq |\bm{\theta}^\top\bm{\ell}(\rho\bm{\theta})|\\
			=&\frac{1}{n}\bigg|\bm{\theta}^\top\Big\{\sum_{i=1}^{n}\bm{g}_{i}^{*}(\bm{\beta}_{0})-\rho \sum_{i=1}^{n}\frac{\bm{g}_{i}^{*}(\bm{\beta}_{0})\bm{\theta}^\top\bm{g}_{i}^{*}(\bm{\beta}_{0})     }{1+\rho\bm{\theta}^\top\bm{g}_{i}^{*}(\bm{\beta}_{0})}\Big\}\bigg|\\
			\geq& \frac{\rho}{n}\bm{\theta}^\top\sum_{i=1}^{n}\frac{\bm{g}_{i}^{*}(\bm{\beta}_{0})\bm{g}_{i}^{*}(\bm{\beta}_{0})^\top}{1+\rho\bm{\theta}^\top\bm{g}_{i}^{*}(\bm{\beta}_{0})}\bm{\theta}-\frac{1}{n}\bigg| \sum_{j=1}^{q}\bm{e}_{j}^\top\sum_{i=1}^{n}\bm{g}_{i}^{*}(\bm{\beta}_{0})\bigg|\\
			\geq&\frac{\rho\bm{\theta}^\top\mathbf{M}_{n}\bm{\theta}}{n(1+\rho Z_{n})  }-\frac{1}{n}\bigg| \sum_{j=1}^{q}\bm{e}_{j}^\top\sum_{i=1}^{n}\bm{g}_{i}^{*}(\bm{\beta}_{0})\bigg|,
		\end{aligned}
		\label{eq:inequality}
	\end{equation}
	where $\bm{e}_{j}$ is a $q$-dimensional vector with its $j$th element being $1$ and all its other elements being $0$. Let $\sigma_{1}$ denote the smallest eigenvalue of $\mathbf{M}$.  Since $\mathbf{M}$ is positive definite, then $\sigma_{1}>0$. Under condition (R.5), $\bm{\theta}^\top\mathbf{M}_{n}\bm{\theta}/n\geq \sigma_{1}+o_{p}(1)$. Since $\sum_{i=1}^{n}\bm{g}_{i}^{*}(\bm{\beta}_{0})/\sqrt{n}\rightsquigarrow\mathcal{N}(\bm{0},\mathbf{M})$, then $| \sum_{j=1}^{q}\bm{e}_{j}^\top\sum_{i=1}^{n}\bm{g}_{i}^{*}(\bm{\beta}_{0})|/n=O_{p}(n^{-1/2})$. According to~(\ref{eq:inequality}), $\rho/(1+\rho Z_{n})=O_{p}(n^{-1/2})$.  Furthermore, since $Z_{n}=o_{p}(n^{1/2})$, we can obtain that  $\rho=\|\bm{\lambda}_{0}\|=O_{p}(n^{-1/2})$.
\end{proof}

\subsection{Proof of Theorem~\ref{theo:normality}}
\label{sss:proofofnormality}
Define the empirical log-likelihood ratio as 
\[
l_E(\bm{\beta})=\sum_{i=1}^{n}\log \{1+\bm{\lambda}(\bm{\beta})^\top\bm{g}_i^\ast(\bm{\beta}) \}.
\]
Denote $\bm{\beta}=\bm{\beta}_0+\bm{u}n^{-1/3}$, for $\bm{\beta}\in \{ \bm{\beta}\mid \|\bm{\beta}-\bm{\beta}_0\|=n^{-1/3}  \}$, where $\|\bm{u}\|=1$. Similar to the proof of Lemma \ref{lemma:A1}, under conditions (R.1)--(R.4), $\mathrm{E}\|\bm{g}_{i}^{*}(\bm{\beta})\|^{3}<\infty$. 
Similar to the proof of \cite{owen1990empirical}, when $\mathrm{E}\|\bm{g}_{i}^{*}(\bm{\beta})\|^{3}<\infty$ and $\|\bm{\beta}-\bm{\beta}_0\|\leq n^{-1/3}$, we have 
\begin{equation}
\label{eq:lambda}
\begin{aligned}
\bm{\lambda}(\beta)=&\left\{\frac{1}{n}\sum_{i=1}^{n}\bm{g}_i^\ast(\bm{\beta})\bm{g}_i^\ast(\bm{\beta})^\top \right\}^{-1}\left\{\frac{1}{n}\sum_{i=1}^{n}\bm{g}_i^{\ast}(\bm{\beta})  \right\}+o(n^{-1/3}) \quad (a.s.)\\
=&O(n^{-1/3}) \quad (a.s.),
\end{aligned}
\end{equation}
uniformly about $\theta\in \{ \bm{\beta}\mid \|\bm{\beta}-\bm{\beta}_0\|\leq n^{-1/3}  \}$.

By \eqref{eq:lambda} and the Taylor expansion, we have (uniformly for $\bm{u}$),
\[
\begin{aligned}
l_E(\bm{\beta})=&\sum_{i=1}^{n}\bm{\lambda}(\bm{\beta})^\top\bm{g}_i^\ast(\bm{\beta})-\frac{1}{2}\sum_{i=1}^{n}\{\bm{\lambda}(\bm{\beta})^\top\bm{g}_i^\ast(\bm{\beta}) \}^2+o(n^{1/3}) \quad (a.s.)\\
=&\frac{n}{2}\left\{\frac{1}{n}\sum_{i=1}^{n}\bm{g}_i^\ast(\bm{\beta}) \right\}^\top\left\{\frac{1}{n}\sum_{i=1}^n\bm{g}_i^\ast(\bm{\beta}) \bm{g}_i^\ast(\bm{\beta})^\top \right\}^{-1}\left\{\frac{1}{n}\sum_{i=1}^{n}\bm{g}_i^\ast(\bm{\beta}) \right\}\\
&+o(n^{1/3}) \quad (a.s.)\\
=&\frac{n}{2}\left\{\frac{1}{n}\sum_{i=1}^{n}\bm{g}_i^\ast(\bm{\beta}_0)+\frac{1}{n}\sum_{i=1}^{n}\frac{\partial \bm{g}^\ast_i(\bm{\beta}_0)}{\partial \bm{\beta}^\top}\bm{u}n^{-1/3} \right\}^\top\left\{\frac{1}{n}\sum_{i=1}^n\bm{g}_i^\ast(\bm{\beta}) \bm{g}_i^\ast(\bm{\beta})^\top \right\}^{-1}\\
&\times \left\{\frac{1}{n}\sum_{i=1}^{n}\bm{g}_i^\ast(\bm{\beta}_0)+\frac{1}{n}\sum_{i=1}^{n}\frac{\partial \bm{g}^\ast_i(\bm{\beta}_0)}{\partial \bm{\beta}^\top}\bm{u}n^{-1/3} \right\}+o(n^{1/3}) \quad (a.s.)\\
=&\frac{n}{2}\left[O\{n^{-1/2}(\log\log n)^{1/2} \} +\mathbf{L}\bm{u}n^{-1/3} \right]^\top \times \mathbf{M}^{-1}\\
&\times \left[O\{n^{-1/2}(\log\log n)^{1/2} \} +\mathbf{L}\bm{u}n^{-1/3} \right]+o(n^{1/3}) \quad (a.s.)\\
\geq & (c-\varepsilon)n^{1/3}, \quad a.s.,
\end{aligned}
\]
where $c-\varepsilon>0$ and $c$ is the smallest eigenvalue of $\mathbf{L}^\top\mathbf{M}^{-1}\mathbf{L}$.

Similarly,
\[
\begin{aligned}
l_E(\bm{\beta}_0)=&\frac{n}{2}\left\{\frac{1}{n}\sum_{i=1}^{n}\bm{g}^\ast_i(\bm{\beta}_0) \right\}^\top\left\{\frac{1}{n}\sum_{i=1}^{n}\bm{g}^\ast_i(\bm{\beta}_0) \bm{g}^\ast_i(\bm{\beta}_0)^\ast  \right\}^{-1}\\
&\times \left\{\frac{1}{n}\sum_{i=1}^{n}\bm{g}^\ast_i(\bm{\beta}_0) \right\}+o(1) \quad (a.s.)\\
=& O(\log \log n), \quad (a.s.).
\end{aligned}
\]

Since $l_E(\bm{\beta})$ is a continuous function about $\bm{\beta}$ as $\bm{\beta}$ belongs to the ball $\|\bm{\beta}-\bm{\beta}_0\|\leq n^{-1/3}$, then, as $n\rightarrow \infty$,  $l_E(\bm{\beta})$ attains its minimum value at some point $\hat{\bm{\beta}}$ in the interior of this ball with probability 1.
Besides, $\hat{\bm{\beta}}$ and $\hat{\bm{\lambda}}=\bm{\lambda}(\hat{\bm{\beta}})$ satisfy $\bm{Q}_{1n}(\hat{\bm{\beta}},\hat{\bm{\lambda}})=\bm{0}$ and $\bm{Q}_{2n}(\hat{\bm{\beta}},\hat{\bm{\lambda}})=\bm{0}$, where
\[
\bm{Q}_{1n}(\bm{\beta},\bm{\lambda})=\frac{1}{n}\sum_{i=1}^{n}\frac{\bm{g}_{i}^{*}(\bm{\beta})}{1+\bm{\lambda}^\top\bm{g}_{i}^{*}(\bm{\beta})},			
\]
and
\[
\bm{Q}_{2n}(\bm{\beta},\bm{\lambda})=\frac{1}{n}\sum_{i=1}^{n}\frac{1}{1+\bm{\lambda}^\top\bm{g}_{i}^{*}(\bm{\beta})}\left(\frac{\partial \bm{g}_{i}^{*}(\bm{\beta}) }{\partial \bm{\beta}^\top}\right)^\top\bm{\lambda}.		
\]
Expanding $\bm{Q}_{1n}(\hat{\bm{\beta}},\hat{\bm{\lambda}})$ and $\bm{Q}_{2n}(\hat{\bm{\beta}},\hat{\bm{\lambda}})$ at $(\bm{\beta}_{0},\bm{0})$, by conditions (R.2)--(R.5), we have
\[
\bm{0}=\bm{Q}_{1n}(\hat{\bm{\beta}},\hat{\bm{\lambda}})=\bm{Q}_{1n}(\bm{\beta}_{0},\bm{0})+\frac{\partial \bm{Q}_{1n}(\bm{\beta}_{0},\bm{0}) }{\partial \bm{\beta}^\top}(\hat{\bm{\beta}}-\bm{\beta}_{0})+\frac{\partial \bm{Q}_{1n}(\bm{\beta}_{0},\bm{0}) }{\partial \bm{\lambda}^\top}(\hat{\bm{\lambda}}-\bm{0})+o_{p}(\delta_{n}),
\]
and
\[
\bm{0}=\bm{Q}_{2n}(\hat{\bm{\beta}},\hat{\bm{\lambda}})=\bm{Q}_{2n}(\bm{\beta}_{0},\bm{0})+\frac{\partial \bm{Q}_{2n}(\bm{\beta}_{0},\bm{0}) }{\partial \bm{\beta}^\top}(\hat{\bm{\beta}}-\bm{\beta}_{0})+\frac{\partial \bm{Q}_{2n}(\bm{\beta}_{0},\bm{0}) }{\partial \bm{\lambda}^\top}(\hat{\bm{\lambda}}-\bm{0})+o_{p}(\delta_{n}),
\]
where $\delta_{n}=\|\hat{\bm{\beta}}-\bm{\beta}_{0}\|+\|\hat{\bm{\lambda}} \|$. Therefore,
\begin{equation}
	\begin{pmatrix}\hat{\bm{\lambda}}\\\hat{\bm{\beta}}-\bm{\beta}_{0}\end{pmatrix}
	=\mathbf{S}_{n}^{-1}\begin{pmatrix}-\bm{Q}_{1n}(\bm{\beta}_{0},\bm{0})+o_{p}(\delta_{n})\\ o_{p}(\delta_{n})\end{pmatrix},
	\label{eq:expansion}
\end{equation}
where
\begin{equation}
	\mathbf{S}_{n}=\begin{pmatrix} \frac{\partial \bm{Q}_{1n}(\bm{\beta}_{0},\bm{0})}{\partial \bm{\lambda}^\top} &  \frac{\partial \bm{Q}_{1n}(\bm{\beta}_{0},\bm{0})}{\partial \bm{\beta}^\top}    \\ \frac{\partial \bm{Q}_{2n}(\bm{\beta}_{0},\bm{0})}{\partial \bm{\lambda}^\top}&\bm{0} \end{pmatrix}\stackrel{P}{\rightarrow}\mathbf{S}=\begin{pmatrix} -\mathbf{M} & \mathbf{L} \\  \mathbf{L}^\top & \bm{0}\end{pmatrix}.
	\label{eq:expansion2}
\end{equation}
From Lemma~\ref{lemma:A1}, we have $\bm{Q}_{1n}(\bm{\beta}_{0},\bm{0})=\sum_{i=1}^{n}\bm{g}_{i}^{*}(\bm{\beta}_{0})/n=O_{p}(n^{-1/2})$. Besides, from~(\ref{eq:expansion}) and~(\ref{eq:expansion2}), we know that $\delta_{n}=O_{p}(n^{-1/2})$ and
\[
\sqrt{n}(\hat{\bm{\beta}}-\bm{\beta}_{0})=( \mathbf{L}^\top \mathbf{M} \mathbf{L})^{-1} \mathbf{L}^\top \mathbf{M}^{-1}\sqrt{n}\bm{Q}_{1n}(\bm{\beta}_{0},\bm{0})+o_{p}(1).
\]
Furthermore, from Lemma~\ref{lemma:A1}, we have	$\sqrt{n}(\hat{\bm{\beta}}-\bm{\beta}_{0})\rightsquigarrow \mathcal{N}\big(\bm{0}, (\mathbf{L}^\top\mathbf{M}^{-1}\mathbf{L})^{-1}\big)$.

\subsection{Proof of Theorem~\ref{theo:chisquaretrue}}
\label{sss:proofofchisquare}
Applying  the Taylor  expansion to the formula~(\ref{eq:object}), we have
\[
-2\log R(\bm{\beta}_{0})=2\sum_{i=1}^{n}\Big[\bm{\lambda}_{0}^\top\bm{g}_{i}^{*}(\bm{\beta}_{0})-\frac{1}{2}\big\{\bm{\lambda}_{0}^\top\bm{g}_{i}^{*}(\bm{\beta}_{0})\big\}^{2} \Big]+\epsilon_{n},
\]
where $\epsilon_{n}\leq c\sum_{i=1}^{n}|\bm{\lambda}_{0}^\top\bm{g}_{i}^{*}(\bm{\beta}_{0})|^3$ in probability for some bounded positive constant $c$.
Because of the fact that $\sum_{i=1}^{n}\|\bm{g}_{i}^{*}(\bm{\beta}_{0})\|^{3}/n=o_{p}(n^{1/2})$ and  $\|\bm{\lambda}_{0}\|=O_{p}(n^{-1/2})$, we have
\[
\epsilon_{n}\leq c\sum_{i=1}^{n}|\bm{\lambda}_{0}^\top\bm{g}_{i}^{*}(\bm{\beta}_{0})|^3=O_{p}(n^{-3/2})o_{p}(n^{3/2})=o_{p}(1).
\]
So
\begin{equation}
	-2\log R(\bm{\beta}_{0})=2\sum_{i=1}^{n}\Big[\bm{\lambda}_{0}^\top\bm{g}_{i}^{*}(\bm{\beta}_{0})-\frac{1}{2}\big\{\bm{\lambda}_{0}^\top\bm{g}_{i}^{*}(\bm{\beta}_{0})\big\}^{2} \Big]+o_{p}(1).
	\label{eq:taylorexpan}
\end{equation}
Note that
\begin{equation}
	\begin{aligned}
		\bm{0}=&\frac{1}{n}\sum_{i=1}^{n}\frac{\bm{g}_{i}^{*}(\bm{\beta}_{0})}{1+\bm{\lambda}_{0}^\top\bm{g}_{i}^{*}(\bm{\beta}_{0})}\\
		=&\frac{1}{n}\sum_{i=1}^{n}\bm{g}_{i}^{*}(\bm{\beta}_{0})\bigg[1-\bm{\lambda}_{0}^\top\bm{g}_{i}^{*}(\bm{\beta}_{0})+\frac{\big\{\bm{\lambda}_{0}^\top\bm{g}_{i}^{*}(\bm{\beta}_{0})\big\}^{2}}{1+\bm{\lambda}_{0}^\top\bm{g}_{i}^{*}(\bm{\beta}_{0})}\bigg]\\
		=&\frac{1}{n}\sum_{i=1}^{n}\bm{g}_{i}^{*}(\bm{\beta}_{0})-\frac{1}{n}\sum_{i=1}^{n}\bm{g}_{i}^{*}(\bm{\beta}_{0})\bm{g}_{i}^{*}(\bm{\beta}_{0})^\top\bm{\lambda}_{0}+\frac{1}{n}\sum_{i=1}^{n}\bm{g}_{i}^{*}(\bm{\beta}_{0})\frac{\big\{\bm{\lambda}_{0}^\top\bm{g}_{i}^{*}(\bm{\beta}_{0})\big\}^{2}}{1+\bm{\lambda}_{0}^\top\bm{g}_{i}^{*}(\bm{\beta}_{0})}.
	\end{aligned}
	\label{eq:equationexpan}
\end{equation}
By Lemma~\ref{lemma:A2},
\begin{equation}
	\begin{aligned}
		\frac{1}{n}\sum_{i=1}^{n}\bigg\|\bm{g}_{i}^{*}(\bm{\beta}_{0})\frac{\big\{\bm{\lambda}_{0}^\top\bm{g}_{i}^{*}(\bm{\beta}_{0})\big\}^{2}}{1+\bm{\lambda}_{0}^\top\bm{g}_{i}^{*}(\bm{\beta}_{0})}\bigg\|&\leq \frac{1}{n}\sum_{i=1}^{n}\|\bm{g}_{i}^{*}(\bm{\beta}_{0})\|^{3}\|\bm{\lambda}_{0}\|^{2}|1+\bm{\lambda}_{0}^\top\bm{g}_{i}^{*}(\bm{\beta}_{0})|^{-1}\\
		&=o_{p}(n^{1/2})O_{p}(n^{-1})O_{p}(1)=o_{p}(n^{-1/2}).
	\end{aligned}
	\label{eq:upperbound}
\end{equation}
Therefore, from~(\ref{eq:equationexpan}),~(\ref{eq:upperbound}), and condition (R.5), we can obtain that
\[
\bm{\lambda}_{0}=\bigg\{\sum_{i=1}^{n}\bm{g}_{i}^{*}(\bm{\beta}_{0})\bm{g}_{i}^{*}(\bm{\beta}_{0})^\top\bigg\}^{-1}\sum_{i=1}^{n}\bm{g}_{i}^{*}(\bm{\beta}_{0})+o_{p}(n^{-1/2}).
\]
Substituting the value of $\bm{\lambda}_{0}$ into~(\ref{eq:taylorexpan}), we have
\begin{multline}
	-2\log R(\bm{\beta}_{0})=\sum_{i=1}^{n}\bm{\lambda}_{0}^\top\bm{g}_{i}^{*}(\bm{\beta}_{0})+o_{p}(1)\\
	=\left\{\frac{\sum_{i=1}^{n}\bm{g}_{i}^{*}(\bm{\beta}_{0})}{\sqrt{n}}\right\}^\top\left\{\frac{\sum_{i=1}^{n}\bm{g}_{i}^{*}(\bm{\beta}_{0})\bm{g}_{i}^{*}(\bm{\beta}_{0})^\top}{n}\right\}^{-1}\left\{\frac{\sum_{i=1}^{n}\bm{g}_{i}^{*}(\bm{\beta}_{0})}{\sqrt{n}}\right\}+o_{p}(1)\\
	=\left\{\mathbf{M}^{-1/2}\frac{\sum_{i=1}^{n}\bm{g}_{i}^{*}(\bm{\beta}_{0})}{\sqrt{n}}\right\}^\top\left(\mathbf{M}^{-1/2}\mathbf{M}\mathbf{M}^{-1/2} \right)^{-1}\left\{\mathbf{M}^{-1/2}\frac{\sum_{i=1}^{n}\bm{g}_{i}^{*}(\bm{\beta}_{0})}{\sqrt{n}}\right\}+o_{p}(1).
	\label{eq:trueexpan}
\end{multline}
Since $\mathbf{M}^{-1/2}\sum_{i=1}^{n}\bm{g}_{i}^{*}(\bm{\beta}_{0})/\sqrt{n}\rightsquigarrow\mathcal{N}(\bm{0},\mathbf{I}_{q\times q})$ and the rank of $\mathbf{M}$ is $q$, we have $-2\log R(\bm{\beta}_{0})\rightsquigarrow \chi^{2}(q)$.

\subsection{Proof of Theorem~\ref{theorem:fulltest}}
\label{sss:proofoffulltest}
According to~(\ref{eq:object}), the  empirical likelihood ratio test statistic is
\[
W_{1}(\bm{\beta}_{0})=2\left[\sum_{i=1}^{n}\log\left\{1+\bm{\lambda}_{0}^\top\bm{g}_{i}^{*}(\bm{\beta}_{0})\right\}-\sum_{i=1}^{n}\log\left\{1+\hat{\bm{\lambda}}^\top\bm{g}_{i}^{*}(\hat{\bm{\beta}})\right\}\right].
\]
By~(\ref{eq:object}),~(\ref{eq:trueexpan}), and the definition of $\bm{Q}_{1n}(\bm{\beta},\bm{\lambda})$, we know that
\[
\sum_{i=1}^{n}\log\left\{1+\bm{\lambda}_{0}^\top\bm{g}_{i}^{*}(\bm{\beta}_{0})\right\}=\frac{n}{2}\bm{Q}_{1n}(\bm{\beta}_{0},\bm{0})^\top\mathbf{M}^{-1}\bm{Q}_{1n}(\bm{\beta}_{0},\bm{0})+o_{p}(1).
\]
By the Taylor  expansion, we can obtain that
\[
\sum_{i=1}^{n}\log\big\{1+\hat{\bm{\lambda}}^\top\bm{g}_{i}^{*}(\hat{\bm{\beta}})\big\}=\frac{n}{2}\bm{Q}_{1n}(\bm{\beta}_{0},\bm{0})^\top\mathbf{V}\bm{Q}_{1n}(\bm{\beta}_{0},\bm{0})+o_{p}(1),
\]
where $\mathbf{V}=\mathbf{M}^{-1}\big\{\mathbf{I}_{q\times q}-\mathbf{L}(\mathbf{L}^\top\mathbf{M}^{-1}\mathbf{L})^{-1}\mathbf{L}^\top\mathbf{M}^{-1}\big\}$.
Therefore,
\[
\begin{aligned}
W_{1}(\bm{\beta}_{0})=&n\bm{Q}_{1n}(\bm{\beta}_{0},\bm{0})^\top(\mathbf{M}^{-1}-\mathbf{V})\bm{Q}_{1n}(\bm{\beta}_{0},\bm{0})+o_{p}(1)\\
=&n\bm{Q}_{1n}(\bm{\beta}_{0},\bm{0})^\top\mathbf{M}^{-1}\mathbf{L}(\mathbf{L}^\top\mathbf{M}^{-1}\mathbf{L})^{-1}\mathbf{L}^\top\mathbf{M}^{-1}\bm{Q}_{1n}(\bm{\beta}_{0},\bm{0})+o_{p}(1)\\
=&\left\{\mathbf{M}^{-1/2}\sqrt{n}\bm{Q}_{1n}(\bm{\beta}_{0},\bm{0})\right\}^\top\left\{\mathbf{M}^{-1/2}\mathbf{L}(\mathbf{L}^\top\mathbf{M}^{-1}\mathbf{L})^{-1}\mathbf{L}^\top\mathbf{M}^{-1/2}       \right\}\\
&\times  \left\{\mathbf{M}^{-1/2}\sqrt{n}\bm{Q}_{1n}(\bm{\beta}_{0},\bm{0})\right\}+o_{p}(1).
\end{aligned}
\]
By Lemma~\ref{lemma:A1}, $\mathbf{M}^{-1/2}\sqrt{n}\bm{Q}_{1n}(\bm{\beta}_{0},\bm{0})\rightsquigarrow\mathcal{N}(\bm{0},\mathbf{I}_{q\times q})$. By calculation, $\mathbf{M}^{-1/2}\mathbf{L}(\mathbf{L}^\top\mathbf{M}^{-1}\mathbf{L})^{-1}\mathbf{L}^\top\mathbf{M}^{-1/2}$ is a symmetric and idempotent matrix with the rank of  $p$.  Hence the  empirical likelihood ratio statistic $W_{1}(\bm{\beta}_{0})\rightsquigarrow \chi^{2}(p)$.

\subsection{Proof of Theorem~\ref{theorem:profiletest}}
\label{sss:proofofprofiletest}
It is obvious that 
\[
\frac{1}{n}\sum_{i=1}^{n}\frac{\partial \bm{g}_{i}^{*}(\bm{\beta}_{0})}{\partial \bm{\beta}^\top}=\left(\frac{1}{n}\sum_{i=1}^{n}\frac{\partial \bm{g}_{i}^{*}(\bm{\beta}_{1}^{0},\bm{\beta}_{2}^{0})}{\bm{\beta}_{1}^\top},\frac{1}{n}\sum_{i=1}^{n}\frac{\partial \bm{g}_{i}^{*}(\bm{\beta}_{1}^{0},\bm{\beta}_{2}^{0})}{\partial \bm{\beta}_{2}^\top}\right).
\] 
By condition (R.5),  $n^{-1}\sum_{i=1}^{n}\partial \bm{g}_{i}^{*}(\bm{\beta}_{0})/\partial \bm{\beta}^\top\stackrel{P}{\rightarrow}\mathbf{L}$. Thus, $n^{-1}\sum_{i=1}^{n}\partial \bm{g}_{i}^{*}(\bm{\beta}_{1}^{0},\bm{\beta}_{2}^{0})/\partial \bm{\beta}_{1}^\top \stackrel{P}{\rightarrow}\mathbf{L}_{1}$ and $n^{-1}\sum_{i=1}^{n}\partial \bm{g}_{i}^{*}(\bm{\beta}_{1}^{0},\bm{\beta}_{2}^{0})/\partial \bm{\beta}_{2}^\top\stackrel{P}{\rightarrow}\mathbf{L}_{2}$, where $\mathbf{L}_{1}$ and $\mathbf{L}_{2}$ are the corresponding components of $\mathbf{L}$.
By the Taylor expansion, we can obtain that
\[
\begin{aligned}
W_{2}=&-2\log \big\{R(\bm{\beta}_{1}^{0},\hat{\bm{\beta}}_{2}^{0})\big\}+2\log \big\{R(\hat{\bm{\beta}}_{1},\hat{\bm{\beta}}_{2})\big\}\\
=&\left\{\mathbf{M}^{-1/2}\sqrt{n}\bm{Q}_{1n}(\bm{\beta}_{0},\bm{0})\right\}^\top\mathbf{M}^{-1/2}\left\{\mathbf{L}(\mathbf{L}^\top\mathbf{M}^{-1}\mathbf{L})^{-1}\mathbf{L}^\top-\mathbf{L}_{2}(\mathbf{L}_{2}^\top\mathbf{M}^{-1}\mathbf{L}_{2})^{-1}\mathbf{L}_{2}^\top  \right\}   \\
&\times \mathbf{M}^{-1/2}\left\{\mathbf{M}^{-1/2}\sqrt{n}\bm{Q}_{1n}(\bm{\beta}_{0},\bm{0})\right\}+o_{p}(1).
\end{aligned}
\]
Note that
\[
\begin{aligned}
&\mathbf{L}(\mathbf{L}^\top\mathbf{M}^{-1}\mathbf{L})^{-1}\mathbf{L}^\top\\
=&\begin{pmatrix}\mathbf{L}_{1}&\mathbf{L}_{2} \end{pmatrix}\left\{\begin{pmatrix}\mathbf{L}_{1}^\top\\ \mathbf{L}_{2}^\top \end{pmatrix}\mathbf{M}^{-1}\begin{pmatrix}\mathbf{L}_{1}&\mathbf{L}_{2} \end{pmatrix} \right\}^{-1} \begin{pmatrix}\mathbf{L}_{1}^\top\\ \mathbf{L}_{2}^\top \end{pmatrix}\\
\geq &\begin{pmatrix}\mathbf{L}_{1}&\mathbf{L}_{2} \end{pmatrix}\begin{pmatrix} \bm{0}&\bm{0}\\ \bm{0}&\big(\mathbf{L}_{2}^\top\mathbf{M}^{-1} \mathbf{L}_{2}\big) ^{-1}\end{pmatrix} \begin{pmatrix}\mathbf{L}_{1}^\top\\ \mathbf{L}_{2}^\top \end{pmatrix}\\
=& \mathbf{L}_{2}\big(\mathbf{L}_{2}^\top\mathbf{M}^{-1} \mathbf{L}_{2}\big) ^{-1}\mathbf{L}_{2}^\top.
\end{aligned}
\]
Therefore, $\mathbf{L}(\mathbf{L}^\top\mathbf{M}^{-1}\mathbf{L})^{-1}\mathbf{L}^\top-\mathbf{L}_{2}(\mathbf{L}_{2}^\top\mathbf{M}^{-1}\mathbf{L}_{2})^{-1}\mathbf{L}_{2}^\top$ is non-negative definite. Since the matrices of
$\mathbf{M}^{-1/2}\mathbf{L}(\mathbf{L}^\top\mathbf{M}^{-1}\mathbf{L})^{-1}\mathbf{L}^\top\mathbf{M}^{-1/2}$ and $\mathbf{M}^{-1/2}\mathbf{L}_{2}(\mathbf{L}_{2}^\top\mathbf{M}^{-1}\mathbf{L}_{2})^{-1}\times\mathbf{L}_{2}^\top\mathbf{M}^{-1/2}$ are symmetric and idempotent,  with the ranks of  $p$ and $p-r$,  respectively,  then the  empirical likelihood ratio statistic $W_{2}\rightsquigarrow \chi^{2}(r)$.

\end{document}